\definecolor{forestgreen}{rgb}{0.13, 0.55, 0.13}
\author{Antoine Amarilli}{LTCI, Télécom Paris, Institut Polytechnique de Paris,
France \and \url{https://a3nm.net/}}{a3nm@a3nm.net}{https://orcid.org/0000-0002-7977-4441}{Partially supported by the ANR project EQUUS
ANR-19-CE48-0019 and by the Deutsche Forschungsgemeinschaft (DFG, German Research Foundation) – 431183758.}
\author{Mikaël Monet}{Univ. Lille, Inria, CNRS, Centrale Lille, UMR 9189
CRIStAL, F-59000 Lille, France \and
\url{https://mikael-monet.net/}}{mikael.monet@inria.fr}{https://orcid.org/0000-0002-6158-4607}{}
\authorrunning{A.\ Amarilli and M.\ Monet}
\keywords{Regular language, constant-delay enumeration, edit distance}
\newcommand{\NN}{\mathbb{N}}
\newcommand{\lev}{\mathrm{Lev}}
\newcommand{\stratum}{\mathrm{strat}}
\newcommand{\calC}{\mathcal{C}}
\newcommand\outputt{\textsf{Output}}
\newcommand\explore{\textsf{explore}}
\newcommand\enum{\textsf{enumerate}}
\newcommand\rout{\textsf{root}}
\renewcommand{\L}{\mathrm{L}}
\newcommand{\NL}{\mathrm{NL}}
\newcommand{\pp}{\mathrm{pp}}
\newcommand{\pushL}{\mathrm{pushL}}
\newcommand{\pushR}{\mathrm{pushR}}
\newcommand{\popL}{\mathrm{popL}}
\newcommand{\popR}{\mathrm{popR}}
\newtheorem{result}{Result}
\title{Enumerating Regular Languages with Bounded Delay} %
\begin{document}

\maketitle

\begin{abstract}

  We study the task, for a given language~$L$, of enumerating the (generally
infinite) sequence of its words, without repetitions, while bounding the
\emph{delay} between two consecutive words. To allow for delay bounds that do not depend on the current word length,
we assume a model where we produce each word by editing the preceding word with
a small edit script, rather than writing out the word from scratch. In
particular, this witnesses that the language is \emph{orderable}, i.e., we can
write its words as an infinite sequence such that the Levenshtein edit distance
between any two consecutive words is bounded by a value that depends only on the language. For instance,
$(a+b)^*$ is orderable (with a variant of the Gray code), but $a^* + b^*$ is
not.

We characterize which regular languages are enumerable in this sense, and
show that this can be decided in PTIME in an input deterministic finite
automaton (DFA) for the language. In fact, we show that, given a DFA~$A$,
we can compute in PTIME automata $A_1, \ldots, A_t$ such that $\L(A)$ is partitioned
as $\L(A_1) \sqcup \ldots \sqcup \L(A_t)$ and every $\L(A_i)$ is orderable in
this sense. Further, we show that the value of~$t$ obtained is optimal, i.e., we cannot partition~$\L(A)$
into less than~$t$ orderable languages.

In the case where $\L(A)$ is orderable (i.e.,~$t=1$), we show that the ordering can be
produced by a bounded-delay algorithm: specifically, the algorithm runs in a
suitable pointer machine model, and produces a sequence of bounded-length edit
scripts to visit the words of~$\L(A)$ without repetitions, with bounded delay -- exponential in~$|A|$ --
between each script. In fact, we show that we can achieve this while only
allowing the edit operations \emph{push} and \emph{pop} at the beginning and
end of the word, which implies that the word can in fact be maintained in
a double-ended queue.

By contrast, when fixing the distance bound~$d$ between consecutive words
and the number of classes of the partition, it is NP-hard in the input DFA~$A$
to decide if $\L(A)$ is orderable in this sense, already for finite languages. 

Last, we study the model where push-pop edits are only allowed at the end of
the word, corresponding to a case where the word is maintained on a stack.
We show that these operations are strictly weaker and that the \emph{slender
languages} are precisely those that can be partitioned into finitely many
languages that are orderable in this sense. For the slender languages, we can again
characterize the minimal number of languages in the partition, and achieve
bounded-delay enumeration.

\end{abstract}

\vfill
\pagebreak

\section{Introduction}
\label{sec:intro}
\emph{Enumeration algorithms}~\cite{strozecki2019enumeration,Wasa16}
are a way to study the complexity of problems beyond decision or function problems,
where we must produce a large number of outputs without repetitions.
In such algorithms, 
the goal is usually to minimize the worst-case \emph{delay} between any two consecutive outputs. 
The best possible bound is to make the delay
\emph{constant}, i.e., independent from the size of the input. This is
the case, for example, when enumerating the results 
of acyclic free-connex
conjunctive queries~\cite{bagan2007acyclic} or 
of MSO queries over
trees~\cite{bagan2006mso,kazana2013enumeration}.

Unfortunately, constant-delay is an unrealistic requirement when
the objects to enumerate can have unbounded size,
simply because of the time needed to write them out.
Faced by this problem, one option is to neglect this part of the running time,
e.g., following Ruskey's ``Do not count the output
principle''~\cite[p.~8]{ruskey2003combinatorial}.
In this work, we address this challenge in a different way:
we study enumeration 
where each new object is not written from scratch but produced by 
\emph{editing the previous object},
by a small sequence of edit operations called an \emph{edit script}.
This further allows us to study the enumeration of 
\emph{infinite} collections of objects,
with an algorithm that runs indefinitely
and ensures that each object is produced
after some finite number of steps, and exactly once.
The size of the edit scripts must be \emph{bounded}, i.e., it only depends on
the collection of objects to enumerate, but not on the size of the current
object.
The algorithm thus outputs
an infinite series of edit scripts
such that applying them successively
yields the infinite collection of all objects. 
In particular, the algorithm witnesses that the collection admits a so-called
\emph{ordering}: it can be 
ordered
as an infinite sequence with a bound on the \emph{edit distance} between any two consecutive objects, namely, the number of edit operations.

In this paper, we study enumeration for regular languages in this sense,
with the Levenshtein edit distance and variants thereof.
One first question is to determine
if a given regular language~$L$ admits an
ordering, i.e., 
can we order its words such that the Levenshtein
distance of any two consecutive words only depends on~$L$ and not on the word lengths?
For instance, the language $a^*$ is easily orderable in this sense. The language $a^* b^*$ is orderable, e.g., following any Hamiltonian path on the infinite $\NN \times \NN$ grid. More interestingly, the language $(a+b)^*$ is orderable, for instance by considering words by increasing length and using a \emph{Gray code}~\cite{mutze2022combinatorial}, which enumerates all $n$-bit words by changing only one bit at each step.
More complex languages such as $a (a + bc)^* + b(cb)^* dd d^*$ can also be shown to be orderable (as our results will imply). 
However, one can see that some languages are not orderable, e.g., $a^* + b^*$. We can nevertheless generalize orderability by allowing multiple ``threads'': then we can partition $a^* + b^*$ as $a^*$ and $b^*$, both of which are orderable. This leads to several questions: Can we characterize the orderable regular languages? Can every regular language be partitioned as a finite union of orderable languages? And does this lead to a (possibly multi-threaded) enumeration algorithm with bounded delay (i.e., depending only on the language but not on the current word length)?

\subparagraph*{Contributions.}
The present paper gives an affirmative answer to these questions.
Specifically, we show that, given a DFA~$A$, we can decide in PTIME if $\L(A)$ is orderable. If it is not, we can compute in PTIME DFAs $A_1, \ldots, A_t$ partitioning the language as $\L(A) = \L(A_1) \sqcup \ldots \sqcup \L(A_t)$ such that each~$\L(A_i)$ is orderable; and we show that the $t$ given in this construction is optimal, i.e., no smaller such partition exists. If the language \emph{is} orderable (i.e., if~$t=1$), we show in fact that the same holds for a much more restricted notion of distance, the \emph{push-pop distance}, which only allows edit operations at the beginning and end of the word.
The reason we are interested in this restricted edit distance
is that edit
scripts featuring push and pop can be easily applied in constant-time
to a word represented in a double-ended queue; by contrast, Levenshtein edit
operations are more difficult to implement, because they refer to integer
word positions that change whenever
characters are inserted or deleted.\footnote{There is, in fact, an
$\Omega(\log |w| / \log \log |w|)$ lower bound on the complexity of applying Levenshtein edit
operations and querying which letter occurs at a given position: 
crucially, this bound depends on the
size of the word.
See \url{https://cstheory.stackexchange.com/q/46746} for details.
This is in contrast to the application of push-pop-right edit
operations, which can be performed in constant time (independent from the word
length) when the word is stored in a double-ended queue.}

And indeed, this result on the push-pop distance then allows us to design a bounded-delay algorithm for~$\L(A)$, which produces a sequence of bounded edit scripts of push or pop operations that enumerates~$\L(A)$.
The length of the edit scripts is polynomial in~$|A|$ and the delay
of our algorithm is exponential in~$|A|$, but crucially it remains
bounded throughout the (generally infinite) execution of the
algorithm, and does not depend on the size of the words that are
achieved. Formally, we show:

\begin{result}
  \label{res:main}
  Given a DFA~$A$,
  one can compute in PTIME automata $A_1, \ldots, A_t$ for
  some~$t \leq |A|$ such that $\L(A)$ is the disjoint union of the $\L(A_i)$,
  and we can enumerate each
  $\L(A_i)$ with bounded delay for the push-pop distance with distance bound
  $48|A|^2$ and exponential delay in $|A|$.
  Further, $\L(A)$ has no partition of cardinality $t-1$ into orderable languages, even for the Levenshtein distance.
\end{result}

Thus, we show that orderability and enumerability, for the push-pop or Levenshtein edit distance, are in fact all 
logically equivalent on regular languages, and we characterize them (and find the optimal partition cardinality) in PTIME.
By contrast, 
as was pointed out in~\cite{cstheory}, testing orderability for a fixed distance $d$ is NP-hard in the input DFA, even for finite languages. 

Last, we study the \emph{push-pop-right distance}, which only
allows edits at the end of the word. The motivation for studying
this distance is that it corresponds to  enumeration algorithms in
which the word is maintained on a stack. We show that, among the
regular languages, the \emph{slender languages}~\cite{mpri} are
then precisely those that can be partitioned into finitely many
orderable languages, and that these languages are themselves enumerable. Further, 
the optimal cardinality of the partition can again be computed in PTIME:

\begin{result}
\label{res:slender}
  Given a DFA~$A$, then $\L(A)$
  is partitionable into finitely many 
  orderable languages for the push-pop-right distance if and only if
  $\L(A)$ is slender (which we can
  test in PTIME in~$A$).
  Further, 
  in this case, we can compute in PTIME the smallest partition cardinality,
  and each language in the partition
  is enumerable with bounded delay with distance bound $2|A|$ and linear delay
  in~$|A|$.
\end{result}

In terms of proof techniques, our PTIME characterization of Result~\ref{res:main} relies on a notion of \emph{interchangeability} of automaton states, defined via paths between states and via 
states having common loops.
We then show orderability by establishing \emph{stratum-connectivity}, i.e., for
any \emph{stratum} of words of the language within some length interval, there
are finite sequences obeying the distance bound that connect any two words in
that stratum.
We show stratum-connectivity by pumping and de-pumping loops close to the word
endpoints. We then deduce an ordering from this by adapting a standard
technique~\cite{uno2003two} of visiting a spanning tree and enumerating even and
odd levels in alternation (see also~\cite{sekanina1960ordering,karaganis1968cube}). The bounded-delay enumeration algorithm then proceeds by iteratively enlarging a custom data structure called a \emph{word DAG}, where the construction of the structure for a stratum is amortized by enumerating the edit scripts to achieve the words of the previous stratum.

\subparagraph*{Related work.}
As we explained, enumeration has been extensively studied for many structures~\cite{Wasa16}.
For regular languages specifically, some authors have studied the problem of
enumerating their words in \emph{radix order}
\cite{makinen1997lexicographic,ackerman2009three,ackerman2009efficient,fleischer2021recognizing}.
For instance, the authors of \cite{ackerman2009three,ackerman2009efficient} provide an algorithm
that enumerates all words of a regular language in that order, with a delay
of~$O(|w|)$ for~$w$ the next word to enumerate. Thus, this delay is not bounded, and
the requirement to enumerate in radix order makes it challenging to guarantee a
bounded distance between consecutive words (either in the Levenshtein or
push-pop distance), which is necessary for bounded-delay enumeration in our
model. Indeed, our results show that not all regular languages are orderable in
our sense, whereas their linear-delay techniques apply to all regular
languages.

We have explained that enumeration for $(a+b)^*$ relates to Gray codes, of which there exist several variants~\cite{mutze2022combinatorial}. Some variants, e.g., the so-called \emph{middle levels problem}~\cite{mutze2016proof}, aim at enumerating binary words of a restricted form; but these languages are typically finite (i.e., words of length~$n$), and their generalization is typically not regular. While Gray codes typically allow arbitrary substitutions, one work has studied a variant that only allows restricted operations on the endpoints~\cite{feldmann1996shuffle}, implying the push-pop orderability of the specific language $(a+b)^*$.

Independently, some enumeration problems on automata have been studied recently
in the database theory literature, in particular for \emph{document
spanners}~\cite{fagin2015document}, which can be defined by finite automata with
capture variables. It was recently
shown~\cite{florenzano2018constant,amarilli2019constant} that we can enumerate
in constant delay all possible assignments of the capture variables of a fixed
spanner on an input word.
In these works, the delay is constant in \emph{data complexity}, which means that it only
depends on the (fixed) automaton, and does not depend on the word; this matches what
we call \emph{bounded delay} in our work (where there is no input word and the automaton is given as input). However, our results do not follow from these works, which focus on the
enumeration of results of constant size. Some works allow second-order variables and results of non-constant size~\cite{amarilli2019enumeration} but the delay would then be linear in each output, hence unbounded.

\subparagraph*{Paper structure.}
We give preliminaries in Section~\ref{sec:prelims}. 
In Section~\ref{sec:automaton} we present our PTIME construction of a partition of a regular language into finitely many orderable languages, and prove that the cardinality of
the obtained partition is minimal for orderability.
We then show in Section~\ref{sec:upper} that each term of the union is
orderable, and then that it is enumerable in Section~\ref{sec:worddag}. We
present the NP-hardness result on testing orderability for a fixed distance and our
results on push-pop-right operations in Section~\ref{sec:extensions}. We conclude and
mention some open problems in Section~\ref{sec:conc}.
Due to space constraints, we mostly present the general structure of
the proofs and give the main ideas; detailed proofs of all statements can be found in the 
appendix.

\section{Preliminaries}
\label{sec:prelims}
We fix a finite non-empty \emph{alphabet}~$\Sigma$ of \emph{letters}.
A \emph{word} is a finite sequence $w = a_1 \cdots a_n$ of letters.
We write $|w| = n$, and write $\epsilon$ for the empty word.
We write $\Sigma^*$ the infinite set of words over~$\Sigma$.
A \emph{language} $L$ is a subset of~$\Sigma^*$.
For $k \in \NN$, we denote~$L^{< k}$ the language~$\{w\in L \mid |w| < k\}$.
In particular we have~$L^{< 0} = \emptyset$.

In this paper we study \emph{regular languages}.
Recall that such a language can be described by a
\emph{deterministic finite automaton} (DFA) $A = (Q,
\Sigma, q_0, F, \delta)$, which consists of a finite set~$Q$ of
\emph{states}, an \emph{initial state} $q_0 \in Q$, a set $F \subseteq Q$ of
\emph{final states}, and a partial \emph{transition function}
$\delta\colon Q \times \Sigma \to Q$.
We write~$|A|$ the size of representing~$A$, which is $O(|Q| \times
|\Sigma|)$.
A \emph{(directed) path} in~$A$ from a state $q \in Q$ to a state $q' \in Q$ is
a sequence of states $q = q_0, \ldots, q_n = q'$ where for each $0 \leq i < n$ we have
$q_{i+1} = \delta(q_i, a_i)$ for some~$a_i$. For a suitable choice $a_0, \ldots,
a_{n-1}$, we call the word $a_0 \cdots a_{n-1} \in \Sigma^*$ a \emph{label} of
the path.
In particular, there is an empty path with label~$\epsilon$ from every state to itself.
The \emph{language} $\L(A)$ accepted by~$A$ consists of the words~$w$
that label a path from~$q_0$ to some final state.
We assume without loss of
generality that all automata are \emph{trimmed}, i.e., every state
of~$Q$ has a path from~$q_0$ and has a path to some final state; this can be
enforced in linear time.

\subparagraph*{Edit distances.}
For an alphabet~$\Sigma$,
we denote by~$\delta_\lev\colon\Sigma^*\times\Sigma^*\to\NN$
the \emph{Levenshtein edit distance}:
given~$u,v\in \Sigma^*$, 
the value $\delta_\lev(u,v)$ is the minimum number of \emph{edits} needed to transform~$u$ into~$v$, where the edit operations are
single-letter \emph{insertions}, \emph{deletions} or \emph{substitutions} (we omit their formal definitions).

While our lower bounds hold for the Levenshtein distance,
our positive results already hold with a restricted set of $2|\Sigma|+2$ edit operations called the \emph{push-pop edit operations}: $\pushL(a)$ and $\pushR(a)$ for $a \in \Sigma$, which respectively 
insert~$a$ at the beginning and at the end of the word,
and $\popL()$ and $\popR()$, which respectively
remove the first and last character of the word (and cannot be applied if the word is empty). Thus, we define the 
\emph{push-pop edit distance}, denoted~$\delta_\pp$,
like~$\delta_\lev$ but allowing only these edit operations.
\subparagraph*{Orderability.}
Fixing a distance function~$\delta\colon\Sigma^*\times\Sigma^*\to\NN$ over~$\Sigma^*$,
for a language~$L\subseteq \Sigma^*$ and~$d\in \NN$,
a \emph{$d$-sequence in~$L$} is a (generally infinite) sequence~$\bm{s}$
of words $w_1,\ldots,w_n,\ldots$ of~$L$ \emph{without repetition},
such that for every two consecutive words~$w_i,w_{i+1}$ in~$\bm{s}$
we have $\delta(w_i,w_{i+1}) \leq d$. We say that~$\bm{s}$ \emph{starts at~$w_1$}
and, in case~$\bm{s}$ is finite and has~$n$ elements, that~$\bm{s}$ \emph{ends
at~$w_n$} (or that~$\bm{s}$ is \emph{between~$w_1$ and~$w_n$}).
A~\emph{$d$-ordering of~$L$} is a~$d$-sequence~$\bm{s}$ in~$L$ such
that every word of~$L$ occurs 
in~$\bm{s}$; equivalently, it is a permutation of~$L$ such that any two consecutive words are at distance at most~$d$. An \emph{ordering} is a $d$-ordering for some $d \in \NN$.
If these exist, we call the language~$L$, respectively, \emph{$d$-orderable} and \emph{orderable}.
We call $L$ \emph{$(t,d)$-partition-orderable}
if it can be partitioned into~$t$ languages that each are $d$-orderable:
\begin{definition}
Let~$L$ be a language and~$t,d\in \NN$. We call $L$ 
  \emph{$(t,d)$-partition-orderable} if $L$ has a partition $L = \bigsqcup_{1
  \leq i \leq t} L_i$ 
  such that each~$L_i$ is $d$-orderable.\footnote{We use~$\bigsqcup$ for disjoint unions.}
\end{definition}

Note that, if we allowed repetitions in $d$-orderings, then the language of any
DFA~$A$ would be $O(|A|)$-orderable: indeed, any word $w$ can be transformed into a
word $w'$ of length~$O(|A|)$ by iteratively removing simple loops in the run
of $w$. By contrast, we will see in Section~\ref{sec:automaton} that allowing a \emph{constant} number of repetitions of each word 
makes no difference.

\begin{example}
We consider the Levenshtein distance in this example.
The language $(aa)^*$ is~$(1,2)$-partition-orderable (i.e., $2$-orderable)
and not~$(k,1)$-partition-orderable for any~$k\in
\NN$.  The language $a^* + b^*$ is~$(2,1)$-partition-orderable and not orderable, i.e., not~$d$-orderable for 
any~$d\in \NN$.
Any finite language is $d$-orderable with $d$ the maximal length of a word in~$L$.
The non-regular language $\{a^{n^2} \mid n \in \NN\}$ is not $(t,d)$-partition-orderable for any $t,d \in \NN$.
\end{example}

\subparagraph*{Enumeration algorithms.}
We study \emph{enumeration algorithms},
which output
a (generally infinite) sequence of \emph{edit scripts} $\sigma_1,
\sigma_2, \ldots$. We only study enumeration algorithms where each
\emph{edit script} $\sigma_i$ is a finite sequence of push-pop
edit operations. 
The algorithm enumerates a language~$L$ if the sequence satisfies the following condition:
letting $w_1$ be the result of applying~$\sigma_1$ on
the empty word, $w_2$ be the result of applying $\sigma_2$ to~$w_1$, and so on,
then all $w_i$ are distinct and $L = \{w_1, w_2, \ldots\}$.
If $L$ is infinite then
the algorithm does not terminate, but the infinite sequence ensures that every $w \in L$ is produced as the result
of applying (to~$\epsilon$) some finite prefix $\sigma_1, \ldots, \sigma_n$ of the output.

We aim for \emph{bounded-delay} algorithms, i.e., each edit script must be output in
time that only depends on the language~$L$ that is enumerated, but not on the current length of the words.
Formally, the algorithm can emit 
any push-pop edit operation and a delimiter $\outputt$, it must successively emit the edit operations of~$\sigma_i$ followed by~$\outputt$,
and there is a bound $T > 0$ (the delay) depending only on~$L$ such that the first $\outputt$ is emitted at most~$T$ operations after the beginning of the algorithm, and for each~$i>1$ the $i$-th $\outputt$ is emitted at most $T$ operations after the $(i-1)$-th $\outputt$.
Note that our notion of delay also accounts for what is usually called the preprocessing phase in the literature, i.e., the phase before the first result is produced.
Crucially the words $w_i$ obtained by applying the edit scripts~$\sigma_i$ are not written,
and $T$ 
does not depend on their length.

We say that a bounded-delay algorithm \emph{$d$-enumerates} a language~$L$ if it
produces a~\mbox{$d$-ordering}
of~$L$ (for the push-pop distance).
Thus, if~$L$ is~$d$-enumerable (by an algorithm), then~$L$ is in particular~$d$-orderable, and 
we will show that for regular languages, the converse also holds.

\begin{example}
  \SetKwFor{For}{for (}{) $\lbrace$}{$\rbrace$}
  \label{expl:enum_algo}
  Consider the regular language~$L \colonequals a^*b^* + b^*a^*$.
  This language is~$2$-orderable for the push-pop distance. Indeed,
  we can order it by increasing word length, finishing for word length~$i$ by the
  word~$a^i$ as follows. We start by length
  zero with the empty word~$\epsilon$ (so the first edit script is
  empty), then, assuming we have ordered all words of~$L$ of size~$\leq i$
  while finishing with~$a^i$, we continue with words of~$L$ of size~$i+1$
  in the following manner: we push-right the letter~$b$ to obtain~$a^i b$,
  and then we “shift” with edit scripts of the form $(\pushR(b); \popL())$ until
  we obtain~$b^{i+1}$, and then we shift again with edit
  scripts of the form~$(\pushR(a); \popL())$ until we obtain~$a^{i+1}$ as
  promised. This gives us an enumeration algorithm for~$L$,
  shown in Algorithm~\ref{alg:ppr}. As such, the delay of
  Algorithm~\ref{alg:ppr} is not bounded, because of the time
  needed to increment the integer variable~$\mathit{size}$: this variable
  becomes arbitrarily large throughout the enumeration, so it is not realistic
  to assume that we can increment it in constant time. This can however be fixed
  by working in a suitable 
  \emph{pointer machine model}, as explained next.
\end{example}

\begin{algorithm}
\SetKwData{Left}{left}\SetKwData{This}{this}\SetKwData{Up}{up}
\SetKwFunction{popL}{popL}\SetKwFunction{popL}{popL}\SetKwFunction{pushR}{pushR}\SetKwFunction{Output}{Output}\SetKwFunction{Union}{Union}\SetKwFunction{FindCompress}{FindCompress}
\SetKw{int}{int}
\BlankLine
\tcp{The first edit script is empty, corresponding to the empty word.}
\Output\;
  \int $\mathit{size} = 0$\;
\While{true}{
  $\mathit{size}{++}$\;
$\pushR(b)$ ; \Output\;
  \For{\int $j = 0$;\ $j < \mathit{size} -1$;\ $j{++}$}{
    $\pushR(b)$ ; $\popL()$ ; \Output\;
  }
  \For{\int $j = 0$;\ $j < \mathit{size}$;\ $j{++}$}{
    $\pushR(a)$ ; $\popL()$ ; \Output\;
  }
}
\caption{Push-pop enumeration algorithm for the language~$a^*b^* + b^*a^*$ from Example~\ref{expl:enum_algo}.}\label{alg:ppr}
\end{algorithm}\DecMargin{1em} 

Note that our enumeration algorithms run indefinitely, and thus use unbounded memory: this is unavoidable because their output would necessarily be ultimately periodic otherwise, which is not suitable in general
(see Appendix~\ref{apx:prelim}). To avoid specifying the size of memory cells or
the complexity of arithmetic computations (e.g., incrementing the integer
$\mathit{size}$ in Algorithm~\ref{alg:ppr}), we
consider a different model called \emph{pointer machines}~\cite{tarjan1979class}
which only allows arithmetic on a bounded domain.
We use this model for our enumeration algorithms (but not, e.g., our other complexity results such as PTIME bounds).

Intuitively, a pointer machine works with \emph{records} consisting of a
constant number of
labeled \emph{fields} holding either \emph{data values}
(in our case of constant size, i.e., constantly many possible values)
or \emph{pointers} (whose representation is not specified).
The machine has \emph{memory} consisting of a finite but unbounded collection of records,
a constant number of which are designated as \emph{registers} and are always accessible.
The machine can allocate records in constant time, retrieving a pointer to the memory location of the new record.
We can access the fields of records, read or write pointers, dereference them, and test them for equality, all in constant time, but we cannot perform any other manipulation on pointers or other arithmetic operations. (We can, however, count in unary with a linked list, or perform arbitrary operations on the constant-sized data values.)
See Appendix~\ref{apx:machine} for details.

\begin{example}
  Continuing Example~\ref{expl:enum_algo}, Algorithm~\ref{alg:ppr} can easily
  be adapted to a pointer-machine algorithm that 2-enumerates~$L$, maintaining the word in a 
  double-ended queue (deque) and keeping pointers to the first and last positions in
  order to know when to stop the \textbf{for} loops.  Deques can indeed be simulated in
  this machine model, e.g., with linked lists.
\end{example}

\begin{toappendix}
  \subsection{Necessity of unbounded memory}
  \label{apx:prelim}
We substantiate the claim that in general it is necessary for the memory usage to grow
indefinitely. First note that, if the
memory usage is bounded by some constant, then it is clear that the enumeration
is ultimately periodic: once the memory has reached its maximal number of
records, as there are only finitely many possible graph structures of pointers
and finitely many possible choices of data values (from a constant alphabet), then there are only finitely many
possible states of the memory, so considering the algorithm as a function
mapping one memory state to the next and optionally producing some values, this
function must be eventually periodic, and so is the output.

We show that, if the memory usage is bounded so that the sequence of edit
scripts is ultimately periodic, then we can only achieve slender languages (c.f. Section~\ref{sec:extensions} for the formal definition of slender languages).
  Hence, this is in general not sufficient (consider, e.g., $(a+b)^*$, which is
  enumerable but not slender). 

\begin{proposition}
  \label{prp:slenderlb}
  Let $L$ be a language achieved by a sequence of edit scripts which is
  ultimately periodic. Then $L$ is slender.
\end{proposition}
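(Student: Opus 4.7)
The plan is to control how many times the enumeration can visit each length bucket, by exploiting that the sequence of edit scripts is ultimately periodic.

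Let $(\sigma_i)_{i\geq 1}$ be the ultimately periodic sequence of push-pop edit scripts: there are $n_0\geq 0$ and $p\geq 1$ such that $\sigma_{i+p}=\sigma_i$ for all $i\geq n_0+1$. Each edit script is a finite word over $\{\pushL(a),\pushR(a),\popL(),\popR()\mid a\in\Sigma\}$, so each $\sigma_i$ contributes a net length change $\Delta_i\in\ZZ$ equal to (number of pushes) minus (number of pops). Writing $w_i$ for the word produced after applying $\sigma_1,\ldots,\sigma_i$ to $\epsilon$, we have $|w_i|=|w_{i-1}|+\Delta_i$ (with $|w_0|=0$). The sequence $(\Delta_i)$ is then also ultimately periodic with pre-period $n_0$ and period $p$. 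Let $D=\sum_{r=0}^{p-1}\Delta_{n_0+1+r}$ be the net length change over one full period.

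I would then distinguish three cases according to the sign of $D$. If the sequence is finite then $L$ is finite and thus slender, so assume it is infinite. If $D<0$ then for each residue $r\in\{0,\ldots,p-1\}$ the subsequence of lengths $|w_{n_0+kp+r}|$ is an arithmetic progression with negative common difference $D$, so it eventually becomes negative, which is impossible since a push-pop script cannot pop below the empty word. If $D=0$ then each such subsequence is constant, so all $|w_i|$ lie in a finite set; but then the infinitely many distinct $w_i$ would all belong to a finite set of words (over the finite alphabet~$\Sigma$), a contradiction with the enumeration being repetition-free. Hence we may assume $D>0$.

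In the remaining case $D>0$, for each residue $r\in\{0,\ldots,p-1\}$, the arithmetic progression $|w_{n_0+kp+r}|=|w_{n_0+r}|+kD$ is strictly increasing in $k$, so it takes each value at most once. Consequently, for every length $n\in\NN$, at most $p$ indices $i>n_0$ satisfy $|w_i|=n$, and trivially at most $n_0$ indices $i\leq n_0$ do. Since the $w_i$ are all distinct and $L=\{w_i\mid i\geq 1\}$, the number of words of~$L$ of length~$n$ is at most $n_0+p$, uniformly in~$n$. Thus $L$ is slender.

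The main thing to be careful about is the case analysis on~$D$: one must rule out $D\leq 0$ using the physical constraint that pops cannot be applied to the empty word (for $D<0$) and a counting argument over the finite alphabet (for $D=0$). Once $D>0$ is in hand, the slenderness bound $n_0+p$ drops out immediately from the ultimate periodicity of~$(\Delta_i)$.
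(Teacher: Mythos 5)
Your proof is correct and takes essentially the same approach as the paper's: both case-split on the sign of the net length change over one period of the script sequence and, in the increasing case, bound the number of indices at which any given word length can occur by a constant depending only on the pre-period and the period. Your per-residue-class arithmetic-progression argument is a slightly cleaner rendering of the paper's bound of $(2M+1)\times|u|$ steps per length, and your handling of the non-positive-drift cases (impossibility of negative length, respectively a finite set of lengths contradicting repetition-free infinitude) matches the paper's conclusion that the language is then finite.
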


With regular languages, this proposition admits a converse: the regular slender
  languages can be achieved by unions of ultimately periodic sequences of edit
  scripts (see Proposition~\ref{prp:slenderub} which shows it for one term of
  the union).

  We now prove the proposition:

  \begin{proof}[Proof of Proposition~\ref{prp:slenderlb}]
  Consider the ultimately periodic suffix $u$ of edit scripts, and let $N$ be
  its length. Let $\Delta$ be the difference in length between the size of the
  current word under construction between the
  beginning and end of an occurrence of~$\Delta$, and let $M$ be a value such
  the length varies by at most $M$ while reading~$u$: specifically, if the
  length before reading $u$ is~$N$, then the length after reading~$u$ is~$N +
  \Delta$ and the intermediate lengths are between $N-M$ and $N+M$.

  If $\Delta \leq 0$, then the language is finite as the lengths of words
  visited during the rest of the enumeration when starting the periodic part at
  length~$N_0$ is upper bounded by~$N_0 + M$. Hence, it is in particular
    slender.

  If $\Delta > 0$, let us assume that we start the periodic part of the
  enumeration at a point where the current size of the word minus~$M$ is greater
  than any word size seen in the non-periodic part of the enumeration. Now, for
  any word length, we can only edit scripts producing words of this length
  for $(2M+1) \times |u|$ steps, which is constant, so we can only produce a
  constant number of words of every length. This is the definition of a slender
    language, so we conclude.
\end{proof}

  \subsection{Details about the machine model}
  \label{apx:machine}

  We first elaborate on the standard notion of a pointer machine
  from~\cite{tarjan1979class}. In this model, the memory consists of a finite
  but unbounded collection of records. Each record consists of a constant number
  of fields. The fields are either data values or pointers, and the number of
  possible data values is also constant. The machine can allocate records in
  constant time (retrieving a pointer to the new record), dereference a pointer
  in constant time to access a field of the corresponding record, test pointers
  for equality, and perform arbitrary operations on the data values (as there is
  only a constant set of possible values). However, no arithmetic operations
  involving pointers are permitted. In contrast with the RAM model, the pointer
  machine model makes it possible to work with memory that grows in an unbounded
  fashion, without worrying about how pointers are represented (they are assumed
  to take constant space) and how the memory is addressed.

  We next clarify how we represent the automaton given as input to the
  algorithm. The automaton is given as a linked list of states, with a pointer
  to the initial state, and an indication on each state of whether it is final
  or not and a pointer to an adjacency list for this state. The adjacency list
  of a state $q$ is again a linked list containing one element for each letter
  $a$ of
  the alphabet $\Sigma$, in order. For each state $q$ and letter $a$, the
  adjacency list item contains a pointer to an object representing the letter
  $a$, which the machine can use to output in constant time $\pushL(a)$ or
  $\pushR(a)$
  and a pointer to the target state of the transition (or an
  indication that the transition is undefined).

\end{toappendix}

\section{Interchangeability partition and orderability lower bound}
\label{sec:automaton}
In this section, we start the proof of our main result, Result~\ref{res:main}.
Let~$A$ be the DFA and let~$Q$ be its set of states.
The result is trivial if the language $\L(A)$ is finite, as we can always
enumerate it naively with distance $O(|A|)$ and some arbitrary delay bound, so in the rest of the proof we assume that $\L(A)$ is infinite.

We will first define a notion of \emph{interchangeability} on DFAs by introducing the notions of \emph{connectivity} and \emph{compatibility} on DFA states 
(this notion will be used in the next section to characterize orderability).
We then partition $\L(A)$ into languages $\L(A_1) \sqcup \cdots \sqcup \L(A_t)$ 
following a so-called \emph{interchangeability partition}, with each $A_i$ having this interchangeability property. Last, we show in the section our lower bound establishing that~$t$ is optimal.

\subparagraph*{Interchangeability.}
To define our notion of interchangeability, we first define the \emph{loopable} states of the DFA as those that are part of a non-empty cycle (possibly a self-loop):

\begin{definition} 
For a state $q\in Q$, we let $A_q$ be the DFA obtained from~$A$ by setting $q$
as the only initial and final state. We call~$q$ \emph{loopable}
if~$\L(A_q)\neq \{\epsilon\}$, and \emph{non-loopable} otherwise.
\end{definition}

We then define the \emph{interchangeability} relation on loopable states as the transitive closure of the union of two relations, called \emph{connectivity} and \emph{compatibility}:

\begin{definition}
We say that two loopable states $q$ and~$q'$ are \emph{connected} if there is a directed
path from~$q$ to~$q'$, or from~$q'$ to~$q$. 
We say that two loopable states~$q,q'$ are \emph{compatible} if~$\L(A_q) \cap
\L(A_{q'}) \neq \{\epsilon\}$. These two relations are symmetric and reflexive on loopable states.
We then say that two loopable states $q$ and $q'$ are
\emph{interchangeable} if they are in the transitive closure of the union of
the connectivity and compatibility relations. 
In other words, $q$ and $q'$ are interchangeable if there is a sequence $q = q_0, \ldots, q_n = q'$ of
loopable states such that for any $0 \leq i < n$, the states $q_i$ and $q_{i+1}$ are
either connected or compatible. 
Interchangeability is then an equivalence relation
over loopable states.
\end{definition}

Note that if two loopable states~$q, q'$ are in the same strongly connected
component (SCC) of~$A$ then they are connected, hence interchangeable.
Thus, we
can equivalently see the interchangeability relation at the level of SCCs
(excluding those that do not contain a loopable state, i.e., excluding the
trivial SCCs containing only one state having no self-loop).

\begin{definition}
  We call \emph{classes of interchangeable states}, or simply \emph{classes},
the equivalence classes of the interchangeability relation. 
Recall that, as $\L(A)$ is infinite, there is at least one class.
We say that the DFA $A$ is \emph{interchangeable} if the
partition  has only one class, in other words, if all loopable states of~$A$ are interchangeable.
\end{definition}

\begin{figure}%
\noindent%
\begin{minipage}{\linewidth}%
\noindent%
\begin{minipage}[t]{.12\linewidth}%
\noindent%
\begin{subfigure}[b]{\linewidth}
    \begin{tikzpicture}[initial text=,yscale=.5,xscale=.8]
      \node [state,initial,accepting] (q0) at (0, 0) {$0$};
      \path (q0) edge [->, loop above] node {$a,b$} (q0);
    \end{tikzpicture}
    \vspace{.25cm}
    \caption{DFA $A_1$}
    \label{fig:a1}
  \end{subfigure}
  \end{minipage}
    \hfill
  \begin{minipage}[t]{.21\linewidth}
  \begin{subfigure}[b]{\linewidth}
    \begin{tikzpicture}[initial text=,yscale=.5,xscale=.8]
      \node [state,accepting,initial] (q0) at (0, 0) {$0$};
      \node [state,accepting] (q1) at (1.5, 0) {$1$};
      \path (q0) edge [->, loop above] node {$a$} (q0);
      \path (q1) edge [->, loop above] node {$b$} (q1);
      \path (q0) edge [->, above] node {$b$} (q1);
    \end{tikzpicture}
    \vspace{.25cm}
    \caption{DFA $A_2$}
    \label{fig:a2}
  \end{subfigure}
  \end{minipage}
    \hfill
    \begin{minipage}[t]{.29\linewidth}
  \begin{subfigure}[b]{\linewidth}
    \begin{tikzpicture}[initial text=,xscale=.8,yscale=.5]
      \node [state,initial,accepting] (q0) at (0, 0) {$0$};
      \node [state,accepting] (q1) at (1.5, 0) {$1$};
      \node [state,accepting] (q2) at (3, 0) {$2$};
      \path (q0) edge [->, loop above] node {$c$} (q0);
      \path (q0) edge [->, above] node {$a$} (q1);
      \path (q0) edge [->, below, bend right=60] node {$b$} (q2);
      \path (q1) edge [->, loop above] node {$a$} (q1);
      \path (q2) edge [->, loop above] node {$b$} (q2);
    \end{tikzpicture}
    \null\vspace{-.4cm}
    \caption{DFA $A_3$}
    \label{fig:a3}
  \end{subfigure}
    \end{minipage}
    \hfill
    \begin{minipage}[t]{.29\linewidth}
  \begin{subfigure}[b]{\linewidth}
    \begin{tikzpicture}[initial text=,xscale=.8,yscale=.5]
      \node [state,initial,accepting] (q0) at (0, 0) {$0$};
      \node [state,accepting] (q1) at (1.5, 0) {$1$};
      \node [state,accepting] (q3) at (3, 0) {$2$};
      \path (q0) edge [->, above] node {$a$} (q1);
      \path (q0) edge [->, below, bend right=60] node {$b$} (q3);
      \path (q1) edge [->, loop above] node {$a$} (q1);
      \path (q3) edge [->, loop above] node {$b$} (q3);
    \end{tikzpicture}
    \null\vspace{-.4cm}
    \caption{DFA $A_4$}
    \label{fig:a4}
  \end{subfigure}
  \end{minipage}
  \end{minipage}
  \begin{minipage}{\linewidth}
    \begin{minipage}[t]{.55\linewidth}
  \begin{subfigure}[t]{\linewidth}
    \hspace{-.5cm}\begin{tikzpicture}[initial text=,xscale=.9,yscale=.5]
      \node [state,initial,accepting] (q0) at (0, 0) {$0$};
      \node [state,accepting] (q1) at (2.2, 2.27) {$1$};
      \node [state] (q2) at (4.4, 2.27) {$2$};
      \node [state] (q3) at (2.2, -2.27) {$3$};
      \node [state] (q4) at (4.4, -2.27) {$4$};
      \node [state] (q5) at (4.4, 0) {$5$};
      \node [state,accepting] (q6) at (6.6, 0) {$6$};
      \path (q0) edge [->, above] node {$a$} (q1);
      \path (q0) edge [->, above] node {$b$} (q3);
      \path (q3) edge [->, bend left=30, above] node {$d$} (q5);
      \path (q5) edge [->, above] node {$d$} (q6);
      \path (q1) edge [->, loop above] node {$a$} (q1);
      \path (q6) edge [->, loop above] node {$d$} (q6);
      \path (q1) edge [->, above,bend left=20] node {$b$} (q2);
      \path (q2) edge [->, below,bend left=20] node {$c$} (q1);
      \path (q3) edge [->, above,bend left=20] node {$c$} (q4);
      \path (q4) edge [->, below,bend left=20] node {$b$} (q3);
    \end{tikzpicture}
    \caption{DFA $A_5$}
    \label{fig:a5}
  \end{subfigure}
  \end{minipage}\hfill
    \begin{minipage}[t]{.38\linewidth}
  \begin{subfigure}[t]{\linewidth}
    \hspace{-.75cm}\begin{tikzpicture}[initial text=,xscale=.8,yscale=.5]
      \node [state,initial,accepting] (q0) at (0, 0) {$0$};
      \node [state,accepting] (q1) at (2.5, 2.48) {$1$};
      \node [state,accepting] (q2) at (2.5, -2.48) {$3$};
      \node [state,accepting] (q3) at (5.5, 2.48) {$2$};
      \node [state,accepting] (q4) at (5.5, -2.48) {$4$};
      \path (q0) edge [->, above] node {$a$} (q1);
      \path (q0) edge [->, above] node {$b$} (q2);
      \path (q1) edge [->, above] node {$b$} (q3);
      \path (q2) edge [->, above] node {$a$} (q4);
      \path (q3) edge [->, loop above] node {$b$} (q3);
      \path (q4) edge [->, loop above] node {$a$} (q4);
      \path (q1) edge [->, loop above] node {$a$} (q1);
      \path (q2) edge [->, loop above] node {$b$} (q2);
    \end{tikzpicture}
    \caption{DFA $A_6$}
    \label{fig:a6}
  \end{subfigure}
  \end{minipage}
  \end{minipage}
  \caption{Example DFAs from Example~\ref{exa:interchangeable}}
  \label{fig:interchangeable}
\end{figure}

\begin{example}
  \label{exa:interchangeable}
  The DFA $A_1$ shown in Figure~\ref{fig:a1} for the language $(a + b)^*$ has only one loopable state, so~$A_1$ is
  interchangeable.

  The DFA $A_2$ shown in Figure~\ref{fig:a2} for the language $a^* b^*$ has two loopable states~$0$ and~$1$ which
  are connected, hence interchangeable. Thus, $A_2$ is interchangeable.

  The DFA~$A_3$ shown in Figure~\ref{fig:a3} for the language $c^* (a^* + b^*)$
  has three loopable states: $0$,~$1$ and $2$.
  The states~$0$ and~$1$ are connected, and~$0$ and~$2$ are
  also connected, so all loopable states are interchangeable
  and~$A_3$ is interchangeable.

  The DFA~$A_4$ shown in Figure~\ref{fig:a4} for the language $a^* + b^*$ 
  has two loopable states~$1$ and~$2$ which are neither connected nor compatible.
  So $A_4$ is not interchangeable.

  The DFA~$A_5$ shown in Figure~\ref{fig:a5} for the language $a (a + bc)^* + b (cb)^* ddd^*$ mentioned in the introduction
  has five loopable states:~$1$, $2$, $3$, $4$, and~$6$.
  Then~$1$ and~$2$ are connected, $3$ and~$4$
  are connected, $3$ and~$6$ are connected, and~$1$ and~$4$ are compatible (with the word
  $bc$). Hence,
  all loopable states are interchangeable and~$A_5$ is interchangeable.

  The DFA~$A_6$ shown in Figure~\ref{fig:a6} for the language~$a^*b^* + b^*a^*$
  from Example~\ref{expl:enum_algo} has four loopable states: $1$, $2$, $3$, and
  $4$. Then~$1$ and~$2$ are connected,~$3$ and~$4$ are connected, and (for
  instance) $1$
  and~$4$ are compatible (with the word~$a$). Hence all loopable states are
  interchangeable and~$A_6$ is interchangeable.
\end{example}

\subparagraph*{Interchangeability partition.} We now 
partition~$\L(A)$ using interchangeable DFAs:
\begin{definition}
  An \emph{interchangeability partition} of~$A$ is a sequence $A_1,
  \ldots, A_t$ of DFAs such that $\L(A)$ is the disjoint union of the $\L(A_i)$
  and every $A_i$ is interchangeable. Its \emph{cardinality} is the number~$t$
  of~DFAs.
\end{definition}

Let us show how to compute an
interchangeability partition whose cardinality is the number of classes. We will later show that this cardinality is optimal.
Here is the statement:

\begin{propositionrep}
  \label{prp:computeip}
  We can compute in polynomial time in~$A$ an
  interchangeability partition $A_1, \ldots, A_t$ of~$A$,
  with $t \leq |A|$ the number of classes of interchangeable states.
\end{propositionrep}

\begin{toappendix}
  To prove this result, we first show the auxiliary claim stated in the main
  text:
\end{toappendix}

Intuitively, the partition is defined following the
classes of~$A$. Indeed, considering any word $w \in \L(A)$ and its accepting run $\rho$
in~$A$, for any loopable state $q$ and $q'$ traversed in~$\rho$, the word~$w$
witnesses that $q$ and~$q'$ are connected, hence interchangeable.
Thus, we would like to
partition the words of~$\L(A)$ based on the common class of the loopable states traversed in their accepting run.
The only subtlety is that $\L(A)$ may also contain words whose accepting run does not traverse any loopable state, called \emph{non-loopable words}. 
For instance, 
$\epsilon$ is a non-loopable word of $\L(A_5)$ for $A_5$ given in
Figure~\ref{fig:a5}.
Let us formally define the non-loopable words, and our partition of the loopable words based on the interchangeability classes:

\begin{definition}
  A word $w = a_1 \cdots a_n$ of~$\L(A)$ is \emph{loopable} if,
  considering its accepting run $q_0, \ldots, q_n$
  with $q_0$ the initial state and $q_i = \delta(q_{i-1}, a_i)$ for $1 \leq i \leq n$,
  one of the $q_i$ is loopable. Otherwise, $w$ is 
  \emph{non-loopable}. We write $\NL(A)$ the set of the non-loopable
  words of~$\L(A)$.
  
  Letting $\calC$ be a class of interchangeable states, we write $\L(A,
  \calC)$ the
  set of (loopable) words of~$\L(A)$ whose accepting run traverses a state of~$\calC$.
\end{definition}

We then have the following, with finiteness of $\NL(A)$ shown by the pigeonhole principle:

\begin{claimrep}
  \label{clm:langpart}
  The language $\L(A)$ can be partitioned as $\NL(A)$ and $\L(A, \calC_1),
  \ldots, \L(A, \calC_t)$ over the
  classes~$\calC_1, \ldots, \calC_t$ of interchangeable states, and further $\NL(A)$ is finite.
\end{claimrep}

\begin{proof}
We first show that these sets are pairwise disjoint. This is clear by
definition for~$\NL(A)$ and~$\L(A,\calC)$ for any class~$\calC$ of
interchangeable states.
Further, for $\calC\neq \calC'$, we also have that
  $\L(A,\calC) \cap \L(A,\calC') =\emptyset$, because for any word $w$ in their intersection,
  considering the accepting run of~$w$, it must go via a loopable state 
  $q$ of~$\calC$ and
  a loopable state $q'$ of~$\calC'$, so this run witnesses that $q$ and $q'$ are
  connected, hence interchangeable, which would contradict the fact that $\calC$ and
  $\calC'$ are in different classes.

  We next show both inclusions.
  By definition, a word of~$\L(A)$ is either loopable, hence in some $\L(A,\calC)$, or
  non-loopable and in $\NL(A)$. 
  For the converse inclusion, by definition again all words of~$\NL(A)$ and
  $\L(A, \calC)$ for any class~$\calC$ are in $\L(A)$.

  Last, the fact that $\NL(A)$ is finite is because any non-loopable word~$w$
has length at most $|A|-1$, because otherwise by the pigeonhole principle the
same state $q$ would appear twice in its accepting run, witnessing that $q$ is
loopable and contradicting the fact that~$w$ is non-loopable. Thus, there are
finitely many words in~$\NL(A)$.
\end{proof}

We now construct an interchangeability partition of~$A$ of the right cardinality
by defining one DFA~$A_i$ for each class of interchangeable states,
where we simply remove the loopable states of the other classes.
These DFAs are interchangeable by construction.
We modify the DFAs to ensure that the non-loopable words are only captured by~$A_1$.
This construction (explained in the appendix) is doable in
PTIME, in particular the connectivity and compatibility relations can be
computed in PTIME, testing compatibility by checking the nonemptiness of product automata.
This establishes Proposition~\ref{prp:computeip}.

\begin{toappendix}
  We are now ready to prove Proposition~\ref{prp:computeip}:

  \begin{proof}[Proof of Proposition~\ref{prp:computeip}]
    Given the automaton~$A$ with state space~$Q$, we first materialize in PTIME the connectivity and compatibility relations by naively testing
    every pair of states. Note that to test the compatibility of two states $q$
and~$q'$ we build the automaton $A_q$ and $A_{q'}$, build their intersection
automaton by the product construction, and check for emptiness, which is doable
in PTIME. We then materialize the interchangeability relation, and compute the
    classes $\calC_1, \ldots, \calC_t$. Note that $t \leq |Q|$.

    The high-level argument is that we create one copy of~$A$ for each class
    $\calC_i$, but ensure that only the first copy accepts the non-loopable
    words. For this, we need to duplicate the automaton in two copies, so as to
    keep track of whether the run has passed through a loopable state or not,
    and ensure that in all automata but the first, the states of the first copy
    (where we accept the non-loopable words) are non-final.

    Formally, we modify~$A$ to a DFA~$A'$ by doing the following: the new state space
    is $Q \times \{0, 1\}$, the final states are the $(q, b)$ for $q$ final
    in~$A$ and $b \in \{0, 1\}$, the initial state is
    $(q_0, b)$ for $q_0$ the
    initial state of~$A$ and $b$ being $1$ if~$q_0$ is loopable (which happens only when~$\NL(A)=\emptyset$) and $0$
    otherwise, and the transitions are:
    \begin{itemize}
      \item For every $q \in Q$ and $a \in \Sigma$ such that $\delta(q, a)$ is
        non-loopable, the transitions $\delta((q,b), a) = (\delta(q, a), b)$ for
        all $b \in \{0, 1\}$
      \item For every $q \in Q$ and $a \in \Sigma$ such that $\delta(q, a)$ is
        loopable, the transitions $\delta((q, b), a) = (\delta(q, a), 1)$ for
        all~$b \in \{0, 1\}$.
    \end{itemize}
    We then trim~$A'$. Note that if~$q_0$ is loopable in~$A$ then this in
    fact removes all states of the form~$(q,0)$:
indeed no such state is reachable from~$(q_0,1)$, simply because there are no
transitions from states with second component~$1$ to states with second component~$0$.

    Clearly the DFA $A'$ is such that $\L(A') = \L(A)$. However, $A'$
    intuitively keeps track of whether a loopable state has been visited, as is
    reflected in the second component of the states. Further, the loopable states
    of $A$ are in bijection with the loopable states of~$A'$ by the operation
    mapping a loopable state $q$ of~$A$ to the state $(q, 1)$ of~$A'$. Indeed, $(q, 1)$ is clearly a loopable state of~$A'$, and this
    clearly describes the loopable states of~$A'$ with second
    component~$1$. Further, 
    the states of~$A'$ with second component~$0$ are non-loopable
    because any loop involving such a state $(q, 0)$ would witness a loop on
    the corresponding states of~$A$, so that they would be loopable in~$A$, and 
    each transition of the loop in~$A'$ would then lead by definition to a
    state of the form~$(q',1)$, which is impossible because no transition can
    then lead back to a state of the form~$(q'', 0)$. Last, the
    interchangeability relationship on~$A'$ is the same as that relationship
    on~$A$ up to adding the second component with value~$1$, because all loopable states
    of~$A'$ all have that value in their second component
    as we explained, and the states and transitions with value~$1$ in the second
    component are isomorphic to~$A$ so the compatibility and connectivity
    relationships are the same.

    We now compute copies $A_1, \ldots, A_t$ of the DFA~$A'$, and modify them as
    follows: for each $1 \leq i \leq t$, we remove from~$A_i$ the states $(q,1)$
    for each $q$ in a class $\calC_j$ with $j \neq i$.
    Further, for $i > 1$, we make non-final all states with second component~$0$
    (but leave these states final in~$A_1$).

    This process is in polynomial time. Further, we claim that each $A_i$ is interchangeable.
    Indeed, its loopable states are precisely the 
    loopable states of~$A$ that are in $\calC_i$ (up to adding the second
    component with value~$1$),
    because removing states from different classes does not change the loopable or non-loopable status of states in~$\calC_i$.
    Further, for any two loopable states $q$ and $q'$ of $\calC_i$ (modified to add the
    second component with value~$1$),
    the fact that they are interchangeable in~$A$
    was witnessed by a sequence $q = q_0, \ldots, q_n =
    q'$ with $q_j$ and $q_{j+1}$ being either connected or compatible in~$A$ for
    all $0 \leq j < n$. Note that when two states $q_i$ and $q_{i+1}$ are
    connected, then there is a sequence of loopable states $q_i = q'_{i,0}, \ldots,
    q'_{i,n_i}$ where any two consecutive loopable states $q'_{i,j}$ and $q'_{i,j+1}$ for
    $0 \leq j < n_i$ are connected via a (directed) path consisting only of non-loopable
    states, which we call \emph{immediately connected}. Hence, up to modifying the previous path to a longer path 
    $q = q_0, \ldots, q_m = q'$,
    we can assume that two successive loopable states $q_j$ and $q_{j+1}$ 
    with all $0 \leq j < n$
    are either compatible or immediately connected.
    Now, the path witnesses that all $q_j$ for $0 \leq j \leq
    m$ are also in~$\calC_i$ (up to adding the second component).
    Further, for any two loopable states in~$A$ that are
    compatible or immediately connected in~$A$, this compatibility or immediate
    connectivity relation still holds when removing loopable states that are not in $\calC_i$ (up to adding the second component).
    Thus, the same
    path witnesses that the loopable states $q$ and~$q'$ are still interchangeable
    in~$A_i$. This implies that indeed $A_i$ is interchangeable, as all its
    loopable states are in the same class, which is exactly $\calC_i$ (up to changing the second component).

    It remains to show that $\L(A_1), \ldots, \L(A_t)$ is a partition
    of~$\L(A)$. For this, we show that $\L(A_1) = \L(A, \calC_1) \cup \NL(A)$
    and $\L(A_i) = \L(A, \calC_i)$ for $i > 1$; this suffices to conclude by
    Claim~\ref{clm:langpart}. 

    We first show that $\L(A, \calC_i) \subseteq \L(A_i)$ for $1 \leq i \leq t$.
    This is because the accepting run of a word of $\L(A, \calC_i)$ must go
    through a state of~$\calC_i$, and we have argued in the proof of
    Claim~\ref{clm:langpart} that it does not use any state of~$\calC_j$ for $j
    \neq i$. Thus, we can build a corresponding path in~$A_i$, where the second
    component of each state is~$0$ until we reach the first state of~$\calC_i$
    in the accepting run, and~$1$ afterwards. Note that if the initial state is
    in~$\calC_i$ then we defined the initial state of~$A'$ to have
    second component~$1$, so the path can start with second component one.
    In particular, letting $q$ the
    final state reached in~$A$ by the accepting run, the corresponding run
    in~$A_i$ reaches $(q,1)$, which is final, and the word is accepted by~$A_i$.

    We then show that $\NL(A) \subseteq \L(A_1)$. This is because the accepting
    run of a non-loopable word of~$A$ goes only through non-loopable states,
    hence does not go through states of $\calC_j$ with $j \neq 1$, hence we can
    reflect the accepting path in~$A_1$ and reach a final state, witnessing that
    the word is accepted.

    Now, we show that $\L(A_i) \subseteq \L(A, \calC_i)$ for $i > 1$. This is
    because an accepting run for a word $w$ of~$\L(A_i)$ must finish by a state of the form $(q,1)$
    where $q$ is final, as these are the only final states of~$A_i$. Thus, there
    is an accepting path for the word in~$A$, and $w \in \L(A)$. Further, the
    construction of~$A'$, hence of~$A_i$, guarantees that, if we reach a state with second
    component~$1$, then we have gone via a state $(q,1)$ with~$q$ loopable
    in~$A$. The construction of~$A_i$ further guarantees that this~$q$ must 
    be in~$\calC_i$. Considering the accepting path of~$w$ in~$A$, we see that
    this path goes through the state $q$, so that $w \in \L(A, \calC_i)$.

    Last, we show that $\L(A_1) \subseteq \L(A, \calC_1) \cup \NL(A)$.
    Considering a word $w \in \L(A_1)$ and its accepting run, there are two
    possibilities. Either the run ends at a state of the form~$(q,1)$, in which
    case the same reasoning as in the previous paragraph shows that $w \in \L(A,
    \calC_1)$. Or the run ends at a state of the form~$(q, 0)$. In this case,
    the construction of~$A'$, hence of~$A_1$, guarantees that the entire
    accepting path only goes via states of the form~$(q', 0)$, so the
    corresponding run in~$A$ only goes via non-loopable states and $w \in
    \NL(A)$.

    The claims that we established together with Claim~\ref{clm:langpart} show
    that $\L(A_1), \ldots, \L(A_t)$ is indeed a partition of $\L(A)$, concluding
    the proof.
  \end{proof}

\end{toappendix}

\subparagraph*{Lower bound.}
We have shown how to compute an interchangeability partition of a DFA $A$ with cardinality 
the number $t$ of classes. 
Let us now show that this value of~$t$ is optimal, in the sense that $\L(A)$
cannot be partitioned into less than~$t$
orderable (even non-regular) languages.
This lower bound holds even when allowing
Levenshtein edits.
Formally:

\begin{theorem}
  \label{thm:lower}
  For any partition of the language $\L(A)$ as
  $\L(A) = L_1 \sqcup \cdots \sqcup L_{t'}$  if for each $1 \leq i \leq t'$ the language
  $L_i$ is orderable for the Levenshtein distance, then we have $t' \geq t$ for $t$ the number of classes of~$A$.
\end{theorem}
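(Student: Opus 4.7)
The plan is to prove the contrapositive: no partition of $\L(A)$ into fewer than $t$ orderable parts exists. First, using Claim~\ref{clm:langpart}, I write $\L(A) = \NL(A) \sqcup \bigsqcup_{j=1}^{t} \L(A, \calC_j)$ with $\NL(A)$ finite and each $\L(A, \calC_j)$ infinite: any class $\calC_j$ contains a loopable state $q$ that is reachable from $q_0$ and leads to a final state by trimness, and pumping its non-empty loop arbitrarily many times produces infinitely many words in $\L(A, \calC_j)$. Hence, given any partition $\L(A) = L_1 \sqcup \cdots \sqcup L_{t'}$ with $t' < t$, the pigeonhole principle furnishes some~$L_i$ and two distinct classes $\calC_j \ne \calC_{j'}$ such that both $L_i \cap \L(A, \calC_j)$ and $L_i \cap \L(A, \calC_{j'})$ are infinite. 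It suffices to show that such an $L_i$ is not $d$-orderable for any $d \in \NN$.

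The heart of the argument is the following finiteness lemma: for any fixed $d$ and any two distinct classes $\calC_j, \calC_{j'}$, only finitely many pairs $(w, w') \in \L(A, \calC_j) \times \L(A, \calC_{j'})$ satisfy $\delta_\lev(w, w') \leq d$. Granting this lemma, I conclude as follows: a $d$-ordering of~$L_i$ visits infinitely many words of each of the two classes, hence contains infinitely many consecutive positions whose words lie in distinct classes, each realizing a cross-class pair at Levenshtein distance $\leq d$, contradicting the lemma.

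To prove the finiteness lemma I would exploit two structural consequences of $\calC_j \ne \calC_{j'}$: non-connectivity, meaning no directed path in~$A$ joins a $\calC_j$-state to a $\calC_{j'}$-state (otherwise they would be connected, hence interchangeable); and non-compatibility, meaning $\L(A_q) \cap \L(A_{q'}) = \{\epsilon\}$ for all $q \in \calC_j, q' \in \calC_{j'}$. Non-connectivity implies that the accepting run of any word in $\L(A, \calC_j)$ visits only non-loopable states and loopable states of~$\calC_j$. Supposing for contradiction an infinite family of cross-class pairs $(w_n, w'_n)$ at distance $\leq d$, I pass to a subsequence with $|w_n|, |w'_n| \to \infty$. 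A Levenshtein alignment with $\leq d$ edits preserves contiguous substrings between edits, so pigeonhole gives a common factor $s_n$ of $w_n$ and $w'_n$ of length at least $\lvert w_n \rvert/(d+1) - 1$, occurring at specific positions in each. By further refining the subsequence, I fix the start states $p$ (in $w_n$'s run) and $r$ (in $w'_n$'s run) from which $s_n$ is read. The pair-of-states trajectory $i \mapsto (\phi_p(i), \phi_r(i)) \in Q \times Q$, giving the states reached in~$A$ after reading $s_n[1..i]$ from $p$ and from $r$, lives in a finite set of size~$|A|^2$; for $|s_n| > |A|^2$, pigeonhole yields positions $a < b$ with $\phi_p(a) = \phi_p(b) =: q$ and $\phi_r(a) = \phi_r(b) =: q'$. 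The factor $s_n[a+1..b]$ is then simultaneously a non-empty loop at $q$ (in the $w_n$-run) and at $q'$ (in the $w'_n$-run). Since $q$ is revisited it is loopable, and being visited by $w_n$'s run it lies in~$\calC_j$; similarly $q' \in \calC_{j'}$. This yields a non-empty word in $\L(A_q) \cap \L(A_{q'})$, contradicting non-compatibility.

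The main obstacle is cleanly extracting the common factor $s_n$ from the alignment and fixing its endpoint states across the subsequence, so that the simultaneous pair-of-states pigeonhole can fire inside a single interval of $s_n$. The key structural point that makes this work is that the pair-of-states pigeonhole requires only constant length~$|A|^2 + 1$, whereas the preserved factor $s_n$ grows without bound as $n \to \infty$; the remaining care lies in choosing a single consistent alignment, identifying an exact common contiguous substring rather than merely a common subsequence, and tracking its position in both $w_n$ and $w'_n$ to apply the pigeonhole consistently.
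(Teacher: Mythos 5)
Your proposal is correct and follows essentially the same route as the paper: your finiteness lemma is equivalent to the paper's key Proposition (sufficiently long words from distinct classes are Levenshtein-far apart), and both proofs extract a long untouched block from a $\leq d$-edit alignment and then pigeonhole on automaton states to obtain a word that labels a loop on a state of each class, contradicting non-compatibility. The only difference is in how the pigeonhole is organized --- the paper first pigeonholes a single repeated state $q$ in $u$'s run to carve out $k(d+1)$ loops and then pigeonholes the endpoint states of the untouched loops in $v$'s run, whereas you pigeonhole on pairs of states along one long common factor --- but this is a cosmetic variation yielding the same contradiction.
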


This establishes the negative part of Result~\ref{res:main}.
Incidentally, this lower bound can also be shown even if the unions are not disjoint,
indeed even if we allow repetitions, provided that there is some constant bound on the number of repetitions of each word.

Theorem~\ref{thm:lower} can be shown from the following claim which establishes
that sufficiently long words from different classes are arbitrarily far away for the Levenshtein distance:

\begin{propositionrep}
  \label{prp:lowerdist}
  Letting $\calC_1, \ldots, \calC_t$ be the classes of~$A$,
  for any distance $d \in \NN$,
  there is a threshold $l \in \NN$ such that
  for any two words $u \in \L(A, \calC_i)$ and $v \in \L(A, \calC_j)$
with~$i\neq j$ and~$|u| \geq l$ and~$|v| \geq l$, we have $\delta_\lev(u, v) >
d$.
\end{propositionrep}

This proposition implies Theorem~\ref{thm:lower} because, 
if we could partition~$\L(A)$ into less than~$t$ orderable languages,
then some ordering must include
infinitely many words from two different classes~$\L(A, \calC_i)$ and~$\L(A, \calC_j)$, hence alternate infinitely often between the two. Fix the
distance~$d$, and consider a point when all 
words of~$L$ of length~$\leq \max(l,\max_{w\in \NL(A)} |w|)$
have been enumerated,
for~$l$ the threshold of the proposition: then it is no longer possible for any ordering to move from one class to another, yielding a contradiction.
As for the proof of Proposition~\ref{prp:lowerdist}, we give a sketch below (the complete proofs are in appendix):

\begin{proofsketch}
  Given a sufficiently long word $u \in \L(A, \calC_i)$,
  by the pigeonhole principle its run must contain a large number of loops over some state $q \in \calC_i$. Assume that we can edit~$u$ into~$v \in \L(A, \calC_j)$ with $d$ edit operations: this changes at most~$d$ of these loops. Now, considering the accepting run of~$v$ and using the pigeonhole principle again on the sequence of endpoints of contiguous unmodified loops, we deduce that some state $q'$ occurs twice; then $q' \in \calC_j$ by definition of $\L(A, \calC_j)$.
  The label of the resulting loop on~$q'$ is then also the label of a loop on~$q$, so $q$ and $q'$ are compatible, hence $\calC_i = \calC_j$.
\end{proofsketch}

\begin{proof}
  Fix the distance $d \in \NN$, and
  let $k$ be the number of states of the automaton~$A$.
  We prove that the claim holds for the threshold $l \colonequals k \times (k(d+1) +1)$.

  Assume by contradiction that there exist words $u \in \L(A, \calC_i)$ 
  and $v \in \L(A, \calC_j)$ with $i \neq j$
  such that $|u| \geq l$,
  $|v|\geq l$ and~$\delta(u,v)\leq d$.
  By the pigeonhole principle, there must be a state~$q$ occurring in at least $k(d+1)+1$ positions
  in the accepting run for~$u$, i.e., we can write 
$u = h u_1 \cdots u_{k(d+1)}  t$ 
for some $h$ and~$t$ where each $u_i$ starts and ends at~$q$. This implies in particular that $q$ is
  loopable so it is in the
  class $\calC_i$ as explained in the proof of Claim~\ref{clm:langpart}.

  As~$\delta(u,v)\leq d$,
  consider an arbitrary edit script that transforms $u$ into~$v$ using at
  most $d$ edit operations of the Levenshtein distance.
  Observe then that some contiguous sequence of $k$ of the~$u_j$'s must be untouched by these
  edit operations. That is, we can write $v = h' u_i \cdots u_{i+k-1} t'$ for
  some~$h'$ and~$t'$.

  Now, by the pigeonhole principle again, in the accepting run of~$v$, of the $k+1$ states at the endpoints of the $u_i, \ldots,
  u_{i+k-1}$, two must be the same, say $q'$. So we can write $v
  = h'' u_l \cdots u_r t''$ for some $h''$ and $t''$ and $i \leq l < r \leq i+k-1$, where in the
  accepting run of~$v$ we have the same state $q'$ at the beginning and end
  of~$u_l \cdots u_r$. We know in particular that $q'$ is loopable, so it is in~$\calC_j$, again using the argument in the proof of Claim~\ref{clm:langpart}.

  But now the word $u_l \cdots u_r$ is the label of a path in~$A$ from~$q'$
  to~$q'$ as seen on the accepting run of~$v$ in the previous paragraph, and it
  is also the label of a path in~$A$ from~$q$ to~$q$, as seen on the accepting
  run of~$u$ earlier. Thus, the states $q$ and~$q'$ are compatible, and they are 
  in the same class, which contradicts the fact that~$q$ and~$q'$ were in different classes. This concludes the proof.
\end{proof}

\begin{toappendix}
From Proposition~\ref{prp:lowerdist}, we can show Theorem~\ref{thm:lower}:

\begin{proof}[Proof of Theorem~\ref{thm:lower}]
  We proceed by contradiction and assume that $\L(A)$ is $(t-1,d)$-partition-orderable for some bound
  $d> 0$ on the Levenshtein distance, i.e., we have partitioned $\L(A) = \bigsqcup_{1 \leq i \leq t-1} L_i$ where each $L_i$ is $d$-orderable, i.e., has a $d$-sequence $\mathbf{s}_i$.
  Let $l$ be the threshold given by 
  Proposition~\ref{prp:lowerdist}, let $l'$ be the maximal length of a non-loopable word in~$\NL(A)$, which is finite by Claim~\ref{clm:langpart}, and let $l'' = \max(l, l')$.
  There is some index $p > 0$ such that, for all $1 \leq i \leq t-1$,
  all words of length $\leq l''$ of~$\L(A)$ are only present in the initial prefix of length $p$ of the $\mathbf{s}_i$, in particular this is the case of all non-loopable words.
  Now, as there are $t-1$ $d$-sequences and $t$ classes, and as the languages
  of each of the $t$ classes
  are infinite, there must be a $d$-sequence which after point~$p$  contains 
  infinitely many words from two different subsets $\L(A, \calC_i)$ and $L(A, \calC_j)$ 
  with $i \neq j$. In particular, there must be an index $p' > p$ such that the $p'$-th word of~$\mathbf{s}_i$ is a word of  $\L(A, \calC_i)$  of length $> l''$, hence $\geq l$, and the $(p'+1)$-th word of~$\mathbf{s}$ is a loopable word of~$\L(A)$ also of length $\geq l''$ which is not in $\L(A, \calC_i)$, hence it is in some $\L(A, \calC_k)$ with $k \neq i$.
  However, we know by 
  Proposition~\ref{prp:lowerdist} that the words in $\L(A, \calC_i)$ of length $\geq l$ are at Levenshtein distance $>d$ from the words of $\L(A, \calC_k)$ of length $\geq l$.
  This contradicts the fact that $\mathbf{s}_i$ is a $d$-sequence, and concludes the proof.
  
  We last substantiate the remark made in the main text that the proof still applies if we allow each word to be repeated some constant number $C$ of times. The proof above works as-is, except that we now define $p > 0$ to be the smallest index after which there are no longer any occurrences of any word of length $\leq l''$ in the remainder of all sequences $\mathbf{s}_i$: as there are only finitely many words occurring finitely many times, this is well-defined. The rest of the proof is unchanged.
\end{proof}

\end{toappendix}

\section{Orderability upper bound}
\label{sec:upper}
\begin{toappendix}
  \label{apx:upper}
\end{toappendix}

We have shown in the previous section that
we could find an interchangeability partition of any regular language $\L(A)$
into languages $\L(A_1), \ldots, \L(A_t)$ of interchangeable DFAs, for $t$ the number of classes. We know by our lower bound (Theorem~\ref{thm:lower}) that we cannot hope to order~$\L(A)$ with less than~$t$ sequences.
Thus, in this section, we focus on each interchangeable~$A_i$ separately, and
show how to order $\L(A_i)$ as one sequence.
Hence, we fix for this section a DFA $A$ that is interchangeable, write $k$ its
number of states, and show that~$\L(A)$ is orderable. We will in fact show that
this is the case for the push-pop distance:

\begin{theorem}
  \label{thm:orderable}
  For any interchangeable DFA~$A$, 
  the language $\L(A)$ is $48k^2$-orderable for the push-pop distance.
\end{theorem}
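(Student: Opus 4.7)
The plan is to prove the theorem in two major steps: establishing a key \emph{stratum-connectivity} lemma, and then deriving the ordering via a spanning-tree enumeration technique in the style of~\cite{uno2003two}.

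First, I would partition $\L(A)$ into \emph{strata} $\stratum_N = \{w \in \L(A) : Nck \leq |w| < (N+1)ck\}$ by length, for a constant $c$ chosen large enough that every word of length at least $ck$ admits loopable states close to both endpoints of its accepting run (this is a pigeonhole-style threshold depending on $k$). The key technical claim is then that, for every $N$, any two words in $\stratum_N$ are connected by a finite sequence of words in $\L(A)$ in which consecutive words have push-pop distance at most $48k^2$.

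To prove stratum-connectivity, given $u, v \in \stratum_N$, I would use the pigeonhole principle on each accepting run to locate loopable \emph{anchor states} within the first $k$ and the last $k$ letters of $u$ and of $v$. Since $A$ is interchangeable, the anchor states of $u$ and those of $v$ are linked by a bounded-length chain alternating \emph{connectivity} and \emph{compatibility} steps. Along this chain I would morph $u$ into $v$ one step at a time: a connectivity step replaces the short prefix (resp.\ suffix) of the current word to re-route the accepting run through the next anchor state, while a compatibility step pumps a common loop shared by two anchor states. Each such modification pops $O(k)$ letters from one endpoint and pushes $O(k)$ letters back, costing $O(k^2)$ push-pop operations; any length change at one end is compensated by de-pumping a loop at the other end, so that all intermediate words stay in $\L(A)$ and in the stratum.

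Once stratum-connectivity is established, I would obtain an ordering of all of $\L(A)$ by the standard Uno argument: inside each stratum, build a spanning tree of the connectivity graph and traverse it by interleaving its even and odd levels (which keeps the push-pop distance between consecutive visited vertices bounded by a small multiple of the single-step bound); connect successive strata by a single push operation that pumps a chosen loop by one extra iteration, growing the word by one letter. A careful rebalancing of the constants in both parts yields the announced $48k^2$ bound.

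The main obstacle will be the stratum-connectivity lemma itself, precisely because push-pop edits can only touch the endpoints of the word, so any single step in the morphing must preserve a long common middle block. The delicate point is to orchestrate the morphing so that at every step the current word decomposes as $\alpha \cdot m \cdot \beta$ with $m$ left untouched, and only the bounded-length prefix $\alpha$ and suffix $\beta$ are rewritten to reflect the next stage of the interchangeability chain; ensuring simultaneously that every intermediate word lies in $\L(A)$ and has length in the correct stratum forces one to combine connectivity and compatibility moves with carefully coordinated pumping and de-pumping at \emph{both} endpoints, which is where most of the technical work will live.
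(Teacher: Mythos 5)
Your proposal follows essentially the same route as the paper: decompose $\L(A)$ into length strata, prove each stratum is $d$-connected by pumping and de-pumping loops near the word endpoints along the chain of connectivity/compatibility steps witnessing interchangeability, order each stratum by the spanning-tree even/odd traversal (which costs a factor of~$3$), and link consecutive strata through a pumped family $r z^i t$. The only quantitative caution is that the stratum width must be $\Theta(k^2)$ rather than the $\Theta(k)$ you suggest, because compatibility is witnessed by loops of length up to $k^2$ (coming from the product automaton) and one must be able to insert or delete such a loop while staying inside the stratum --- the paper takes width $8k^2$ and single-step bound $16k^2$, yielding $3 \cdot 16k^2 = 48k^2$.
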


We show this result in the rest of this section, and strengthen it in the
next section to a bounded-delay algorithm. 
Before starting, we give an overview of the
structure of the proof.
The proof works by first introducing \emph{$d$-connectivity} of a language (not to be confused with the connectivity relation on loopable automaton states). This weaker notion is necessary for $d$-orderability, but for finite languages we will show a kind of converse: $d$-connectivity implies~$3d$-orderability.
We will then show that~$\L(A)$ is \emph{stratum-connected}, i.e., the finite
\emph{strata} of words of~$\L(A)$ in some length interval are each $d$-connected
for some common~$d$. Last, we will show show that this implies orderability, using the result on finite languages.

\subparagraph*{Connectivity implies orderability on finite languages.}
We now define \emph{$d$-connectivity}:

\begin{definition}
A language~$L$ is \emph{$d$-connected} if for every pair of words~$u,v \in L$,
there exists a $d$-sequence in~$L$ between~$u$ and~$v$.
\end{definition}

Clearly $d$-connectivity is a necessary condition for
$d$-orderability: indeed if~$w_1,w_2,\ldots$ is
a~$d$-ordering of~$L$, and~$u=w_i$, $v=w_j$ are two words of~$L$ with $i\leq j$
(without loss of generality), then~$w_i,w_{i+1},\ldots,w_j$ is indeed
a~$d$-sequence in~$L$ between~$u$ and~$v$.
What is more, for finite languages, the converse holds, up to multiplying the distance by a constant factor:

\begin{toappendix}
  \subsection{Proof of Lemma~\ref{lem:connected-enum}}
\end{toappendix}

\begin{lemmarep}
\label{lem:connected-enum}
Let~$L$ be a finite language that is $d$-connected
  and~$s \neq e$ be words of~$L$. Then there exists a $3d$-ordering
of~$L$ starting at~$s$ and ending at~$e$.
\end{lemmarep}

\begin{proofsketch}
  We use the fact, independently proved by Sekanina and by
  Karaganis~\cite{sekanina1960ordering,karaganis1968cube}, that the
  cube of every connected graph~$G$ has a Hamiltonian path between any
  pair of vertices (see also~\cite{mutze2022combinatorial}). One algorithmic way
  to see this is by 
  traversing a spanning tree of~$G$ and handling odd-depth and
  even-depth nodes in prefix and postfix fashion (see,
  e.g.,~\cite{uno2003two}). Applying this to the graph~$G$ whose vertices are the words of~$L$ and where two words~$w,w'$ are connected by an edge when~$\delta(w,w')\leq d$ yields the result.
\end{proofsketch}

The constant $3$ in this lemma is optimal, as follows from
\cite{radoszewski2011hamiltonian}; see Appendix~\ref{apx:upper} for more details.
Note that the result does not hold for infinite languages: $a^* + b^*$ is
1-connected (via $\epsilon$) but not $d$-orderable for any~$d$.

\begin{toappendix}
  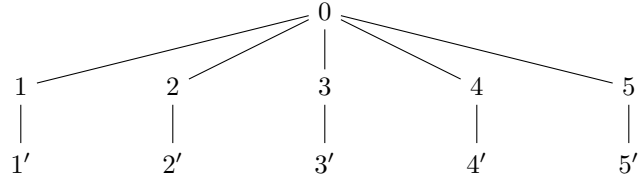
\begin{figure}
    \centering
    \begin{tikzpicture}[xscale=2]
      \node (o) at (0, 0) {$0$};
      \node (n1) at (-2, -1) {$1$};
      \node (n2) at (-1, -1) {$2$};
      \node (n3) at (0, -1) {$3$};
      \node (n4) at (1, -1) {$4$};
      \node (n5) at (2, -1) {$5$};
      \node (m1) at (-2, -2) {$1'$};
      \node (m2) at (-1, -2) {$2'$};
      \node (m3) at (0, -2) {$3'$};
      \node (m4) at (1, -2) {$4'$};
      \node (m5) at (2, -2) {$5'$};
      \draw (o) -- (n1);
      \draw (o) -- (n2);
      \draw (o) -- (n3);
      \draw (o) -- (n4);
      \draw (o) -- (n5);
      \draw (m1) -- (n1);
      \draw (m2) -- (n2);
      \draw (m3) -- (n3);
      \draw (m4) -- (n4);
      \draw (m5) -- (n5);
    \end{tikzpicture}
    \caption{Tree used in the proof of Claim~\ref{clm:treeopt}}
    \label{fig:treeopt}
  \end{figure}

  The result follows from the following claim on trees:
  \begin{lemma}[\cite{sekanina1960ordering,karaganis1968cube}]
    \label{lem:tree}
    Let $T$ be a tree, i.e., an acyclic connected undirected graph.
    Let $s \neq e$ be arbitrary nodes of~$T$.
    We can compute in linear time in~$T$ a sequence $s = n_1, \ldots,
    n_m = e$ enumerating all nodes of~$T$ exactly once, starting and ending
    at~$s$ and~$e$ respectively, such that the distance between any two
    consecutive nodes is at most~$3$.
  \end{lemma}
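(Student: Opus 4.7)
The plan is to use the unique path from $s$ to $e$ as a backbone for the enumeration. Let $P = (s = p_0, p_1, \ldots, p_k = e)$ be this path in $T$, and for each $i$ let $T_i$ denote the connected component of $T \setminus E(P)$ containing $p_i$, seen as a subtree rooted at $p_i$. These subtrees partition the vertices of~$T$. I plan to build the sought enumeration as the concatenation of Hamiltonian enumerations of $T_0, T_1, \ldots, T_k$, where each $T_i$'s enumeration starts and ends at $p_i$ or at a child of $p_i$ in $T_i$, with distinct endpoints whenever $|T_i| \geq 2$. Because $T_i$ and $T_{i+1}$ are joined in~$T$ only through the edge $p_i p_{i+1}$, consecutive-subtree transitions then have distance at most $1 + 1 + 1 = 3$.

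The main technical ingredient will be the following key sub-lemma, proven by strong induction on~$|T'|$: for any rooted tree~$T'$ with root~$r$, and any two chosen endpoints~$a,b$ each being either~$r$ or a child of~$r$ in~$T'$ (with $a \neq b$ when $|T'| \geq 2$), there is a Hamiltonian enumeration of~$T'$ starting at~$a$, ending at~$b$, with consecutive tree-distances at most~$3$. In the inductive step, letting $c_1, \ldots, c_m$ be the children of~$r$, I would reorder them so that the child whose subtree contains~$a$ (when $a \neq r$) is processed first and the one containing~$b$ (when $b \neq r$) is processed last. The root~$r$ is emitted first when $a = r$, last when $b = r$, and otherwise between the first subtree and the remainder. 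Each child subtree $T'_i$ is then recursively enumerated with endpoints chosen as either~$c_i$ or a child of~$c_i$; to keep inter-subtree transitions bounded by~$3$, every ``middle'' subtree is required to end (or begin) at its own root~$c_i$, so that the hop from~$c_i$ through~$r$ to~$c_{i+1}$, and possibly on to one of its children, has length at most~$3$.

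Given this sub-lemma, the lemma is immediate: apply it to~$T_0$ starting at~$p_0 = s$ and ending near~$p_0$; to each middle~$T_i$ with start and end near~$p_i$ (both equal to~$p_i$ when $|T_i| = 1$); and to~$T_k$ ending at~$p_k = e$ and starting near~$p_k$. Transitions between consecutive~$T_i$ are at most~$3$ by the boundary argument above. Linear time follows because $P$ and the subtrees~$T_i$ can be extracted by a single DFS of~$T$, and the recursive enumeration is itself a DFS performing constant work per emitted node.

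The main obstacle I anticipate is the case analysis in the inductive step of the sub-lemma: the correct position of~$r$ in the emission order depends on which of $a, b$ equals~$r$, and one must verify in each case that the required endpoint choice in every recursive call (typically ``end at the root~$c_i$'' for middle subtrees) remains compatible with the inductive hypothesis, including when some subtree~$T'_i$ has size one. Choreographing these endpoint choices consistently at every level is the delicate part, but the distance-$3$ guarantee drops out cleanly once the scheme is set up.
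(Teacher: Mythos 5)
Your approach is correct, and it is a genuinely different formalization of the same underlying spanning-tree technique. The paper roots $T$ at $s$, marks the nodes of the $s$--$e$ path as \emph{special}, and specifies the traversal globally by depth parity (prefix order at even depths, postfix at odd depths, a hybrid ``infix'' order at special odd-depth nodes), with a separate second pass for the subtree below~$e$; you instead cut $T$ along the $s$--$e$ path and push all the work into an endpoint-constrained induction on rooted subtrees. Your sub-lemma is true and the induction does go through; the invariant that makes it work (and that you correctly anticipate) is that two sibling subtrees of a root $r$ communicate only through $r$, so their roots $c_i, c_{i+1}$ are at distance~$2$ and at most one of the two endpoints of an inter-subtree hop may be a proper child of its root. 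Concretely, in each of the three cases ($a=r$; $b=r$; both $a,b$ children of~$r$) there is a consistent assignment in which every recursive call uses $c_i$ as one endpoint and a child of $c_i$ as the other (so the distinctness hypothesis is always met when the subtree has size~$\geq 2$), giving hops of length exactly $1+2+0=3$. At the top level the constraint is looser, since $p_i$ and $p_{i+1}$ are adjacent: any endpoints in $\{p_i\} \cup \{\text{children of } p_i \text{ in } T_i\}$ yield transitions of length at most $1+1+1=3$, which is why the path decomposition needs no further bookkeeping. Your version is arguably more modular than the paper's --- it avoids the separate re-rooted pass for $e$'s subtree --- at the cost of a three-way case analysis in the inductive step; what remains for a complete proof is only to write out that case analysis, which works exactly as sketched.
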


  As we explained in the main text, Lemma~\ref{lem:tree} was already shown
  independently by Sekanina and by
  Karaganis~\cite{sekanina1960ordering,karaganis1968cube}.
  However, we give our own algorithmic proof of this result, to
  emphasize the fact that the Hamiltonian path in question can be computed in
  linear time: we will need this fact in Section~\ref{sec:worddag}. Our algorithmic
  proof uses the method of exploring trees and enumerating nodes differently
  depending on the parity of their depth (see, e.g.,~\cite{uno2003two}). The
  same algorithm can be easily derived from the proof
  of~\cite{karaganis1968cube}, by turning the inductive argument into a recursive
  algorithm.

  \subparagraph*{Optimality of the constant~$\bm{3}$.}
  Before giving our proof of Lemma~\ref{lem:tree}, we
  show that the value~$3$ in this lemma is optimal, as was claimed in
  the main text:
  \begin{claim}
    \label{clm:treeopt}
    There is an acyclic connected undirected graph~$T$ such that any sequence
enumerating all nodes of~$T$ exactly once must contain a pair of vertices at
distance
    $\geq 3$.
  \end{claim}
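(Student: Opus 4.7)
The plan is to exhibit as $T$ the tree drawn in Figure~\ref{fig:treeopt}: a root $0$ with five children $1,\ldots,5$, each having a single additional leaf child $i'$. I would then argue by contradiction, assuming that there is an enumeration of its $11$ vertices with no repetition in which all consecutive pairs lie at tree-distance at most $2$, and derive a pigeonhole contradiction.

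The first step would be to compute the radius-$2$ ball around each leaf $i'$. Every vertex outside $\{i',\, i,\, 0\}$ lies at distance at least $3$ from $i'$ in this tree (the closest such vertices being the siblings $j$ of $i$ at distance $3$), so in any admissible enumeration the sequence-neighbors of $i'$—two of them if $i'$ is internal in the sequence, one if it is a sequence endpoint—must all be taken from $\{i, 0\}$. For a non-endpoint leaf $i'$, the two sequence-neighbors are distinct, so they must be exactly $i$ and $0$; hence each non-endpoint leaf consumes one of the two sequence-neighbor slots of the central vertex $0$.

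The second step is the counting. With five leaves but at most two endpoint slots in the enumeration, at least three leaves are non-endpoints, and each one demands a distinct sequence-neighbor slot at $0$. But $0$ occurs exactly once in the enumeration and thus has at most two sequence-neighbors in total, a contradiction.

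The only subtlety I anticipate is the endpoint bookkeeping: I must ensure that, even in the most favorable configuration where both endpoints of the enumeration happen to be leaves, three non-endpoint leaves still remain to force the pigeonhole contradiction—this is precisely why the gadget uses five-way branching at the root rather than fewer.
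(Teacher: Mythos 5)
Your proposal is correct and follows essentially the same route as the paper's proof: the same five-armed spider tree, the observation that the radius-$2$ ball around each leaf $i'$ is $\{i',i,0\}$, and the pigeonhole count that at least three non-endpoint leaves each force $0$ as a sequence-neighbor while $0$ has only two such slots. Your phrasing via balls and neighbor slots is if anything a bit cleaner than the paper's index-chasing, but the argument is the same.
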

  Claim~\ref{clm:treeopt} is implied by the results of~\cite{radoszewski2011hamiltonian}
  giving a (non-trivial) characterization of the trees for which it is possible
  to achieve a distance of~$2$ instead of~$3$.
  A slightly weaker result (applying to Hamiltonian cycles, corresponding in our
  setting to sequences that additionally have to start and end at vertices which are at
  distance $\leq 3$) is also given as Exercise~4.1.19~b in
  \cite{bondy2008graph}. We again give a self-contained proof for convenience:
  
  \begin{proof}[Proof of Claim~\ref{clm:treeopt}]
    Consider the tree pictured on Figure~\ref{fig:treeopt}, and some sequence
    $n_1, \ldots, n_{11}$ enumerating its nodes. Assume by contradiction that
    the sequence achieves a distance $\leq 2$, i.e., all pairs of consecutive
    vertices in the sequence is at distance $\leq 2$.
    Consider the vertices $1'$, $2'$, $3'$, $4'$, and $5'$ of the tree. The
    first and last position of the sequence may be covering at most two of these
    vertices, so there must be at least three of these values that occur at
    positions of the sequence which are not the first or last position. Without
    loss of generality, we assume up to symmetry that these are $1'$, $2'$, and
    $3'$, occurring in this order, and we write $l, m, r$ the positions where
    they occur in the sequence. In other words, we have
    $n_l = 1'$, $n_m = 2'$, and $n_r = 3'$, for some positions
    $2 \leq l < m < r \leq 10$.

    Now, given that $n_l = 1'$ and $n_m = 2'$ and the distance between $1'$ and
    $2'$ in the graph is four, we know that these values cannot be consecutive,
    i.e., we must have $m - l \geq 2$. For the same reason, we must have $r - m
    \geq 2$. From this, we deduce that the indices $l-1, l, l+1, r-1, r,
    r+1$ are pairwise distinct and are all indices in $\{1, \ldots, 11\}$.

    Now, as $n_l = 1'$, we know that the only possibility to respect the
    distance bound is that one of
    $n_{l-1}, n_{l+1}$ is~$0$ and the other is~$1$. Likewise, one of $n_{r-1}$
    and $n_{r+1}$ is~$0$ and the other is~$3$.
    As $0$ occurs only once and the indexes are pairwise
    distinct, this is a contradiction.
  \end{proof}

  Now, to justify that the constant~$3$ is optimal, not only in
  Claim~\ref{clm:treeopt}, but also in Lemma~\ref{lem:connected-enum}, it suffices to
  notice that the tree used in the proof (Figure~\ref{fig:treeopt}) can be
  realized in the proof, with the push-pop distance or push-pop-right distance. Indeed, consider for instance the language $L =
  \{\epsilon, a, aa, b, bb, c, cc, d, dd, e, ee\}$. This language is
  1-connected, and the graph connecting the nodes at edit distance~$1$ in the
  push-pop or (equivalently) the push-pop-right distance is exactly the tree of
  Figure~\ref{fig:treeopt}, so
  while it has a 3-ordering we know that it cannot have a 2-ordering. For
  the Levenshtein distance, we 
  replace $\epsilon$ by $xxxx$, replace $a$ by $axxxx$ and $a^2$ by $a^2 xxxx$,
  replace $b$ by $xbxxx$ and $b^2$ by $xb^2xxx$, and so on. The
  language is 1-connected, and the graph connecting the nodes at distance 1 in
  the Levenshtein distance is again exactly the tree of
  Figure~\ref{fig:treeopt}. This is because no edit can insert, remove, or
  substitute an~$x$ (it would not give a word of the language because the number
  of $x$-es would not be correct); clearly the tree root is connected to its
  children (by an insertion) and each
  child to its respective child (by an insertion again); and no other
  connections are possible (in particular substitutions applied to words of the
  language never give a word of the language).

  \subparagraph*{Proving the result.}
  We now prove Lemma~\ref{lem:tree}:
  \begin{proof}[Proof of Lemma~\ref{lem:tree}]
    We root~$T$ at the starting node~$s$, yielding a rooted tree where $s$ is
    the root and~$e$ is an arbitrary node different from~$s$. We call
    \emph{special} the nodes in the unique path from~$s$ to~$e$, including~$s$
    and~$e$. We order the children of each special node except~$e$ 
    to ensure that their unique special child is the last child.

    We first explain how to enumerate all nodes of~$T$ except~$e$ and its
    descendants, while obeying the distance requirement. 
    To do this, we will handle differently the nodes depending on (the parity
    of) their depth in~$T$. We will ensure that the enumeration starts at~$s$,
    and ends:
    \begin{itemize}
      \item If $e$ is at even depth, then we finish at the parent
        of~$e$.
      \item If $e$ is at odd depth and was the only child of its parent,
        then we finish at the parent of~$e$.
      \item If $e$ is at odd depth and its parent has other children,
        then we finish at a sibling of~$e$.
    \end{itemize}
    To do this, we start processing at the root, where processing a node~$n$
    means the following:
    \begin{itemize}
      \item For nodes $n$ at even depth (in particular~$n = s$), no matter
        whether they are special or not, we process
        them in prefix order: enumerate~$n$, then recursively processing their children in
        order.
      \item For non-special nodes~$n$ at odd depth, we process them in postfix
        order: recursively processing their children in order, then
        enumerate~$n$.
      \item For special nodes~$n$ at odd depth, we process them in a kind of infix order:
        recursively process their children in order except the last (if any), then
        enumerate~$n$, then recursively process their last child (which is again
        special, or is~$e$).
    \end{itemize}
    We finish this process when we recursively process~$e$. Then it is clear
    that the enumeration starts at~$s$, and visits all nodes of~$T$ except~$e$
    and its children. We check that it finishes at the right node:
    \begin{itemize}
      \item If $e$ is at even depth, then its parent $n$ was a node at odd depth which was
        special. We finished the enumeration by recursively processing~$n$,
        processing all other children of~$n$, then enumerating~$n$, so indeed we
        finish at the parent of~$n$.
      \item If~$e$ is at odd depth, then its parent $n$ was a node at even
        depth. We finished the enumeration by recursively processing~$n$. Then
        there are two cases:
        \begin{itemize}
          \item If $e$ was the only child of~$n$, then the processing of~$n$
            enumerated~$n$ and then recursed on~$e$, so the claim is correct.
          \item If $e$ was one of the children of~$n$, then it was the last
            child, so the processing of~$n$ enumerated~$n$, then recursively
            processed all its other children, before recursively processing~$e$.
            Now, the other children were non-special nodes at odd depth, so when
            we processed the last children, the enumeration finished by
            enumerating that other child, which is a sibling of~$e$ as claimed.
        \end{itemize}
    \end{itemize}

    We then check that this enumeration respects the distance condition, by
    checking that any two consecutive nodes in this enumeration are at distance
    less than~$3$.
    \begin{itemize}
      \item When we enumerate a node $n$ at even depth, then we have just
        started to process it; let us study what comes after~$n$.
        Either $n$ has children or it does not:
        \begin{itemize}
          \item If $n$ has no children, then~$n$ is not the root~$s$. Then the parent $n'$ of~$n$ is a node at odd depth.
            Either it is special or non-special. If $n'$ is non-special, then
            either $n$ was the last child or not. If $n$ was the last child, we
            next enumerate $n'$, so the distance is~$1$. If $n$ was not the last
            child, we next enumerate the next child of~$n'$, which is at even
            depth and is not~$e$, and the distance is~$2$. If $n'$ is special,
            then $n$ was not the last child because the last child of a special
            node different from~$e$ has children (as the special nodes are
the nodes in the path from the root to~$e$). Either $n$ was the penultimate
            child or not. If it is not the penultimate child, then we next
            enumerate the next child of~$n'$ and the distance is~$2$. If it is
            the penultimate child, then we next enumerate~$n'$ and the distance
            is~$1$.
          \item If $n$ has children, then if its only child is~$e$ we have
            finished. Otherwise, we will produce another node. Specifically, we next recurse on
            the first child $n'$, which is not~$e$, and either we first
            enumerate $n'$ (if it has no children, or is a special node whose
            only child is~$e$) and the distance is~$1$, or we enumerate the first child
            of $n'$ (which is at even depth) and the distance is~$2$.
        \end{itemize}
      \item When we enumerate a node~$n$ at odd depth which is non-special, then
        we have just finished processing it. Then its parent $n'$ was at even
        depth. Either $n$ was the last child of~$n'$ or not:
        \begin{itemize}
          \item If $n$ was the last child of~$n'$, we go back to the parent $n''$
        of~$n'$, which is at odd depth. If $n'$ was the last child of~$n''$,
        we enumerate~$n''$, at distance~$2$. If $n'$ was not the last child
            of~$n''$ but was its last non-special child, then~$n''$ itself is
            special and we next enumerate~$n''$, at
            distance~$2$ (before recursing in the special child). Last, if the
            next sibling of~$n'$ is non-special, then from~$n''$
        we recurse into it (it is at even depth) and enumerate it, for a
        distance of~$3$.
      \item If $n$ was not the last child of~$n'$, then we next recurse in the
        next child~$n''$, which is at odd depth. If $n''$ has no other children,
            or if its only child is special, we next produce~$n''$ for a
            distance of~$2$. Otherwise we recurse in the first child of~$n''$,
            which is at even depth, and produce it for a distance of~$3$.
        \end{itemize}
      \item When we enumerate a node~$n$ at odd depth which is special, then we
        are about to recurse in its special child. Either it is~$e$ and we have
        finished, or it is a node at even depth and we enumerate it for a
        distance of~$1$.
    \end{itemize}
    Hence, in all cases the distance within that first part of the enumeration
    is at most~$3$.

    \bigskip
    We now explain how to enumerate the remaining nodes, i.e., the subtree of
    $T$ rooted at~$e$, including~$e$. For this, we consider the depth of the
    nodes from the parent of~$e$, i.e., $e$ is now considered at odd depth, its
    children (if they exist) are at even depth, and so on. We re-do the
    previously described enumeration scheme on that subtree, except that there
    are no special nodes. It is clear that this defines an enumeration sequence
    which visits the entire subtree, ends at $e$ (as it is at odd depth), and
    respects the distance bound from the previous proof. (More specifically,
    considering a subtree of~$T$ rooted at a non-special node~$n$ at odd depth
    in the previous proof, when we started processing that subtree, the
    enumeration clearly covered all its nodes, finished at~$n$, respected the
    distance bound, and there were no special nodes in that subtree, so this
    shows that correctness also holds when applying that scheme on the subtree
    rooted at~$e$.)

    Hence, the last point to show is that the distance between the last node of
    the first part of the enumeration and the first node of the second part of
    the enumeration is at most~$3$. Now, the first part ended either at the
    parent of~$e$ or a sibling of~$e$, i.e., at distance~$2$ to~$e$. Now, the
    second phase of the enumeration starts on~$e$ at odd depth, so either $e$
    was in fact a leaf and we enumerate it for a distance~$2$, or we recurse on
    the first child of~$e$, which is now at even depth, and enumerate it, for a
    total distance of~$3$. Hence, the entire enumeration respects the distance
    bound of~$3$. This concludes the proof.
  \end{proof}

  Last, we can easily prove Lemma~\ref{lem:connected-enum} from
  Lemma~\ref{lem:tree}:
\begin{proof}[Proof of Lemma~\ref{lem:connected-enum}]
  Consider the (finite) undirected graph~$G = (L,\{\{u,v\}\mid
\delta(u,v)\leq d\}$. Since~$L$ is $d$-connected,~$G$ is connected (in the usual sense) and so it has a
spanning tree~$T$.
  Now, applying Lemma~\ref{lem:tree} on~$T$ gives us a sequence enumerating
  exactly once every vertex of~$G$ starting and ending at the requisite nodes,
  such that the distance between any two nodes in the sequence is at most~$3$
  in~$T$, hence at most~$3$ in~$G$, so the distance between the words is at
  most~$3d$.
\end{proof}
\end{toappendix}

\subparagraph*{Stratum-connectivity.}
To show orderability for infinite languages, we will decompose them into 
\emph{strata}, which simply contain the words in a
certain length range. Formally:

\begin{definition}
  Let $L$ be a language, let $\ell > 0$ be an integer,
  and let $i > 0$. The \emph{$i$-th stratum of width~$\ell$} (or
  \emph{$\ell$-stratum}) of~$L$,
  written $\stratum_\ell(L,i)$, is $L^{< i\ell} \setminus L^{<(i-1)\ell}$.
\end{definition}

We will show that, for the language $\L(A)$ of our interchangeable DFA~$A$,
we can pick~$\ell$ and~$d$ such that every $\ell$-stratum of~$\L(A)$
is $d$-connected, i.e., $\L(A)$
is  \emph{$(\ell,d)$-stratum-connected}:

\begin{definition}
  Let $L$ be a regular language and fix $\ell, d > 0$.
  We say that $L$ is \emph{$(\ell,d)$-stratum-connected} if every~$\ell$-stratum
  $\stratum_\ell(L, i)$ is $d$-connected.
\end{definition}

Note that our example language $a^* + b^*$, while $1$-connected, is not
$(\ell,d)$-stratum-connected
for any $\ell,d$, because any~$i$-th~$\ell$-stratum for~$i>d$ is not
$d$-connected. 
We easily show that stratum-connectivity implies
orderability:

\begin{toappendix}
  \subsection{Proof of Lemma~\ref{lem:scord}}
\end{toappendix}
\begin{lemmarep}
\label{lem:scord}
  Let $L$ be an infinite language recognized by a DFA with $k'$ states, and assume that $L$
  is $(\ell,d)$-stratum-connected for some $\ell \geq 2k'$ and some $d \geq 3k'$. Then $L$ is $3d$-orderable.
\end{lemmarep}

\begin{proofsketch}
  We show by pumping that we can move across contiguous strata. Thus, we combine orderings on each stratum obtained by Lemma~\ref{lem:connected-enum} with well-chosen endpoints.
\end{proofsketch}

\begin{toappendix}
\begin{proof}
  A \emph{$(\ell,d)$-ladder} of~$L$ consists of two infinite sequences $e_1,
  \ldots, e_n, \ldots$ of \emph{exit points}, and $s_2, \ldots, s_n, \ldots$ of
  \emph{starting points}, such that  $e_i \in \stratum_\ell(L,i)$ for all $i \geq 1$ and 
  $s_{i} \in \stratum_\ell(L,i)$ for all $i \geq 2$, and such that $s_i \neq
  e_i$ for all $i \geq 2$ and $\delta_\pp(e_i, s_{i+1}) \leq d$.
  We will show that such a ladder exists, which will suffice to show that~$L$ is~$3d$-orderable.
  Indeed, because each~$(\ell,d)$-stratum is~$d$-connected, we know by~Lemma~\ref{lem:connected-enum} that there exists a~$3d$-ordering~$\mathbf{s}_1$ of~$\stratum_\ell(L,1)$ that ends at~$e_1$ (and starts at an arbitrary word of~$\stratum_\ell(L,1)$), and also, for~$i\geq 2$, that there exists a~$3d$-ordering~$\mathbf{s}_i$ of~$\stratum_\ell(L,i)$ that starts at~$s_i$ and ends at~$e_i$. Now, the order~$\mathbf{s}_0 \mathbf{s}_1 \ldots$ is clearly a~$3d$-order of~$L$.
  
  Hence, let us show that a ladder exists. As $L$ is infinite, we know that the
  DFA~$A$ has a loopable state~$q$. Let $rz$ be a word of~$q$ such that $z$ is
  the label of a simple loop on a loopable state on~$q$: by the pigeonhole
  principle, we can ensure $|rz| \leq k'$. Now, using the fact that the DFA is
  trimmed, let $t$ be a word such that $rt$ is accepted: we can ensure $|t| <
  k'$.
  Now consider the
  sequence of words $w_i \colonequals r z^i t$ for all $i \geq 0$, which are all
  accepted.
  By definition
  the length difference between $w_i$ and $w_{i+1}$ is at most~$k'$ for each $i
  \geq 0$, and the push-pop distance between them is at most $3k'$, as evidenced by popping right $t$ and pushing right $zt$; so it is~$\leq d$.
  
  We now observe that, as $\ell \geq 2k'$, each stratum contains at least two
  distinct $w_i$: in particular the $rt$ and $rzt$ are words of the first
  stratum because their length is strictly less than~$2k'$.
  For each stratum $i \geq 1$, we choose $s_i$ to be the shortest $w_j$ such that $w_j$ is in $\stratum_\ell(L,i)$, and choose~$e_i$ to be the longest such~$w_j$. Thus, $e_i \neq s_i$ for all $i \geq 2$, and $\delta_\pp(e_i, s_{i+1}) \leq d$ for all~$i \geq 1$. We have thus defined a ladder, which as we argued concludes the proof.
\end{proof}
\end{toappendix}

We can then show using several pumping and de-pumping arguments
that the language of our interchangeable DFA~$A$ is $(\ell,d)$-stratum-connected for  $\ell \colonequals 8k^2$ and~$d \colonequals 16k^2$.

\begin{toappendix}
  \subsection{Proof of Proposition~\ref{prp:stratumconnected}}
We now prove the main technical result of this section, namely:
\end{toappendix}

\begin{propositionrep}
\label{prp:stratumconnected}
  The language
  $\L(A)$ is $(8k^2,16k^2)$-stratum-connected.
\end{propositionrep}

\begin{proofsketch}
  As there are only a finite number of non-loopable words, we focus on loopable
  words.
  Consider a stratum $S$ and two loopable words $u$ and~$v$ of~$S$.
  Their accepting runs involve loopable states, respectively~$q$ and~$q'$, that are 
  interchangeable because $A$ is.
  We first show that $u$ is $d$-connected (in~$S$) to a \emph{normal form}: a
  repeated loop on~$q$ plus a prefix and suffix whose length is bounded, i.e., only depends on the
  language. We impose this in two steps: first we move the last occurrence of $q$ in~$u$ near the end of the word by pumping at the left end and de-pumping at the right end, second we pump the loop on~$q$ at the right end while de-pumping the left end. This can be done while remaining in the stratum~$S$.
We obtain similarly a normal form consisting of a repeated loop on~$q'$ with
  bounded-length prefix and suffix that is~$d$-connected to~$v$ in~$S$.
  
  Then we do an induction on the number of connectivity and compatibility
  relations needed to witness that $q$ and~$q'$ are interchangeable. If $q =
  q'$, we conclude using the normal forms of~$u$ and~$v$. If $q$ is connected to~$q'$, we impose the normal form on~$u$, then we modify it to a word whose accepting run also visits~$q'$, and we apply the previous case. If $q$ is compatible with~$q'$, we conclude using the normal form with some loop label $z$ in $A_q \cap A_{q'}$ (of length~$\leq k^2$) that witnesses their compatibility.
  The induction case is then easy.
\end{proofsketch}

\begin{toappendix}

We let~$(\ell,d)\colonequals (8k^2,16k^2)$.
  We show this result in the remainder of the appendix.
  If $\L(A)$ is finite, then clearly it consists only of words of length $\leq k$, so the first stratum is trivially $d$-connected because $d \geq 2k$ and the other strata are empty, hence vacuously $d$-connected. Thus, in the sequel, we assume that $\L(A)$ is infinite, in particular it contains infinitely many loopable words.
  
  To show that $\L(A)$ is $d$-stratum connected, take $i \geq 1$ and let us show that the $i$-th $\ell$-stratum $S = \stratum_\ell(\L(A), i)$ is $d$-connected. 
This will indeed be enough by Lemma~\ref{lem:scord}.
In the rest of the proof we only work with the language~$S$, so $d$-sequences, $d$-connectivity between words, always require that the 
words of the sequences 
are in~$S$.
  
  Let us first take care of the non-loopable words,
  and show that each non-loopable word is at
  distance $\leq d$ from a loopable word of~$S$. For this, taking a non-loopable word~$w$, we know from the proof of Claim~\ref{clm:langpart} that $|w| \leq k-1$.
  From the definition of $\ell$ (specifically, since $\ell \geq k-1$), we know that we are in the first stratum, i.e., $i = 1$.
  As $d \geq 2k$, we can simply edit~$w$ into a loopable word by first removing all letters of~$w$, then adding all letters of some loopable word of length at most $k-1$, which must exist because $\L(A)$ is infinite and we can simply choose any word whose accepting path goes via some loopable state and does not go twice through the same state.
  
  We will now show the result statement: any two words $u$ and $v$ of~$S$ are $d$-connected. By what we showed, we can assume that $u$ and $v$ are loopable. 
 
  It will often be necessary to adjust our constructions depending on the length of a word~$w \in S$, because when modifying $w$ we wish to ensure that it remains in~$S$, in particular that it has the right length. So we will call a word $w \in S$ \emph{short} if it is in the lower half of the stratum, and \emph{long} otherwise. 
  Formally, considering a word $w'$ of~$S$, we know that its length is $(i-1) \times 8k^2 \leq |w'| < i \times 8k^2$,
  i.e., 
  $(2i-2) \times 4k^2 \leq |w'| < 2i \times 4k^2$: we call \emph{valid} a word satisfying this condition on length, so that a word is in~$S$ iff it is valid and is accepted by~$A$. Now, we 
  call $w'$ \emph{short} if
  $(2i-2) \times 4k^2 \leq |w'| < (2i-1) \times 4k^2$ and \emph{long}
  if 
  $(2i-1) \times 4k^2 \leq |w'| < 2i \times 4k^2$. This ensures that increasing the length of a short word by at most~$k^2$, or by at most~$4k$, gives a valid word (which may be long or short); and likewise decreasing the length of a long word by at most~$k^2$ or by at most~$4k$ gives a valid word (which may also be long or short).
  Further the definition ensures that $\ell \geq 7k$, which we will use at some point as a bound on the size of the first stratum.

  We will first show that all loopable words of~$S$ are $d$-connected to
  loopable words of~$S$ of a specific form, where, up to a prefix and suffix
  whose length only depends on the language, the words consists of a power of a word which labels a loop on some loopable state (on which we must impose a length bound). We will apply this to loops of length at most~$k^2$, because of our choice of~$\ell$.
  
  \begin{lemma}
    \label{lem:bigloop}
    Let $w$ be a loopable word of~$S$, let $q$ be a loopable state that occurs in the run of~$w$, let $e$ be the label of a loop on~$q$, i.e., a non-empty word of $\L(A_q)$, such that $|e| \leq k^2$. Then $w$ is $d$-connected to some word $s e^n t$ with $|s| \leq k$ and $|t| \leq k$.
  \end{lemma}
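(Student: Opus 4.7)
The plan is to construct a $d$-sequence in~$S$ from~$w$ to a word of the form $s e^n t$ with $|s|, |t| \leq k$, using pump and de-pump moves at the endpoints. I first decompose the accepting run of~$w$ by marking the first and last occurrences of~$q$: write $w = w_1 m w_2$ where the run reaches~$q$ for the first time after reading~$w_1$, loops on~$q$ while reading~$m$ (from the first to the last occurrence of~$q$), and reaches a final state~$q_f$ after reading~$w_2$. The goal is to shrink $w_1$ and~$w_2$ to length at most~$k$ and replace the middle by a power of~$e$.

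Following the sketch, I proceed in two phases. In the first phase, I bring the last occurrence of~$q$ to within $k$ positions of the right end, by alternating \emph{pump-left} and \emph{de-pump-right} moves: while $|w_2| > k$, the pigeonhole principle on the run of~$w_2$ yields a simple loop near its right end with a tail of length $\leq k$, which I de-pump at cost $O(k)$ push-pop operations at the right; to offset the length loss and remain inside~$S$, I pump a simple loop at the left end of the run (which exists by pigeonhole since the total length is much larger than~$k$), at cost $O(k)$ operations at the left. After finitely many moves, $|w_2| \leq k$, so I set $t \colonequals w_2$.

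In the second phase, I replace the prefix and the middle by their normal form by alternating \emph{pump-$e$-right} (inserting one copy of~$e$ immediately before~$t$) with \emph{de-pump-left} moves. Since the last~$q$ is now within~$k$ positions of the right end, inserting~$e$ just before~$t$ costs $2|t| + |e| \leq 2k + k^2 \leq d$; de-pumping a simple loop at the left costs $O(k)$. By iterating this alternation, I progressively consume the leftmost simple loops of~$w_1$ and then of~$m$, while growing the $e^n$-block on the right, until the part of the word before the $e$'s has length $\leq k$ and reaches~$q$ directly, yielding the desired form $s e^n t$.

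The main obstacle is maintaining the stratum invariant at every intermediate word. Each move changes the length by $\leq k$ (simple-loop moves) or by $|e| \leq k^2$ ($e$-moves), and the stratum has width $\ell = 8k^2$, which provides enough slack to schedule the moves so that the running length stays in $[(i-1)\ell,\, i\ell)$: pump when near the lower bound and de-pump when near the upper bound. A subsidiary point is to verify that each phase can always find a simple loop on the required side; this follows by pigeonhole from the fact that the total length is $\geq (i-1)\ell \gg k$ throughout. Finally, a small adjustment is needed so that at the very end the number~$n$ of $e$'s is chosen so the final length lies in $[(i-1)\ell,\, i\ell)$, which is possible because the remaining freedom of size~$|e| \leq k^2$ is much smaller than~$\ell$.
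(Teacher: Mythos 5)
Your proposal matches the paper's proof in all essentials: the paper also proceeds in two phases, first bringing the last occurrence of~$q$ to within~$k$ of the right end (formalized there as an induction on a ``$q$-measure'', alternating pump-at-the-first-loopable-state with de-pump-in-the-length-$k$ suffix), then growing the $e$-block before~$t$ while de-pumping simple loops in the length-$k$ prefix, using the same short/long halving of the stratum to keep every intermediate word valid. The only cosmetic difference is that your justification for the availability of a left pump should appeal to the first loopable state lying within the first $k$ positions (non-loopable states cannot repeat) rather than to the total length being large, but this does not affect correctness.
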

  
  To show the lemma, we will first show that we can ``move'' the latest occurrence of~$q$ near the end of the word by making edits of length at most~$2k$. To this end, we define the \emph{$q$-measure} of a word $w'$ and a state $q$ occurring in~$w'$ as the smallest possible length of~$t'$ when writing $w' = r' t'$ such that $q$ is reached between~$r'$ and~$t'$. We then claim:
  
  \begin{claim}
  \label{clm:moveloop}
    Any loopable word $w$ of~$S$ where some loopable state $q$ occurs is $2k$-connected to some word~$w'$ of $q$-measure $\leq k$.
  \end{claim}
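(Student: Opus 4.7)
The plan is to prove the claim by induction on the $q$-measure of the current word, showing that whenever this measure exceeds $k$ one can reach a word in $S$ with strictly smaller $q$-measure via a $2k$-sequence. I will use two primitive operations, a de-pump at the right end and a pump at the left end, alternated based on the current length so that every intermediate word stays in the stratum.

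For the de-pump step I would apply the pigeonhole principle to the last $k+1$ states of the accepting run of $w$, obtaining a loop at positions $(a,b)$ with $n - k \leq a < b \leq n$ and $\ell + r \leq k$, where $\ell = b - a$ and $r = n - b$. Deleting this loop yields a word in $\L(A)$, reached by $\ell + r$ right-pops followed by $r$ right-pushes at total cost $\ell + 2r \leq 2k - 1$; and since the $q$-measure is larger than $k$, the last $q$-position $j$ satisfies $j < n - k \leq a$, so the new $q$-measure drops by exactly $\ell \geq 1$. For the pump step I let $j_0$ be the first position in the run at which the state is loopable, and let $\ell_{j_0}$ be the length of a shortest loop on $s_{j_0}$. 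The crucial inequality is $j_0 + \ell_{j_0} \leq k$: $j_0$ is bounded by the number of non-loopable states of $A$ (the prefix $s_0, \ldots, s_{j_0-1}$ consists of distinct non-loopable states), while $\ell_{j_0}$ is bounded by the size of the strongly connected component of $s_{j_0}$, hence by the number of loopable states. Pumping this loop at position $j_0$ then costs $2 j_0 + \ell_{j_0} \leq (k-1) + k = 2k - 1$ in push-pop distance. For the effect on $q$-measure I split on $j_0$ versus $j$: if $j_0 < j$, the pumped loop lies strictly before the last occurrence of $q$, so both the length and $j$ shift by $\ell_{j_0}$ and the measure is preserved; if $j_0 = j$, which forces $q = s_{j_0}$, then the pumped loop ends at $q$, so the new last occurrence of $q$ is at position $j + \ell_{j_0}$ and again the measure is preserved.

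To combine the two primitives I alternate based on the current length: de-pump whenever it is at least $(i-1) \cdot 8k^2 + k$ so that the new length stays above $(i-1) \cdot 8k^2$, and otherwise pump, which keeps the length below $(i-1) \cdot 8k^2 + 2k < i \cdot 8k^2$. Every intermediate word then lies in $S$, every step has push-pop cost at most $2k$, and only finitely many consecutive pumps can occur because each one raises the length by at least one. Since de-pumps strictly decrease the $q$-measure while pumps leave it unchanged, the process terminates with $q$-measure at most $k$. The hard part will be the pump analysis: the identity $j_0 + \ell_{j_0} \leq k$ together with the case split on whether $j_0 < j$ or $j_0 = j$ is precisely what allows a single pump to fit within the $2k$ budget without ever increasing the $q$-measure.
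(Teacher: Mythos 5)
Your proof is correct and takes essentially the same route as the paper's: induction on the $q$-measure, a left-end pump of a short simple loop at the first loopable state (which cannot change the measure), and a right-end de-pump obtained by pigeonhole among the last $k+1$ states of the run (which strictly decreases the measure because the last occurrence of $q$ lies before that suffix). The only difference is scheduling --- you alternate the two primitives based on a length threshold of $k$ above the stratum's lower boundary, whereas the paper first pumps until the word is in the upper half of the stratum and then de-pumps once before recursing --- but both keep every intermediate word in $S$ and every step within the $2k$ budget.
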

  
  \begin{proof}
    We show the claim by induction on the measure of~$w$. The base case is when the measure is $\leq k$, in which case the result is immediate. So let us show the induction claim: given $w$ with measure $> k$, we will show that it is $2k$-connected to some word (in~$S$) with strictly smaller measure. Note that we can assume that $|w| \geq k$, as otherwise its measure is~$<k$.
    
    Intuitively, we want to de-pump a simple loop in the suffix of length~$k$ to decrease the measure, but if the word is short we first want to make it long by pumping some loop as close to the beginning as possible. So the first step is to make~$w$ long if it is short, and then the second step is to decrease the measure (potentially making the word short again) and conclude by induction hypothesis.

    The first step works by repeating the following process, which does not increase the measure. Formally, as $w$ contains a loopable state and the non-loopable states occur at most once, we can write $w = \rho \tau$ with the length $|\rho|$ of $\rho$ being minimal (in particular $\leq k$) such that the state between~$\rho$ and~$\tau$ is loopable. Let $\sigma$ be a simple loop on~$q$, i.e., $\sigma \in \L(A_q) \setminus \{\epsilon\}$. Observe that then~$|\rho\sigma|\leq k$. By pumping, we know that $\rho \sigma \tau$ is accepted by~$A$. We obtain $\rho \sigma \tau$ by editing~$w = \rho\tau$ in the following way: we pop left elements of $\rho$, i.e., at most~$k$ edits, then push~$\rho\sigma$, i.e., at most~$k$ edits. Note that, writing the original $w$ as $s' t'$ with $q$ reached between $s'$ and $t'$ and $|s'|$ being maximal, the modification described here modifies a prefix of the word which is no longer than $s'$, and makes $s'$ longer (thus making $w$ longer). Repeating this process thus gives us a word $w'$ which is no longer short, and where the measure is unchanged because $\sigma$ was added on the first occurrence of a loopable state, hence on or before the last occurrence of~$q$. As 
    we enlarge the word at each step by at most $|\sigma| \leq k$, and stop it when the word is no longer short, the definition of being short ensures that the eventual result $w'$ of this process is not too big, i.e., $w'$ is still valid; and $w$ is $2k$-connected to it.
    
    Now, the second step on a word~$w'$ which is long is to make the measure decrease. For this, we simply write $w' = r' t'$ with $|t'|$ the measure of~$w'$. If $|t'| \leq k$, we conclude by the base case. Otherwise, by the pigeonhole principle, we can find two occurrences of the same state in the run within the suffix~$t'$; we write $w' = \rho \sigma \tau$ with $|\sigma\tau| \leq k$ as small as possible (i.e., we take the first occurrence of a repeated state when reading the run from right to left) such that some state $q'$ occurs between $\rho$ and $\sigma$ and between $\sigma$ and $\tau$. Note that $q' \neq q$, otherwise we would have concluded by the base case. By depumping, we know that $\rho\tau$ is accepted by the automaton, and as $q\neq q'$ it has strictly smaller $q$-measure. We obtain $\rho\tau$ from~$w' = \rho\sigma\tau$ in a similar way to that of the previous paragraph: we pop right elements of~$\sigma\tau$, i.e., at most~$k$ edits, then we push right the elements of~$\tau$, i.e., at most~$k$ edits. The length of the resulting word is decreased by~$\sigma$, i.e., by at most~$k$, so the definition of being long ensures that the resulting word is in~$S$.
    Thus, we know that $w'$, hence~$w$ is $2k$-connected to a word of~$S$ with smaller $q$-measure. We conclude by the induction hypothesis that $w$ is $2k$-connected to a word with $q$-measure $\leq k$, which concludes the proof.
  \end{proof}
  
  With this claim, we can show Lemma~\ref{lem:bigloop}:
  
  \begin{proof}[Proof of Lemma~\ref{lem:bigloop}]
    Fix the word $w$ and the loopable state~$q$. By Claim~\ref{clm:moveloop}, we know that $w$ is $2k$-connected to a word $w'$ of~$S$ with $w' = r't'$, $|t'| \leq k$, and $q$ is achieved between~$r'$ and~$t'$. Fix $e$ the label of the loop to achieve. We will intuitively repeat two operations, starting with the word~$w'$, depending on whether the current word is long or short, with each word obtained being a word of~$S$ that is at distance $\leq d$ from the previous one. We stop the process as soon as we obtain a word of the desired form.
    
    If the current word is short, we add a copy of~$e$ just before~$t'$, by
popping~$t'$ and then pushing $et'$. This takes $2k+|e|$ operations, which is $\leq d$ because
$|e| \leq k^2$, and we know that the result is accepted by the automaton.
Further, as the length increased by~$|e|$ and $|e| \leq k^2$, the definition of
$\ell$ ensures that the word is still in~$S$ (but it may be long).
    
    If the current word is long, we make it shorter by editing the left endpoint. Formally, as a long word has length $\geq k$, we can write $w' = \rho \tau$ with $|\rho|\leq k$, and furthermore $\rho$ does not overlap the copies of~$e$ that have been inserted so far, otherwise the word would already be of the desired form. By the pigeonhole principle, there is a state $q'$ occurring twice in~$\rho$, i.e., we can write $\rho = \rho' \sigma \rho''$, such that $\rho' \rho'' \tau$ is accepted by the automaton,
and by taking~$q'$ to be the first such state we can have~$|\rho'\sigma|\leq k$. We can obtain $\rho' \rho'' \tau$ from~$w'$ by popping left $\rho' \sigma$ and pushing left $\rho'$, i.e., at most $2k$ operations. Further, as the length decreased by at most~$k$, the definition of~$\ell$ ensures that the word is still in~$S$ (but it may be short).
    
    It is clear that repeating this process enlarges a factor of the form $e^i$ inside the word (specifically, the word ends with $e^i t$ with $i$ increasing), while the word remains in~$S$, so we will eventually obtain a word of~$S$ of the form $s e^n t$ with $|s| \leq k$ and $|t| \leq k$ to which the word~$w'$, hence the original word~$w$, is $d$-connected. This establishes the claimed result.
  \end{proof}
  
  Now that we have established Lemma~\ref{lem:bigloop}, we have a kind of ``normal form'' for words, i.e., we will always be able to enforce that form on the words that we consider.
  Let $u$ and $v$ be two loopable words, and let $p$ and $q$ be any  loopable states reached in the accepting run of $u$ and $v$ respectively.
  We know that the automaton $A$ is interchangeable, so~$p$ and $q$ are interchangeable, and there exists a sequence $p = q_0, \ldots, q_h = q$ witnessing this, with any two successive $q_i$ and $q_{i+1}$ for $1 \leq i < h$ being either connected or compatible. We show that $u$ and $v$ are $d$-connected by induction on the value~$h$.
 
  \subparagraph*{Base case $h=0$.}  
  The base case is $h = 0$, i.e., $p = q$. In this case, let $e$ be an arbitrary non-empty word in $A_q$ of length $\leq k$. We know by applying Lemma~\ref{lem:bigloop} to~$u$ that $u$ is $d$-connected to a word of the form $u' = s e^n t$ with $|s| \leq k$ and $|t| \leq k$, and by applying the lemma to~$v$ we know that $v$ is $d$-connected to a word of the form $v' = x e^m y$ with $|x| \leq k$ and $|y|\leq k$.  We now claim the following, to be reused later:
  
  \begin{claim}
  \label{clm:base0}
    Consider words of~$S$ of the form $u' = s u'' t$ and $v' = x v'' y$ with $|s| \leq k$ and $|t| \leq k$ and $|x| \leq k$ and $|y|\leq k$ and $u''$ and $v''$ both being powers of some word $\lambda$. 
    Then $\delta_\pp(u', v') \leq \ell+8k$.
  \end{claim}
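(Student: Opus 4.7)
The plan is to construct an explicit push-pop edit script from $u'$ to $v'$ in three phases: first strip off the prefix $s$ and suffix $t$ of $u'$ by popping, then interpolate between the powers $u''$ and $v''$ of $\lambda$ by pushing or popping on a single endpoint, and finally glue on the new prefix $x$ and suffix $y$ by pushing.

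Concretely, I would first apply $|t| \leq k$ operations of $\popR$ to remove $t$ and then $|s| \leq k$ operations of $\popL$ to remove $s$, leaving the word $u''$. Writing $u'' = \lambda^n$ and $v'' = \lambda^m$, observe that whichever of $n,m$ is smaller, the corresponding power of $\lambda$ is a prefix of the other. Hence I can transform $u''$ into $v''$ using exactly $\bigl| |u''| - |v''| \bigr|$ operations, all of $\pushR$ (if $n \leq m$) or all of $\popR$ (if $n > m$). To conclude I perform $|x| \leq k$ operations of $\pushL$ to prepend $x$ and $|y| \leq k$ operations of $\pushR$ to append $y$, yielding $v'$.

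The total cost is $|s| + |t| + |x| + |y| + \bigl| |u''| - |v''| \bigr| \leq 4k + \bigl| |u''| - |v''| \bigr|$. To bound the middle term, I use the identity $|u''| - |v''| = \bigl(|u'| - |v'|\bigr) - \bigl(|s| + |t|\bigr) + \bigl(|x| + |y|\bigr)$, which gives $\bigl| |u''| - |v''| \bigr| \leq \bigl| |u'| - |v'| \bigr| + 4k$. Since $u'$ and $v'$ both lie in the same $\ell$-stratum $S$, their lengths differ by at most $\ell - 1$, so the overall cost is bounded by $4k + (\ell - 1) + 4k \leq \ell + 8k$. There is no real obstacle: the only subtlety is that because $u''$ and $v''$ are powers of the same word $\lambda$, the middle phase costs exactly the length difference rather than being inflated by a factor of $|\lambda|$, which is what keeps the bound linear in $\ell$ instead of in $\ell \cdot |\lambda|$.
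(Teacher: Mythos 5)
Your proof is correct and follows essentially the same route as the paper's: pop off $s$ and $t$, adjust the middle part by pushing or popping copies of $\lambda$ at cost exactly $\bigl||u''|-|v''|\bigr|$, push on $x$ and $y$, and bound the middle term via the length decomposition and the fact that both words lie in the same $\ell$-stratum. The only (immaterial) difference is that you track the stratum length gap as $\ell-1$ where the paper uses the slightly looser $\ell$.
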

  
  \begin{proof}
    First note that, because $u''$ and $v''$ are powers of a common word, we have $\delta_\pp(u'', v'') = \left||u''| - |v''|\right|$, simply by making their length equal by adding or removing powers of $\lambda$.
    Now, the distance from $u'$ to $v'$ is at most $|s| + |t| + |x| + |y| + \left||u''| - |v''|\right|$, by popping $s$ and $t$, then making the length of the middle parts $u''$ and $v''$ equal via the previous observation, then pushing $x$ and $y$.
    As the two words $u'$ and $v'$ are in~$S$,
    their length difference is at most $\ell$, i.e., $\left||u'|-|v'|\right| \leq \ell$.
    Now $|u'| = |s| + |t| + |u''|$ and $|v'| = |x| + |y| + |v''|$, so by the triangle inequality
    the length difference $\left||u''| - |v''|\right|$ between the middle parts $u''$ and $v''$ is at most 
    $\ell + |s|+|t|+|x|+|y|$, so $\delta_\pp(u', v') \leq \ell + 2(|s|+|t|+|x|+|y|)$,
    i.e., at most $\ell+8k$, establishing the result.
  \end{proof}
    
    Now, applying Claim~\ref{clm:base0} to our $u'$ and $v'$ with $\lambda = e$, we know that $\delta_\pp(u', v') \leq \ell+8k$, so the distance is at most $d$ because $d \geq \ell+8k$. This shows that $u'$ and $v'$, hence $u$ and $v$, are $d$-connected, establishing the base case $h =0$.
  
  \subparagraph*{Base case $\bm{h=1}$.}  
  We show a second base case, with $h = 1$, which will make the induction case trivial afterwards. In this case, $u$ and $v$ respectively contain loopable states $q$ and $q'$ which are either connected or compatible. We deal with each case separately.
  
  \subparagraph*{Base case $\bm{h=1}$ with two connected states.}  
  if $q$ and $q'$ are connected, it means that there is path from~$q$ to~$q'$ in the automaton, or vice-versa. We assume that there is a path from~$q$ to~$q'$, otherwise the argument is symmetric up to exchanging~$u$ and~$v$. Up to making this path simple, let $\pi$ be the label of such a path in the automaton, with $|\pi| \leq k$, and let $\pi'$ be the label of a simple path in the automaton from~$q'$ to some final state; hence $|\pi\pi'| \leq 2k$. 
  Let $e$ be a non-empty word of length $\leq k$ accepted by~$A_q$.
  We know by Lemma~\ref{lem:bigloop} that $u$ is $d$-connected to a word of the form $u' = s e^n t$ 
  with $|s| \leq k$ and $|t| \leq k$ and $q$ occurring before and after each occurrence of~$e$. Intuitively, we want to replace $t$ by $\pi\pi'$, to obtain a word to which $u'$ is $d$-connected and which goes through~$q'$, so that we can apply the first base case $h = 0$, but the subtlety is that doing this could make us leave the stratum~$S$. We adjust for this by adding or removing occurrences of~$e$ to achieve the right length. Formally, if $u' \leq 5k$ then because $\ell \geq 5k$ we know that~$u'$ is in the first stratum and that $s e^n \pi \pi'$ has length at most $7k$, so as $\ell \geq 7k$ and $d \geq 3k$ we know that $u'$ is $d$-connected to $u'' = s e^n \pi \pi'$ which is in~$S$. Otherwise, we assume that $u' \geq 5k$, so that $e^n \geq 3k$. There are three subcases: either $|t| = |\pi\pi'|$, or $|t| < |\pi\pi'|$, or $|t| > |\pi\pi'|$.
  
  In the first subcase where $|t| = |\pi\pi'|$, then we know that $u'$ is $3k$-connected to the word $u'' = s e^n \pi\pi'$, which is accepted by~$A$ and is in $S$ because $|u'| =|u''|$. 
  In the second subcase where $|t| < |\pi\pi'|$, if $u'$ is short then we conclude like in the previous subcase, using the fact that $|u'| \leq |u''| \leq |u'|+2k$ so $u''$ is in~$S$ by the definition of~$\ell$. If $u'$ is long then
  we can choose some number $\eta$ such that $2k \leq |e^\eta| \leq 3k$, which is possible because $|e| \leq k$. Now we know that from $u'$ we can obtain the word $s e^{n-\eta} \pi\pi'$, which is well-defined because $e^n \geq 3k$ so $n \geq \eta$, is accepted by~$A$, and has length at most~$|u'|$ and at least $|u'| - |e^{\eta}|$, i.e., length at least $|u'| - 3k$, so it is still in~$S$ by the definition of~$\ell$.
  In the third subcase where $|t| > |\pi\pi'|$, if~$u'$ is long then we conclude like in the first subcase because $|u'| - k \leq |u''| \leq |u'|$, so $u''$ is in~$S$. If $u'$ is short then we choose some number $\eta$ like in the second subcase, and obtain from~$u'$ the word $s e^{n+\eta} \pi\pi'$, which is accepted by~$A$, is longer than~$u'$, and has length at most $|u'| + 3k$ so is still in~$S$.
  
  In all three subcases we have shown that $u'$ is $d$-connected to a word $u''$ in~$S$ whose accepting path goes through~$q'$ because $u''$ finishes by $\pi\pi'$ and the state immediately before was~$q$ so~$\pi$ brings us to~$q'$. By the base case $h = 0$, we know that $u''$ and $v$, whose accepting runs both include the state~$q'$, are $d$-connected.
  
  \subparagraph*{Base case $\bm{h=1}$ with two compatible states.}  
  Second, if $q$ and $q'$ are compatible, we know that there is some non-empty witnessing word $z$ which is both in~$A_q$ and in~$A_{q'}$. Up to taking a simple path in the product of these two automata, we can choose $z$ to have length~$\leq k^2$. By Lemma~\ref{lem:bigloop}, we know that $u$ is $d$-connected to a word $u' = s z^n t$ with $|s| \leq k$ and $|t|\leq k$ and $q$ at the beginning and end of every occurrence of~$z$. Likewise, applying the lemma to~$v$, we know that $v$ is $d$-connected to a word $v' = x z^m y$ with $|x| \leq k$ and $|y| \leq k$ and $q$ at the beginning and end of every occurrence of~$z$. We now conclude like in the base case $h = 0$, by applying Claim~\ref{clm:base0} with $\lambda = z$.
  
  \subparagraph*{Induction case.} Let $u$ and $v$ be two loopable words, and $p$ and $q$ be loopable states respectively occurring in the accepting run of~$u$ and of~$v$, and consider a sequence $p = q_0, \ldots, q_h = q$ witnessing this, with $h \geq 1$.
  Let $w$ be any word of~$S$ whose accepting path goes via the state $q_{h-1}$, which must exist, e.g., taking some path in the automaton that goes via $q_{h-1}$ and then repeating some loop of size $\leq k$ on the loopable state~$q_{h-1}$. 
  By applying the induction hypothesis, we know that $u$ is $d$-connected to~$w$. By applying the case $h = 1$, we know that $w$ is $d$-connected to~$v$. Thus, $u$ is $d$-connected to~$v$. 
  This shows the claim, and concludes the induction, proving Proposition~\ref{prp:stratumconnected}.

\end{toappendix}

From this, we deduce with Lemma~\ref{lem:scord} that $\L(A)$ is
$48k^2$-orderable, so Theorem~\ref{thm:orderable} holds. Note that the
construction ensures that the words are ordered stratum after stratum, so
``almost'' by increasing length: in the ordering that we obtain, after producing some
word~$w$, we will never produce words of length less than $|w| - \ell$.

\section{Bounded-delay enumeration}
\label{sec:worddag}
In this section, we show how the orderability result of the previous section
yields a bounded-delay algorithm.
We use the pointer-machine model from Section~\ref{sec:prelims}, which we modify
for convenience to allow data values and the number of fields of records to be
exponential in the automaton (but fixed throughout the enumeration, and
independent on the size of words): see Appendix~\ref{apx:machineword} for
explanations on why we can do this.
We show:

\begin{toappendix}
  \subsection{Details on the machine model}
  \label{apx:machineword}

  Throughout Section~\ref{sec:worddag}, we phrase our algorithms in a slight
  variant of the pointer machine model presented in Section~\ref{sec:prelims}
  and Appendix~\ref{apx:machine}. The modification is that the number of fields
  of records, and the number of possible data values, is no longer constant, but
  is allowed to depend on the input automaton, specifically it can be
  exponential in the input automaton. This allows us to talk, e.g., of records
  whose fields are indexed by edit scripts of constant length, or of data values
  denoting integers that are exponential in the automaton; and to allow
  arbitrary arithmetics on these. However, crucially,
  the number of fields and the data values do \emph{not} depend on the current
  length of words when we enumerate, or on the current memory size. Intuitively,
  as our enumeration algorithm generally runs indefinitely, the unbounded memory
  size will eventually be much larger than the exponential quantity on the input
  used to bound the number of fields and the number of data values; and our use
  of a pointer machine model means that the resulting pointers cannot be used
  for arithmetic operations.

  Let us now explain why this difference in our definition of the
  pointer machine model is in fact inessential. For any number $x$, it is easy to see that a pointer machine allowing $x$ fields per
  pointer and $x$ different data values can be emulated by a pointer machine in
  the standard sense presented in Section~\ref{sec:prelims}, e.g., which allows
  only 2 fields per pointer and 2 different data values. Indeed, the $x$
  different data values can be coded, e.g., in unary as linked lists of length
  up to $x$, and records with $x$ fields can be coded as a linked list of
  records, each of which contains one data field and one continuation field
  pointing to the next record. Operations on data values can be emulated by a
  hardcoded program that allocates a new linked list whose length is a function
  of the length of the two other lists, and accessing a field of a record can be
  emulated by a hardcoded program that follows the linked lists of records to
  retrieve the correct field. The overhead induced by the emulation is a
  multiplicative factor in the number $x$.

  Now, recall that our goal in Section~\ref{sec:worddag} is to show
  Theorem~\ref{thm:enum-main}, which claims an exponential delay in the input
  automaton. Thus, we can work in the modified pointer machine model where the
  quantity $x$ is exponential in the input automaton, because the resulting
  algorithm can be emulated by an algorithm on a pointer machine model in the
  standard sense, multiplying only by an exponential factor in the automaton,
  and yielding an overall complexity which is still exponential in the
  automaton.
\end{toappendix}

\begin{theorem}
\label{thm:enum-main}
There is an algorithm which, given an
interchangeable DFA $A$ with $k$ states, enumerates the language
  $\L(A)$ with push-pop distance bound~$48k^2$ and exponential delay in~$|A|$.
\end{theorem}

Let us accordingly fix the interchangeable DFA~$A$ with~$k$ states.
Following
Proposition~\ref{prp:stratumconnected}, we 
let~$d\colonequals 16k^2$ and~$\ell \colonequals 8k^2$.

\subparagraph*{Overall amortized scheme.}
The algorithm will consider the strata of the input language~$L$ and will
run two processes in parallel: the first process simply enumerates a previously prepared sequence of edit scripts that gives a~$3d$-ordering of some stratum, while the second process computes the sequences for subsequent strata (and of course imposing that the endpoints of the sequences for contiguous strata are sufficiently close).

The challenging part is to prepare efficiently the sequences for all strata, and in particular to build a data structure that represents the strata.
We will require of our algorithm that it processes each stratum
in \emph{amortized linear time} in its size.
Formally, letting $N_j \colonequals |\stratum_\ell(L, j)|$ be the number of words of the $j$-th stratum for all $j\geq 1$, there is a value $C \in \NN$ that is exponential in~$|A|$ such that, after having run for $C \sum_{j=1}^i N_j$ steps, the algorithm is done processing the $i$-th stratum.
Note that this is weaker than processing each separate stratum in linear time: the algorithm can go faster to process some strata and spend this spared time later so that some later strata are processed arbitrarily slowly relative to their size.

If we can achieve amortized linear time, then the overall algorithm runs with bounded delay. To see why, notice that the prepared sequence for the $i$-th stratum has length at least its size~$N_i$, and we can show that the size $N_{i+1}$ of the next stratum is within a factor of~$N_i$ that only depends on~$L$
(this actually holds for any infinite regular language and does not use
interchangeability):

\begin{lemma}
\label{lem:density}
  Letting $C_A \colonequals (k+1)|\Sigma|^{\ell+k+1}$, for all $i \geq 1$ we have
  $N_i /C_A \leq N_{i+1} \leq C_A N_i$.
\end{lemma}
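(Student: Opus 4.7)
The plan is to prove both inequalities by constructing injective pumping-based maps between the two strata, bounding the number of preimages by $C_A$.

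For the upper bound $N_{i+1} \leq C_A N_i$, I would show that every $w \in \stratum_\ell(L, i+1)$ admits a decomposition $w = \alpha\beta\gamma$ with $\alpha\gamma \in \stratum_\ell(L, i)$, $|\alpha| \leq k$, and $1 \leq |\beta| \leq \ell + k$, where reading $\alpha$ and reading $\alpha\beta$ from $q_0$ land at the same state (so $\beta$ is a loop that can be excised while staying in $L$). Given such a decomposition, the map $w \mapsto (\alpha\gamma, |\alpha|, \beta)$ is injective into $\stratum_\ell(L, i) \times \{0, \ldots, k\} \times \Sigma^{\leq \ell + k}$, because $w$ is recovered by reinserting $\beta$ at position $|\alpha|$ in $\alpha\gamma$. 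Summing over the possible lengths of $\beta$ yields
\[
N_{i+1} \;\leq\; N_i \cdot (k+1) \cdot \sum_{j=0}^{\ell+k} |\Sigma|^j \;\leq\; N_i \cdot (k+1) \cdot |\Sigma|^{\ell + k + 1} \;=\; C_A N_i.
\]

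To construct the decomposition I would iteratively apply pigeonhole to the first $k+1$ positions of the accepting run of the current word: there is always a repeated state, yielding a loop of length at most $k$ which I delete while staying in $L$. I stop as soon as the length first falls into $[(i-1)\ell, i\ell - 1]$. Because $\ell \geq k$, a single deletion cannot skip past stratum $i$, and the final length lies in $[i\ell - k, i\ell - 1]$, so the total number of deleted characters is at most $|w| - (i\ell - k) \leq \ell + k - 1$. By choosing each successive loop so that the interval it occupies in the original $w$ is adjacent to the region already deleted (via a case analysis on where the newly located loop in the current shortened word maps back, relative to the previous removal), the cumulative deletion forms a single contiguous factor $\beta$ of $w$ of length at most $\ell + k$, starting at some position $|\alpha| \leq k$.

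The lower bound $N_i \leq C_A N_{i+1}$ follows from the symmetric pumping-up construction: for each $w' \in \stratum_\ell(L, i)$ I iteratively insert short loops in the first $k+1$ positions of its run until the length first enters stratum $i+1$. The inserted characters form a contiguous factor of length at most $\ell + k$ at some position $|\alpha| \leq k$, and the analogous injective encoding gives the bound.

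The main obstacle will be verifying that the iterated deletions (respectively, insertions) always assemble into a single \emph{contiguous} factor $\beta$ rather than a disjoint union of factors, since an arbitrary canonical choice of loop at each step may produce non-adjacent removals. This requires choosing loops carefully so that each new loop bridges the region already modified; the pigeonhole argument on the at most $k$ states provides enough flexibility to do so, but formally verifying the bridging condition at every step requires a case analysis that splits according to where the next loop in the shortened word lies with respect to the previously deleted interval.
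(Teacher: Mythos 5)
Your upper-bound argument has a genuine gap: the decomposition $w = \alpha\beta\gamma$ with $|\alpha|\leq k$, with $\beta$ labelling a loop on the state reached after $\alpha$, and with $\alpha\gamma \in \stratum_\ell(L,i)$, need not exist, and no amount of careful bridging in the iterative procedure can manufacture it. The obstruction is that pigeonhole on the first $k+1$ positions of the run only guarantees a repeat among the states that actually occur there, and these may all belong to a ``front'' cycle contributing very few letters to $w$, while the bulk of $|w|$ comes from a later cycle whose states never appear at positions $\leq k$. Concretely, take $L = (ab)^*(cd)^*e$, recognized by a trimmed interchangeable DFA with $k=5$ states (so $\ell = 200$), and let $w = (ab)^3(cd)^{100i}e$, of length $200i+7$, hence in $\stratum_\ell(L,i+1)$. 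Reaching $\stratum_\ell(L,i)$ requires deleting at least $8$ letters, but the only factors of $w$ starting at a run position $\leq 5$ that label a loop are powers of $ab$ or $ba$ of length at most $6$: the two states occurring at positions $0,\dots,5$ occur nowhere past position $6$. So no single contiguous excision works, and your iterative procedure necessarily produces a non-contiguous deleted set (it must exhaust the $(ab)^3$ prefix and then delete a $cd$-loop separated from it by the surviving letter $c$). Note that the pumping-up half of your argument does not suffer from this: a loop $\lambda$ found among the first $k+1$ positions can be inserted as $\lambda^j$ at a single position, which is automatically contiguous, and some multiple of $|\lambda|\leq k$ always lands in the target window of $\ell$ consecutive lengths. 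The asymmetry is that a loop can always be repeated in place, but can only be excised as often as it actually occurs consecutively in $w$.

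The paper's proof avoids the issue by not requiring the stratum-$i$ word to arise from $w$ by a pumping operation at all: it removes the prefix of $w$ of the exact length needed to leave a suffix $w'$ of length $i\ell - k - 1$ (a word generally \emph{not} in $L$), then prepends any word of length $\leq k$ labelling a path from the initial state to the state at which $w'$ begins in the run of $w$; the result lies in $\stratum_\ell(L,i)$, and each word of stratum $i$ has at most $(k+1)|\Sigma|^{\ell+k+1}$ preimages under this correspondence, by counting the choices of prefix stripped (length $\leq k$) and prefix glued back (length $\leq \ell+k$). Switching your map to this prefix replacement --- rather than a loop excision encoded by $(\alpha\gamma,|\alpha|,\beta)$ --- repairs the argument while keeping the same constant.
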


\begin{proof}
  Each word in the $(i+1)$-th stratum of~$L$ can be transformed into a word in
  the~$i$-th stratum as follows: letting $k$ be the number of DFA states, first remove a 
prefix 
  of length at most $\ell+k$ 
  to get a word (not necessarily in~$L$) of length $i\ell -k -1$, and then add back a prefix corresponding to some path of length $\leq k$ from the initial state to get a word in the $i$-th stratum of~$L$ as desired. Now, for any word $w$ of the~$i$-th stratum, the number of words of the $(i+1)$-th stratum that lead to~$w$ in this way is bounded by~$C_A$, by considering the reverse of this rewriting, i.e., all possible ways to rewrite~$w$ by removing a prefix of length at most~$k$ and then adding a prefix of length at most $\ell+k$.
  A simple union bound gives $N_{i+1} \leq C_A N_i$. Now, a similar argument in
  the other direction gives
  $N_i /C_A \leq N_{i+1}$.
\end{proof}

Thanks to this lemma, it suffices to argue that we can process the strata in amortized linear time, preparing $3d$-orderings for each stratum: enumerating these orderings in parallel with the first process thus guarantees 
(non-amortized) bounded delay.

\subparagraph*{Preparing the enumeration sequence.}
We now explain in more detail the working of the amortized linear time algorithm. The algorithm consists of two components. The first component runs in amortized linear time over the successive strata, and prepares a sequence $\Gamma_1, \Gamma_2, \ldots$ of concise graph representations of each stratum, called \emph{stratum graphs};
for each~$i \geq 1$, after $C \sum_{j=1}^i N_j$ computation steps, it has finished preparing the $i$-th stratum graph $\Gamma_i$ in the sequence. The second component will run as soon as some stratum graph $\Gamma_i$ is finished: it reads the graph $\Gamma_i$ and computes a~$3d$-ordering for $\stratum_\ell(L, i)$ in (non-amortized) linear-time, using Lemma~\ref{lem:connected-enum}.
Let us formalize the notion of a stratum graph:

\begin{definition}
  Let $\Delta$ be the set of all push-pop edit scripts of length
  at most~$d$; note that $|\Delta| \leq (2 |\Sigma| + 2)^{d+1}$, and this bound only depends on the alphabet and on~$d$.
  For $i\geq 1$, the
  \emph{$i$-th stratum graph} is the edge-labeled directed graph $\Gamma_i = (V_i, \eta_i)$
  where the nodes
  $V_i=\{v_w \mid w \in \stratum_\ell(L,i)\}$ correspond to words of the~$i$-th stratum,
  and the directed (labeled) edges are given by the
  function $\eta_i \colon V_i \times \Delta \to V_i \cup \{\bot\}$ and describe the possible scripts:
  for each $v_w \in V_i$ and each $s \in \Delta$,
  if the script $s$ is applicable to~$w$ and the resulting word~$w'$ is in~$\stratum_\ell(L,i)$ then~$\eta(v_w,s) = v_{w'}$, otherwise $\eta(v_w,s)
  =\bot$.

  In our machine model, each node $v_w$ of $\Gamma_i$ is a record with
  $|\Delta|$ pointers, i.e., we do not store the word~$w$. 
  Hence, $\Gamma_i$ has linear size in~$N_i$.

  A  \emph{stratum graph sequence} is an infinite sequence $(\Gamma_1, v_{s_1}, v_{e_1}), (\Gamma_2, v_{s_2}, v_{e_2}), \ldots$ consisting of the successive stratum graphs together with couples of nodes of these graphs such that, for all $i \geq 1$,~$s_i$ and~$e_i$ are distinct words of the~$i$-th stratum, and we have $\delta_\pp(e_i, s_{i+1}) \leq d$.
\end{definition}

We can now present the second component of our algorithm. Note that the
algorithm runs on the in-memory representations of the stratum graphs, in which,
e.g., the subscripts are not stored.

\begin{toappendix}
  \subsection{Second component: Proof of Proposition~\ref{prp:second-comp}}
In this section we give the proofs for our bounded-delay enumeration
algorithm (Theorem~\ref{thm:enum-main}). We start by presenting in more details
what we called the second component of the amortized linear time algorithm,
namely:
\end{toappendix}

\begin{propositionrep}
\label{prp:second-comp}
  For $i \geq 1$, given the stratum graph $\Gamma_i$ and starting and ending nodes $v_{s_i} \neq v_{e_i}$ of~$\Gamma_i$, we can compute in time $O(|\Gamma_i|)$ a sequence of edit scripts $\sigma_1, \ldots, \sigma_{N_i-1}$  such that,
  letting $s_i = u_1, \ldots, u_{N_i}$ be the successive results of applying $\sigma_1, \ldots, \sigma_{N_i-1}$ starting with~$s_i$,
  then $u_1, \ldots, u_{N_i}$ is a $3d$-ordering of~$\stratum_\ell(L, i)$ starting at~$s_i$ and ending at~$e_i$.
\end{propositionrep}

\begin{proofsketch}
  We apply the spanning tree enumeration technique from
Lemma~\ref{lem:connected-enum} (in~$O(|\Gamma_i|)$) on~$\Gamma_i$, starting with~$v_{s_i}$ and
ending with~$v_{e_i}$, and read the scripts from the edge labels.
\end{proofsketch}
\begin{proof}
First, observe that by Proposition~\ref{prp:stratumconnected}, the
undirected graph corresponding to~$\Gamma_i$ is connected.
We then compute in linear time in~$\Gamma_i$ a spanning tree~$T$
of~$\Gamma_i$: this can indeed be done in linear time in our pointer machine
model using any standard linear-time algorithm for computing spanning trees, e.g., following
a DFS exploration, with the stack of the DFS being implemented as a linked list.
Notice that, importantly, the number of edges of~$\Gamma_i$, hence its size, is
linear in~$N_i$, since all nodes have bounded degree (namely, exponential in~$|A|$).

Next, we apply the enumeration technique described in the proof of
Lemma~\ref{lem:tree} on the tree~$T$ and starting node~$v_{s_i}$ and exit
node~$v_{e_i}$, which yields a sequence~$v_{s_i} = n_1 ,\ldots, n_{|N_i|} =
v_{e_i}$ enumerating all nodes of~$T$ exactly once, and where the distance
between any two consecutive nodes (in~$T$) is at most~$3$. The traversal
  can easily be implemented by a recursive algorithm: we prepare an output
  doubly linked list containing the nodes to enumerate, and when we enumerate a
  node in the algorithm, we append it to the end of this linked list, so that, at
  the end of the algorithm, the $3$-ordering of the nodes of the graph is stored
  in that linked list.

  We must simply justify that the recursive algorithm can indeed be implemented
  in our machine model. For this, we store the recursion stack in a 
  linked list. To make this explicit, we can say that the stack
  contains two kinds of pairs, where $n$ is a node of~$T$:
  \begin{itemize}
    \item $(\explore,n)$, meaning ``call the exploration function on node $n$'';
      and
    \item $(\enum,n)$, meaning ``append $n$ to the output linked list''.
  \end{itemize}
  Initially, the stack contains only $(\explore, v_{s_i})$ for $v_{s_i}$ the
  root of~$T$. At each step, we pop the first pair of the stack and do what it
  says, namely:
  \begin{itemize}
    \item If it is $(\explore,n)$, we call the exploration function on~$n$;
    \item If it is $(\enum,n)$, we append~$n$ to the output list;
    \item If the stack is empty, the exploration is finished.
  \end{itemize}
  The recursive exploration function can then be implemented as follows:
\begin{itemize}
\item When we explore a node~$n$ that is at even depth, we push at the beginning
  of the recursion stack the following list of elements in order: $(\enum,n)$,
    $(\explore,n_1')$, $\ldots$,
    $(\explore,n_m')$ for $n_1', \ldots, n_m'$ the children of~$n$;
\item When we explore a node~$n$ at odd depth that is not special, we push at the
  beginning of the recursion stack the following list of elements in order:
$(\explore,n'_1)$, $\ldots$, $(\explore,n'_m)$, $(\enum,n)$
 for $n_1', \ldots, n_m'$ the children of~$n$;
\item When we explore a node~$n$ at odd depth that is special and has~$\geq 1$ children~$n'_1\ldots n'_m$, 
  we push at the beginning of the recursion stack the following list in order:
$(\explore,n'_1)$, $\ldots$, $(\explore,n'_{m-1})$, $(\enum,n)$, $(\explore,n'_m)$.
\end{itemize}
Once we have the sequence~$v_{s_i} = n_1,\ldots, n_{|N_i|} = v_{e_i}$ produced
in the output doubly linked list, the last step is to 
compute in linear time the sequence $\sigma_1, \ldots, \sigma_{N_i-1}$ of edit
scripts. We determine each edit script by reading the edge labels of~$\Gamma_i$;
formally, to determine~$\sigma_j$, we simply re-explore~$\Gamma_i$ from~$n_j$ at
distance~$3$, which is in time exponential in~$|A|$, and when we find~$n_{j+1}$ we
concatenate the labels of the (directed) edges of~$\Gamma_i$ that lead from~$n_j$
to~$n_{j+1}$. This indeed gives us what we wanted and concludes the proof.
\end{proof}

In the rest of the section we present the first component of
our enumeration algorithm:

\begin{toappendix}
  \subsection{First component: Proof of Proposition~\ref{prp:worddag}}
In the rest of this appendix, we will present the formal details of the first component of
our enumeration algorithm, that is:
\end{toappendix}
\begin{propositionrep}
  \label{prp:worddag}
  There is an integer $C \in \NN$ exponential in~$|A|$
  such that we can produce a stratum graph sequence $(\Gamma_1, v_{s_1}, v_{e_1}), (\Gamma_2, v_{s_2}, v_{e_2}), \ldots$ for~$L$ in amortized linear time, i.e., for each $i \geq 1$, after having run $C \sum_{j=1}^i N_j$ steps, the algorithm is done preparing~$(\Gamma_i, v_{s_i},v_{e_i})$.
\end{propositionrep}

\subparagraph*{Word DAGs.}
The algorithm to prove Proposition~\ref{prp:worddag} will grow a large structure in memory, common to all strata, from which we can easily compute the~$(\Gamma_i,v_{s_i}, v_{e_i})$. We call this structure a \emph{word DAG}.
A word DAG is informally a representation of a collection of words, each of
which has outgoing edges corresponding to the possible left and right push operations.

\begin{definition}
  Let $\Lambda \colonequals \{\pushR(a) \mid a \in \Sigma\} \cup \{\pushL(a) \mid a \in \Sigma\}$ be the set of labels corresponding to the possible left and right push operations.
  A \emph{pre-word DAG} is an edge-labeled directed acyclic graph (DAG) $G = (V, \eta, \rout)$ where $V$ is a set of anonymous vertices, $\rout \in V$ is the \emph{root}, and
  $\eta\colon V\times \Lambda \to V \cup \{\bot\}$ represents the labeled edges in the following way:
  for each node $v \in V$ and label $s\in\Lambda$, if $\eta(v, s) \neq \bot$ then~$v$ has one successor $\eta(v, s)$ for label~$s$, and none otherwise. We impose:
  \begin{itemize}
    \item The root has no incoming edges. All other nodes have exactly two incoming edges: one labeled $\pushR(a)$ for
    some $a \in \Sigma$, the other labeled~$\pushL(b)$ for some~$b \in \Sigma$. Each node stores two pointers leading to these two parents, which may be identical.
    \item All nodes can be reached from the root via at least one directed path.
      \item The root has one outgoing edge for each child, i.e., for all $s \in \Lambda$, we have $\eta(\rout, s) \neq \bot$.
  \end{itemize}
  The \emph{word} represented by a directed path from the root to a node~$n$ is
  defined inductively:
  \begin{itemize}
  \item the word represented by the empty path is $\epsilon$,
  \item the word represented by a path $P, \pushR(a)$ is $wa$ where $w$ is the word represented
  by~$P$,
  \item the word represented by a path $P, \pushL(a)$ is $aw$ where $w$ is the word represented
  by~$P$.
  \end{itemize}
  The pre-word DAG $G$ is called a \emph{word DAG} if
  for each node $n$, all paths from~$\rout$ to~$n$ represent the
  same word. This word is then called \emph{the word represented by~$n$}.
\end{definition}

\begin{toappendix}
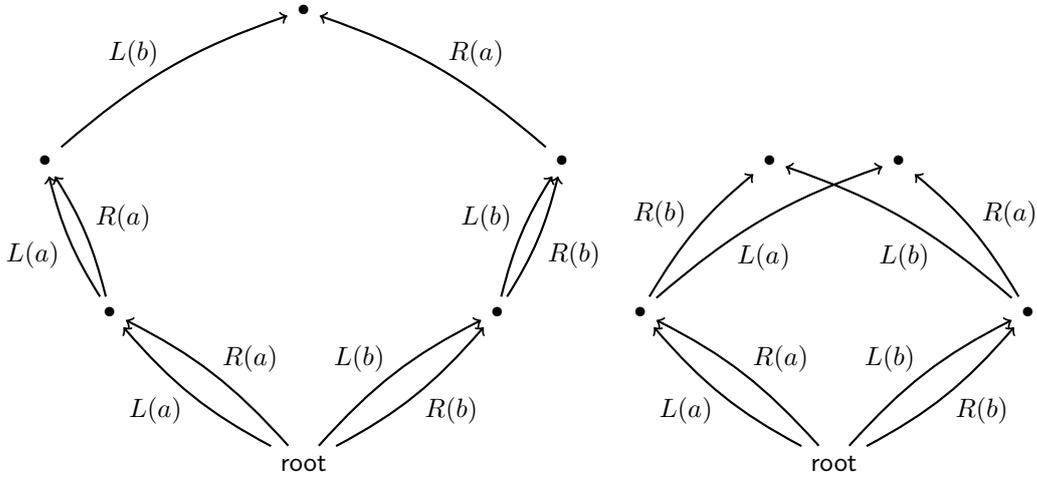
\begin{figure}
    \hfill
    \begin{tikzpicture}[xscale=.85]
      \node (root) at (0, 0) {$\rout$};
      \node (na) at (-3, 2) {$\bullet$};
      \node (nb) at (3, 2) {$\bullet$};
      \node (naa) at (-4, 4) {$\bullet$};
      \node (nbb) at (4, 4) {$\bullet$};
      \node (nx) at (0, 6) {$\bullet$};
      \draw[thick,->] (root) to [bend right=10] node[right,yshift=.2cm] {$R(a)$} (na);
      \draw[thick,->] (root) to [bend left=10] node[left,yshift=-.2cm] {$L(a)$} (na);

      \draw[thick,->] (root) to [bend right=10] node[right,yshift=-.2cm] {$R(b)$} (nb);
      \draw[thick,->] (root) to [bend left=10] node[left,yshift=.2cm] {$L(b)$} (nb);
      \draw[thick,->] (na) to [bend right=10] node[right,yshift=.2cm] {$R(a)$}
      (naa);
      \draw[thick,->] (na) to [bend left=10] node[left,yshift=-.2cm] {$L(a)$}
      (naa);

      \draw[thick,->] (nb) to [bend right=10] node[right,yshift=-.2cm] {$R(b)$}
      (nbb);
      \draw[thick,->] (nb) to [bend left=10] node[left,yshift=.2cm] {$L(b)$}
      (nbb);
      \draw[thick,->] (naa) to [bend left=10] node[left,yshift=.2cm] {$L(b)$} (nx);
      \draw[thick,->] (nbb) to [bend right=10] node[right,yshift=.2cm] {$R(a)$}
      (nx);
    \end{tikzpicture}
    \hfill
    \begin{tikzpicture}[xscale=.85]
      \node (root) at (0, 0) {$\rout$};
      \node (na) at (-3, 2) {$\bullet$};
      \node (nb) at (3, 2) {$\bullet$};
      \node (nab) at (-1, 4) {$\bullet$};
      \node (nab2) at (1, 4) {$\bullet$};
      \draw[thick,->] (root) to [bend right=10] node[right,yshift=.2cm] {$R(a)$} (na);
      \draw[thick,->] (root) to [bend left=10] node[left,yshift=-.2cm] {$L(a)$} (na);

      \draw[thick,->] (root) to [bend right=10] node[right,yshift=-.2cm] {$R(b)$} (nb);
      \draw[thick,->] (root) to [bend left=10] node[left,yshift=.2cm] {$L(b)$} (nb);
      \draw[thick,->] (na) to [bend left=10] node[left,yshift=.2cm] {$R(b)$} (nab);
      \draw[thick,->] (na) to [bend left=10] node[below,yshift=-.2cm] {$L(a)$} (nab2);
      \draw[thick,->] (nb) to [bend right=10] node[right,yshift=.2cm] {$R(a)$} (nab2);
      \draw[thick,->] (nb) to [bend right=10] node[below,yshift=-.2cm] {$L(b)$} (nab);
    \end{tikzpicture}
    \hfill

    \caption{Two example pre-word DAGs which are not word DAGs.
    The labels $\pushL$ and
    $\pushR$ are abbreviated for legibility.
    In the left pre-word DAG,
    the four paths to the top node that start to the left of the root all represent the word $baa$, whereas the four paths to that same node that start to the right of the root all represent the word $bba$. 
    In the right pre-word DAG, the left topmost node represents $ab$ and $bb$
    and the right topmost node represents $aa$ and $ba$. The criteria of word
    DAGs, and our construction to enlarge them, are designed to prevent these problems.}
    \label{fig:worddagnot}
\end{figure}
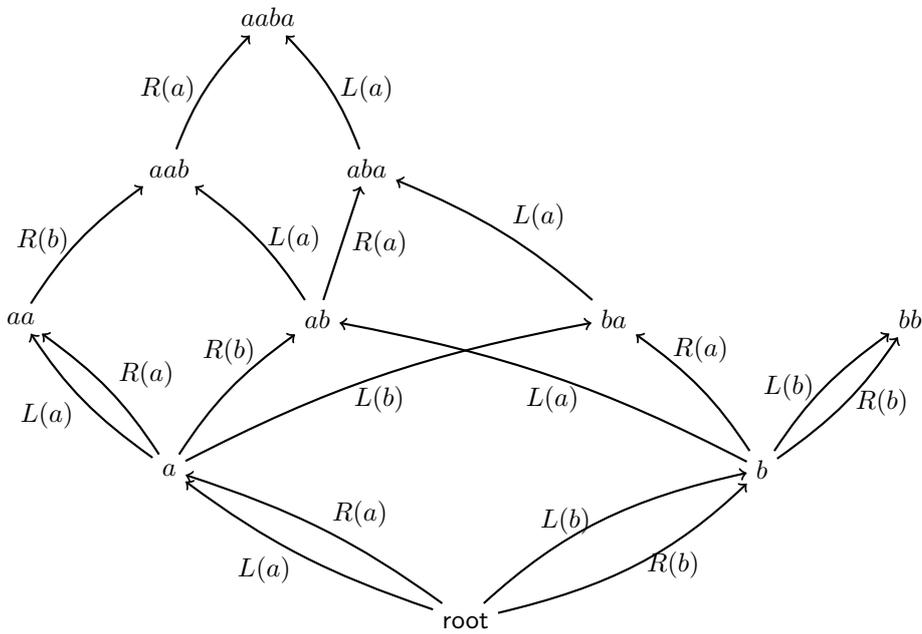
\begin{figure}
    \centering
    \begin{tikzpicture}[xscale=1.3]
      \node (root) at (0, 0) {$\rout$};
      \node (na) at (-3, 2) {$a$};
      \node (nb) at (3, 2) {$b$};
      \draw[thick,->] (root) to [bend right=10] node[right,yshift=.2cm] {$R(a)$} (na);
      \draw[thick,->] (root) to [bend left=10] node[left,yshift=-.2cm] {$L(a)$} (na);

      \draw[thick,->] (root) to [bend right=17] node[right] {$R(b)$} (nb);
      \draw[thick,->] (root) to [bend left=17] node[left] {$L(b)$} (nb);

      \node (naa) at (-4.5, 4) {$aa$};
      \node (nab) at (-1.5, 4) {$ab$};
      \node (nba) at (1.5, 4) {$ba$};
      \node (nbb) at (4.5, 4) {$bb$};

      \draw[thick,->] (na) to [bend right=10] node[right,yshift=.2cm] {$R(a)$}
      (naa);
      \draw[thick,->] (na) to [bend left=10] node[left,yshift=-.2cm] {$L(a)$}
      (naa);

      \draw[thick,->] (nb) to [bend right=10] node[right] {$R(b)$} (nbb);
      \draw[thick,->] (nb) to [bend left=10] node[left] {$L(b)$} (nbb);

      \draw[thick,->] (na) to [bend left=10] node[above,yshift=6] {$R(b)$} (nab);
      \draw[thick,->] (na) to [bend left=10] node[below] {$L(b)$} (nba);
      \draw[thick,->] (nb) to [bend right=10] node[above,yshift=6] {$R(a)$} (nba);
      \draw[thick,->] (nb) to [bend right=10] node[below] {$L(a)$} (nab);

      \node (naab) at (-3, 6) {$aab$};
      \node (naba) at (-1, 6) {$aba$};
      \node (naaba) at (-2, 8) {$aaba$};

      \draw[thick,->] (nab) to [bend right=10] node[right] {$L(a)$} (naab);
      \draw[thick,->] (nab) to  node[right] {$R(a)$} (naba);
      \draw[thick,->] (nba) to [bend right=10] node[right,yshift=6] {$L(a)$} (naba);
      \draw[thick,->] (naa) to [bend left=10] node[left] {$R(b)$} (naab);
      
      \draw[thick,->] (naab) to [bend left=10] node[left] {$R(a)$} (naaba);
      \draw[thick,->] (naba) to [bend right=10] node[right] {$L(a)$} (naaba);
    \end{tikzpicture}
    \caption{An example word DAG. We annotate the nodes with the word that they
    represent, even though in the memory representation the nodes are anonymous
    and the words are not represented. The labels $\pushL$ and $\pushR$ are
    abbreviated for legibility.}
    \label{fig:worddag}
\end{figure}
\end{toappendix}

Example pre-word DAGs and word DAGs are shown on
Figures~\ref{fig:worddagnot} and~\ref{fig:worddag} in the appendix. In our machine model, each node is represented by a record; crucially, like for stratum graphs, the word that the node represents is not explicitly written.

Crucially, word DAGs do not us allow not to create two
different nodes that represent the same word -- these 
would be problematic since we have to enumerate
without repetition.  

\begin{toappendix}
  To show Proposition~\ref{prp:worddag}, we first prove
  Fact~\ref{fct:noduplicate}:
\end{toappendix}

\begin{factrep}
\label{fct:noduplicate}
There are no two different nodes in a word DAG that represent the same word.
\end{factrep}
\begin{proof}
  Let us write $G = (V,\eta,\rout)$.
Assume by way of contradiction that some word~$w$ is represented by two
distinct nodes~$n_1\neq n_2$ of the word DAG, and take~$|w|$ to be minimal.
  Then~$w$ cannot be the empty word because~$\rout$ is the only node reachable
  by a empty path from~$\rout$, so that $n_1 \neq \rout$ and $n_2 \neq \rout$.
Thus, we can write~$w$ as~$w=aw'$ for~$a\in \Sigma$.  Now, consider
the~$\pushL(a)$-predecessors~$n'_1$ and~$n'_2$ of~$n_1$ and~$n_2$, respectively
(which must exist because~$n_1$ and~$n_2$ are not~$\rout$).
  Then, taking any path from~$\rout$ to~$n'_1$ (resp., to~$n'_2$) and continuing it to a
  path to~$n_1$ (resp., to~$n_2$), we know that $n_1$ (resp., $n_2$) must
  represent~$w'$. As $|w'| < |w|$, by minimality of~$w$, we must have $n_1' =
  n_2'$. But then this node has two distinct successors $n_1$ and $n_2$ for the same label
  $\pushL(a)$, a contradiction.
\end{proof}

We can then show the following theorem, intuitively saying
that we can discover all the words of the language by only visiting words that
are not too far from it:

\begin{toappendix}
We then prove Proposition~\ref{prp:worddag2}, whose statement we recall here:
\end{toappendix}

\begin{propositionrep}
\label{prp:worddag2}
  We can build a word DAG $G$ representing the words
of~$L$ in amortized linear time: specifically, for some value $C$ that is exponential in~$|A|$, for
all~$i$, after $C\times \sum_{j=1}^i N_j$ computation steps,  for each word~$w$
of~$\Sigma^*$ whose push-pop distance to a word
of~$\bigcup_{j=1}^i\stratum_\ell(L,j)$ is no greater than~$d$, then $G$ contains a
node that represents~$w$. Moreover, there is also a value~$D$ exponential in~$|A|$ such that any
node that is eventually created in the word DAG represents a word that is at
push-pop distance at most~$D$ from a word of~$L$.
\end{propositionrep}
\begin{proofsketch}
  We progressively add nodes to a word DAG while 
efficiently preserving its properties, and thus avoid creating duplicate nodes.
By labeling each node with the element of~$Q\cup \{\bot\}$ achieved by the word
represented by that node, and also by the distance to the closest known word
of~$L$, we can restrict the exploration to nodes corresponding to words that
are close to the words of~$L$, which ensures the amortized linear time bound.
\end{proofsketch}

This is enough to prove Proposition~\ref{prp:worddag}: we run the algorithm of Proposition~\ref{prp:worddag2} and, whenever it has built a stratum, construct the stratum graph~$\Gamma_i$ and nodes~$v_{s_i},v_{e_i}$ by exploring the relevant nodes of the word DAG. 
Full proofs are deferred to the appendix.

\begin{toappendix}
At a high level, we will prove Proposition~\ref{prp:worddag2} by first explaining
how to enlarge word DAGs, which would allow us to use them to represent all
words of~$\Sigma^*$; and then explaining how we can specifically build them to
efficiently reach the words that are close to~$L$.

But before doing so, we first make two simple observations about word DAGs that
will be useful for the rest of the proof. Here is the first one:

\begin{claim}
  \label{clm:nobad1}
  In a word DAG, we cannot have a node $n$ with edges $\pushL(a)$ and $\pushR(b)$ to the same node~$n'$ with
  $a \neq b$.
\end{claim}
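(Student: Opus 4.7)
The plan is to derive a contradiction by using the defining property of word DAGs, namely that all directed paths from~$\rout$ to a given node represent the same word. First, I would fix some directed path~$P$ from~$\rout$ to~$n$ (which exists by the reachability condition) and let~$w$ be the word it represents, so that by the word DAG property~$w$ is the word represented by~$n$. Then, extending~$P$ by the $\pushL(a)$-edge to~$n'$ gives a path representing~$aw$, and extending~$P$ by the $\pushR(b)$-edge to~$n'$ gives a path representing~$wb$. Since $G$ is a word DAG, both of these paths represent the word of~$n'$, so we must have $aw = wb$.

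Next, I would rule out the equality $aw = wb$ under the assumption $a \neq b$ by a short case analysis on~$w$. If $w = \epsilon$, then $aw = a$ and $wb = b$ as single-letter words, contradicting $a \neq b$ immediately. If $w = w_1 \cdots w_k$ with $k \geq 1$, comparing the first letters in $aw = wb$ gives $a = w_1$, comparing the last letters gives $w_k = b$, and comparing intermediate positions gives $w_i = w_{i-1}$ for all $2 \leq i \leq k$, so all letters of~$w$ are equal, forcing $a = w_1 = w_k = b$, again a contradiction.

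I do not expect any real obstacle here: the claim is essentially a direct consequence of the unique-word property of word DAGs combined with the elementary observation that the equation $aw = wb$ over words has no solution when $a \neq b$. The only thing to be slightly careful about is to invoke the word DAG property (not merely the pre-word DAG structure) so that both paths to~$n'$ are guaranteed to represent the same word.
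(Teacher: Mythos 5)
Your proposal is correct and follows essentially the same route as the paper: both reduce the claim to the word equation $aw = wb$ via the unique-word property of word DAGs, and then show this equation has no solution when $a \neq b$. The only (immaterial) difference is that the paper establishes the non-solvability of $aw=wb$ by induction on $|w|$ (peeling off the first and last letters), whereas you do it by a direct position-by-position comparison showing all letters of~$w$ must equal both $a$ and~$b$.
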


\begin{proof}
  Letting $v$ be the word captured by~$n$,
  the node $n'$ would witness that we have $av = vb$. This equation is clearly
  unsatisfiable as
  $a \neq b$ so the number of occurrences of~$a$ and~$b$ in the left-hand-side
  and right-hand-side are clearly different.
\end{proof}

Our second observation is that we can have a node $n$ with edges $\pushL(a)$ and $\pushR(a)$ to the same node~$n'$, but
this happens if and only if $n$ (and therefore $n'$) captures a
word of the form $a^i$ for some $i\geq 0$:
\begin{claim}
  \label{clm:nobad2}
  If a node $n$ has edges $\pushL(a)$ and $\pushR(a)$ to the same node~$n'$, then $n$ (and
  therefore $n'$) capture a power of~$a$.
\end{claim}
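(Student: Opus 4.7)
The plan is to translate the graph-theoretic hypothesis into an equation on words, using the defining property of a word DAG, and then solve that equation by a short induction.

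More precisely, let $w$ be the word represented by~$n$, and consider the two directed paths from~$\rout$ to~$n'$ obtained by taking an arbitrary path from~$\rout$ to~$n$ and extending it by the $\pushL(a)$ edge or by the $\pushR(a)$ edge. By definition of the word DAG, both paths must represent the same word. Applying the inductive definition of the word represented by a path, one extension represents $aw$ and the other represents $wa$. Hence $aw = wa$ as words over~$\Sigma$.

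The remaining task is to show that, for any letter~$a \in \Sigma$, the only solutions of the word equation $aw = wa$ are $w = a^i$ for $i \geq 0$. I would do this by induction on~$|w|$. The base case $|w| = 0$ is trivial. For the induction step, comparing first letters of $aw$ and $wa$ forces $w$ to start with~$a$, so write $w = a w'$; substituting gives $a^2 w' = a w' a$, which simplifies to $a w' = w' a$, so by the induction hypothesis $w'$ is a power of~$a$, and hence so is~$w$. Thus $n$ represents a power of~$a$, and since the word represented by~$n'$ is the same as $aw$, the node $n'$ also represents a power of~$a$.

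No step looks like a real obstacle: the key observation is simply that the word DAG axiom (all paths from the root to a given node represent the same word) makes the two outgoing edges into an equation $aw = wa$, after which the classical characterization of commutation in the free monoid finishes the argument.
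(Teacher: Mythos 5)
Your proof is correct and follows essentially the same approach as the paper: both reduce the claim to the word equation $aw = wa$ via the word-DAG axiom and solve it by induction on length. The only cosmetic difference is that you peel off a single leading letter per induction step, whereas the paper peels a letter from both ends (and hence needs an extra length-$1$ base case); your variant is, if anything, marginally cleaner.
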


\begin{proof}
  Letting $v$ the word captured by~$n$, 
  the node $n'$ witnesses that we have $av = va$. We show by induction on length
  that $v = a^i$. The base case is trivial for a length of~$0$, and for a length
  of~$1$ indeed the only solution is~$a$. For the induction, if we have a
  solution of length $n+1$, then it starts and ends with $a$, i.e., it is of the
  form $a v_2 a$ with $v_2$ of length~$n-1$. Injecting this in the equation, we get $a^2 v_2 a = a v_2
  a^2$, and simplifying $a v_2 = v_2 a$, i.e., $v_2$ satisfies the same
  equation. By induction hypothesis we have $v_2 = a^{n-1}$, and then $v =
  a^{n+1}$, establishing the induction case and concluding the proof.
\end{proof}

We are ready to explain how we enlarge word DAGs.

\subparagraph*{Enlarging a word DAG.}
We will start from the \emph{initial word DAG}, which we define to be the word
DAG 
whose nodes are~$\{\rout\}\cup \{v_a\mid a\in \Sigma\}$ and where each~$v_a$
is both the~$\pushR(a)$ and the~$\pushL(a)$-successor of~$\rout$ for all~$a\in
\Sigma$.
For instance, the initial word DAG for alphabet~$\Sigma = \{a,b,c\}$ is shown
in Figure~\ref{fig:initial-worddag}. It is clear that this is indeed a word DAG.

\begin{figure}
    \centering
    \begin{tikzpicture}
      \node (root) at (0, 0) {$\rout$};
      \node (na) at (-5, 2) {$\bullet$};
      \node (nb) at (0, 2) {$\bullet$};
      \node (nc) at (5, 2) {$\bullet$};
      \draw[thick,->] (root) to [bend right=10] node[right,yshift=.2cm] {$R(a)$} (na);
      \draw[thick,->] (root) to [bend left=10] node[left,yshift=-.2cm] {$L(a)$} (na);

      \draw[thick,->] (root) to [bend right=17] node[right] {$R(b)$} (nb);
      \draw[thick,->] (root) to [bend left=17] node[left] {$L(b)$} (nb);

      \draw[thick,->] (root) to [bend right=10] node[right,yshift=-.2cm] {$R(c)$} (nc);
      \draw[thick,->] (root) to [bend left=10] node[left,yshift=.2cm] {$L(c)$} (nc);
    \end{tikzpicture}
    \caption{The initial word DAG for alphabet~$\{a,b,c\}$, with $\pushL$ and
    $\pushR$ respectively shortened to~$L$ and~$R$ for brevity.}
    \label{fig:initial-worddag}
\end{figure}
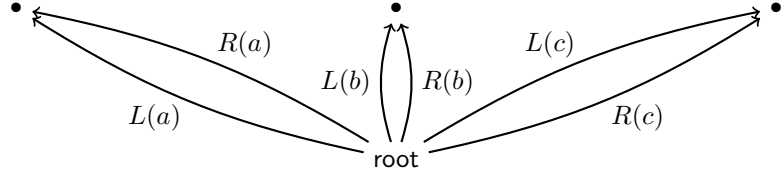

\begin{definition}
\label{def:complete}
Given a word DAG $G$ and a node~$n$ of~$G$, we say that~$n$ is \emph{complete}
if it has $2|\Sigma|$ 
outgoing edges; and \emph{incomplete} otherwise.
\end{definition}

For instance, the root of the initial word DAG is complete, but its children
are not.  We next explain how to complete an incomplete node.

\begin{definition}
\label{def:completion}
For a node~$n$ of~$G$ which is incomplete, we define the
completion $G\uparrow n$ of~$G$ on~$n$ as follows. 

First, for each label $\pushL(a)$ with~$a \in
\Sigma$ for which $n$ has no outgoing edge, we do what follows,
which is illustrated in Figure~\ref{fig:completion} for different cases.
We consider the successive strict
ancestors $n_1, \ldots, n_m$ of~$n$ for labels of the form $\pushR(b)$ for $b \in
\Sigma$ which do not have an $\pushL(a)$-successor (there may be none, in which case~$m=0$). We also consider $n_{m+1}$ the
closest ancestor of~$n$ via labels of the form~$\pushR(b)$ that has
  a~$\pushL(a)$ child, and call this child~$n_{m+1}'$.
  Notice that~$n_{m+1}$
is well-defined (because the root has a successor for each label).
Notice further that if~$m \geq 1$ then~$n_m$ is a~$\pushR(b)$-child of~$n_{m+1}$, and
if~$m=0$ then~$n$ itself is the~$\pushR(b)$-child of~$n_{m+1}$.
We let $\pushR(b_1),
\ldots, \pushR(b_m), \pushR(b_{m+1})$ be the labels of the edges to $n, n_1, \ldots, n_m$ along this path.

We then create $\pushL(a)$-successors $n_1', \ldots, n_m'$ of $n_1, \ldots, n_m$
respectively, and set the other predecessor of each of them to be, respectively,
$n_2$ with label $\pushR(b_2)$, $\ldots$,
  $n'_{m+1}$ with label $\pushR(b_{m+1})$,
We call these newly created nodes \emph{pillar nodes}.
Then we create a new node $n'$, and set it as the
$\pushL(a)$-successor of~$n$ 
and the $\pushR(b_1)$-successor of~$n_1'$.
Note that, in the case where $n_1' = n$ (which can only happen when~$m=0$ and
$b_1 = a$ by Claim~\ref{clm:nobad1}), we handle two labels at once, i.e., we
set~$n'$ as the $\pushL(a)$- and $\pushR(a)$-child of~$n$ as illustrated in
Figure~\ref{fig:compl-a3}.

Second, for each label $\pushR(a)$ with~$a \in \Sigma$ for which~$n$ has no
outgoing edge, we do the corresponding operation on a path of labels of the
form $\pushL(b)$, exchanging the role of the two kinds of labels.
\end{definition}

\begin{figure}
\begin{minipage}{\linewidth}
\begin{minipage}[t]{.4\linewidth}
\begin{subfigure}[t]{\linewidth}
    \begin{tikzpicture}
      \node (n3) at (0, 0) {$n_3$};
      \node (n2) at (2, 2) {$n_2$};
      \node (n1) at (4, 4) {$n_1$};
      \node (n) at (6, 6) {$n$};
      \node (np) at (4, 8) {$n'$};
      \node (np3) at (-2, 2) {$n_3'$};
      \node (np2) at (0, 4) {$n_2'$};
      \node (np1) at (2, 6) {$n_1'$};
      \draw[thick,->] (n3) -> (n2) node[right,pos=.4] {$R(b_3)$};
      \draw[thick,->] (n2) -> (n1) node[right,pos=.4] {$R(b_2)$};
      \draw[thick,->] (n1) -> (n) node[right,pos=.4] {$R(b_1)$};

      \draw[thick,->] (n3) -> (np3) node[left,pos=.4] {$L(a)$};
      \draw[thick,dashed,->] (n2) -> (np2) node[left,pos=.35] {$L(a)$};
      \draw[thick,dashed,->] (n1) -> (np1) node[left,pos=.35] {$L(a)$};
      \draw[thick,dashed,->] (n) -> (np) node[left,pos=.35] {$L(a)$};

      \draw[thick,dashed,->] (np3) -> (np2) node[right,pos=.35] {$R(b_3)$};
      \draw[thick,dashed,->] (np2) -> (np1) node[right,pos=.35] {$R(b_2)$};
      \draw[thick,dashed,->] (np1) -> (np) node[right,pos=.35] {$R(b_1)$};
    \end{tikzpicture}
    \caption{Case~$m=2$ and~$a\neq b_1$. The pillar nodes are~$n'_2$ and~$n'_1$.}
    \label{fig:compl-a1}
  \end{subfigure}
\end{minipage}
\begin{minipage}[t]{.25\linewidth}
\begin{subfigure}[t]{\linewidth}
    \hspace{1cm}\begin{tikzpicture}
      \node (n1) at (4, 4) {$n_1$};
      \node (n) at (6, 6) {$n$};
      \node (np) at (4, 8) {$n'$};
      \node (np1) at (2, 6) {$n_1'$};
      \draw[thick,->] (n1) -> (n) node[right,pos=.35] {$R(b_1)$};
      \draw[thick,->] (n1) -> (np1) node[left,pos=.35] {$L(a)$};
      \draw[thick,dashed,->] (n) -> (np) node[left,pos=.35] {$L(a)$};
      \draw[thick,dashed,->] (np1) -> (np) node[right,pos=.35] {$R(b_1)$};
    \end{tikzpicture}
    \caption{Case~$m=0$ and~$a\neq b_1$. There is no pillar node.}
    \label{fig:compl-a2}
\end{subfigure}
\end{minipage}
\begin{minipage}[t]{.25\linewidth}
\begin{subfigure}[t]{\linewidth}
    \begin{tikzpicture}
      \node (n1) at (0, 4) {$n_1$};
      \node (n) at (0, 6) {$n$};
      \node (np) at (0, 8) {$n'$};
      \draw[thick,->] (n1) to [bend right=45] node[right] {$R(a)$} (n);
      \draw[thick,->] (n1) to [bend left=45] node[left] {$L(a)$} (n);
      \draw[thick,dashed,->] (n) to [bend left=45] node[left] {$L(a)$} (np);
      \draw[thick,dashed,->] (n) to [bend right=45] node[right] {$R(a)$} (np);
    \end{tikzpicture}
    \caption{Case~$m=0$ and~$a = b_1$. There is no pillar node.}
    \label{fig:compl-a3}
\end{subfigure}
\end{minipage}
\end{minipage}
    \caption{Figures illustrating the completion procedure from
Definition~\ref{def:completion} for a label of the form~$\pushL(a)\in \Sigma$,
in the different possible cases.  Only the nodes of the word DAG that are
mentioned in the procedure are drawn. Dotted edges, and nodes that are the
successors of dotted edges, are created by the completion procedure illustrated. The
case~$m \geq 1$ and~$a=b_1$ cannot occur by definition of~$n_m$.}
    \label{fig:completion}
\end{figure}
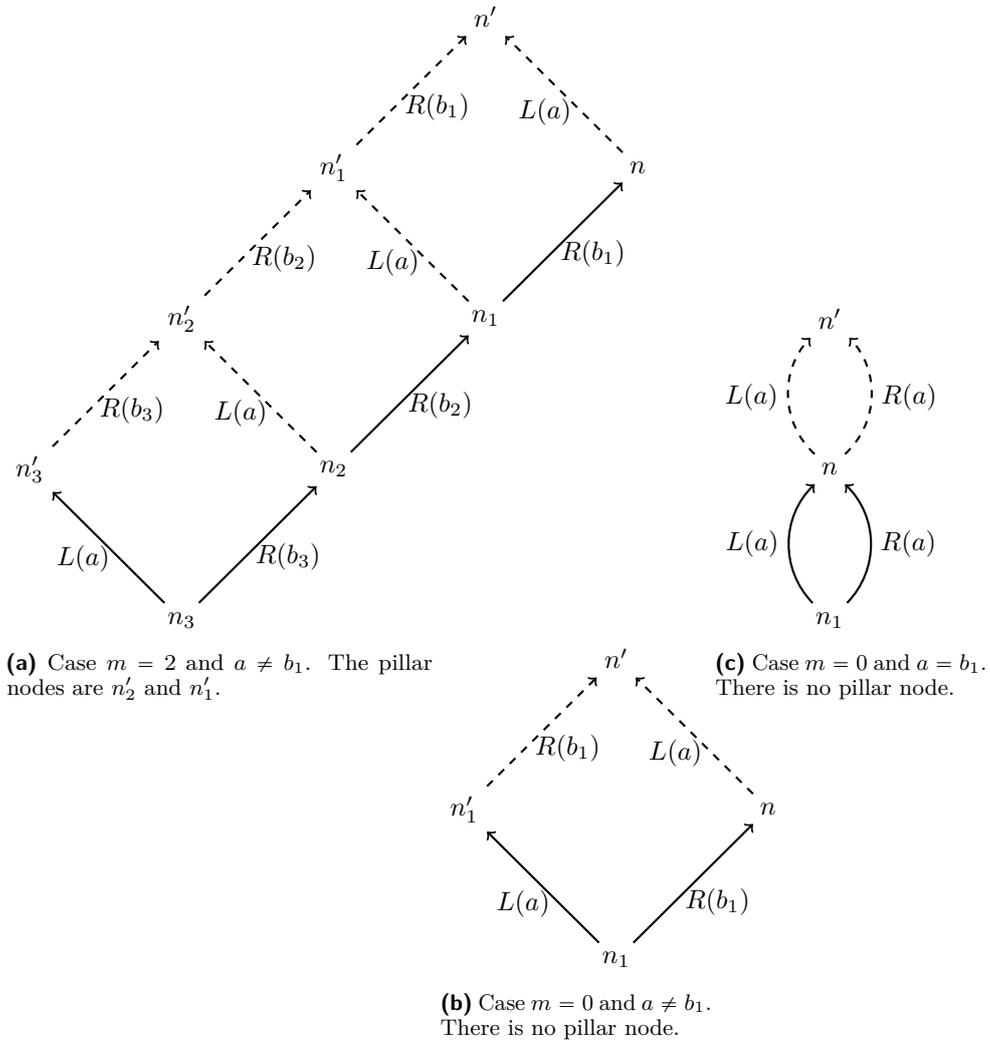

Thus, when completing an incomplete node~$n$, two types of nodes are added to
the word DAG: children of~$n$, and pillar nodes (possibly for several labels).
Further, notice that during this process, it is possible that another node
has been completed; for instance, this could happen to~$n'_1$ from Figure~\ref{fig:compl-a3}.

We prove next that this completion procedure does not break the properties of
word DAGs, and moreover that it can be performed in linear time with respect to
the number of newly created nodes (including pillar nodes). Note that the number
of pillar nodes created may be arbitrarily large, and it is the reason why the
complexity in Proposition~\ref{prp:worddag2} is stated as amortized linear time. We
do not see a way to avoid this, as the existence of all predecessors of a node
seems crucial to continue the construction and ensure that we do not create
multiple nodes for the same word (i.e., that we respect the properties of word
DAGs).

Let us show these results:

\begin{lemma}
\label{lem:complete-OK}
  For any word DAG $G$ and incomplete node~$n$ of~$G$, then $G
  \uparrow n$ is a word DAG. Further, $G \uparrow n$ has at least one additional
  node, and the running time to build it is proportional to the number of new
  nodes that are created.
\end{lemma}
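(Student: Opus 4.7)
The plan is to treat each missing outgoing label of $n$ separately (the $\pushR(a)$ case being symmetric to $\pushL(a)$), and to verify in turn the pre-word DAG axioms, the word-consistency invariant, strict growth, and the running-time bound. Adopt the notation of Definition~\ref{def:completion} for a fixed missing label $\pushL(a)$.

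First I would check structure. The only nodes whose incoming-edge counts change are the freshly created $n'_1, \ldots, n'_m$ and $n'$; by construction each receives exactly two incoming edges, one labeled $\pushL(a)$ and one labeled $\pushR(b_{j+1})$ for a pillar, or $\pushL(a)$ from $n$ and $\pushR(b_1)$ from $n'_1$ for $n'$. The degenerate coincidence $n'_1 = n$ can only arise when $m = 0$ and $b_1 = a$ by Claim~\ref{clm:nobad1}, and in that case Claim~\ref{clm:nobad2} identifies $n$ as representing a power of~$a$, so setting $n'$ as the common $\pushL(a)$- and $\pushR(a)$-child of~$n$ is coherent. Existing nodes keep their incoming edges, the root keeps all its outgoing edges, reachability from $\rout$ propagates to each new node through its ancestors, and acyclicity is preserved because new edges only target freshly created nodes.

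The main obstacle is verifying the word-consistency invariant. Writing $w(\cdot)$ for the word a node represents in $G$, the ancestor chain yields $w(n) = w(n_{m+1})\, b_{m+1} \cdots b_1$, and by assumption $w(n'_{m+1}) = a \cdot w(n_{m+1})$. A downward induction on $j$ shows that both edges into each pillar $n'_j$ name the same word $a \cdot w(n_j)$: the $\pushL(a)$-edge from $n_j$ produces $a \cdot w(n_j)$ directly, while the $\pushR(b_{j+1})$-edge from $n'_{j+1}$ produces $w(n'_{j+1})\, b_{j+1} = a \cdot w(n_{j+1})\, b_{j+1} = a \cdot w(n_j)$; the same calculation closes the case of~$n'$. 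In parallel I must argue that no new node duplicates an existing one: if some preexisting $\tilde n$ already represented $a \cdot w(n_j)$, then by Fact~\ref{fct:noduplicate} its $\pushL(a)$-parent would have to equal $n_j$, contradicting the very property that placed $n_j$ on the ancestor chain, namely that it has no $\pushL(a)$-successor; the same applies to~$n'$.

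Since $n$ is incomplete, at least one label is missing, so $n'$ is always created and the DAG strictly grows. For the time bound, walking up the $\pushR$-chain from $n$ to the nearest ancestor with a $\pushL(a)$-successor takes $O(m+1)$ pointer dereferences, and allocating each fresh $n'_j$ together with wiring its two incoming pointers takes constant time; summing over the (at most $2|\Sigma|$) missing labels of~$n$ yields total work proportional to the number of newly created nodes. The consistency-and-uniqueness step is the delicate one: it is precisely what makes the extension canonical, forcing that any word DAG containing a node for $a \cdot w(n)$ must already contain the entire pillar chain $n'_1, \ldots, n'_m$, which is the structural reason why the amortized-linear cost cannot be avoided.
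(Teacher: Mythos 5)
Your proof is correct and follows essentially the same route as the paper's: verify the pre-word DAG axioms, then establish path-consistency by computing the word of each new node along the chain $n'_{m+1}, n'_m, \ldots, n'_1, n'$. Your extra non-duplication argument via Fact~\ref{fct:noduplicate} is sound but not needed for the lemma as stated, since the word-DAG definition only requires path-consistency and uniqueness of representatives then follows automatically from Fact~\ref{fct:noduplicate} applied to $G \uparrow n$.
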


\begin{proof}
We claim that the result is still a pre-word DAG. First, by definition all
newly created nodes have exactly two parents, one with a label of the form
$\pushL(a)$ and one with a label of the form~$\pushR(b)$ for some $a, b \in
\Sigma$.  Further, everything is still reachable from the root, and the root is
obviously still complete. Hence~$G \uparrow n$ is indeed a pre-word DAG.

We show next that it is also a word DAG, by
checking that this holds after each step
where we handle one missing outgoing label on~$n$, say $\pushL(a)$. 
Since all new paths in the word DAG after the operation end on one of the new nodes,
it suffices to check that every new node represents only one word. But it is clear by construction that any path from the root to a new node 
must go through one of the nodes $n_{m+1}'$, $n_m, \ldots, n_1, n$, which therefore all capture a unique word.
Specifically, letting $w$ be the word captured by~$n_{m+1}$, we know that
$n_{m+1}'$ captures $aw$, that $n_m$ captures $w b_{m+1}$, ..., that $n_1$
captures $w b_{m+1} \cdots b_2$, and $n$ captures $w b_{m+1} \cdots b_1$, and
the words captured by the new nodes are then defined in the expected way.

The claim about the running time is immediate, noting that the node was asserted
to be incomplete so there is at least one successor to create.
\end{proof}

Hence, one could in principle build the successive strata of~$L$ by starting
from the initial word DAG, keeping a FIFO queue (e.g., as a linked list) of the
nodes that are still incomplete, and progressively enlarging the word DAG with
this operation, thus eventually reaching all words of~$\Sigma^*$. The problem with this naive exploration strategy is that, for
most languages, it will not yield the amortized linear time bound that
we wanted. This is simply because the strata of our language~$L$ can be
much smaller than those of~$\Sigma^*$.

For this reason, we explain next how
to guide the construction of the word DAG to only reach words that are at
bounded push-pop distance from words of the
language.

From now on, to alleviate notation, we will often identify the nodes
of the word DAG with the words that they represent --- but we recall that the
words are not really written in the nodes of the word DAG, in particular
algorithms on word DAGs do not consider them.

\subparagraph*{Word DAGs for our regular language~$\bm{L}$.}
Intuitively, we would like to be able to efficiently decide if a node~$n$ of
the word DAG that we are building represents a word that is in the
language~$L = \L(A)$ or not. Indeed, informally, if we could do this, as we know
that every stratum is~$d$-connected by Proposition~\ref{prp:stratumconnected}
and has words that are close to words of adjacent strata, it would be enough to
only enlarge the word DAG by exploring the $d$-neighborhood of
words in~$L$.

To do this, for each node~$n$ of~$G$, letting~$w$ be the word represented
by~$n$, we will annotate~$n$ by the state~$q$ that is reached in the partial
run of $w$ from the initial state of~$A$ (or by~$\bot$ if this state is
undefined).  We can annotate the newly created nodes efficiently when we
enlarge a word DAG: for instance, if we have a $\pushR(a)$-successor~$n'$
of~$n$ in the word DAG, then we could simply annotate~$q'$ by~$\delta(q,a)$.
The same situation does not work if~$n'$ is a $\pushL(a)$-successor, but
we will always be able to define the annotation of new nodes by following the
$\pushR$-transitions from states with a known annotation.

We will also annotate each node of the word DAG by the distance
to the closest word of the language that is in the word DAG, if this distance
is~$< d$, or by~$\infty$ otherwise.
Moreover, for a technical reason that will become clear later, we will also
annotate every node of the word DAG with the modulo~$\ell$ of its \emph{depth},
i.e., the length of the unique word that the node represents.
Note that these values are all bounded by an exponential function of the input,
so they can be represented in our machine model.
We formalize this next.

\begin{definition}
An $A$-word DAG is a word DAG where each node~$n$ is additionally annotated with:
\begin{itemize}
   \item An element of~$Q\cup \{\bot\}$, called the \emph{state annotation};
   \item An integer in $\{0, \ldots, d-1\} \cup \{\infty\}$, called the \emph{distance annotation};
   \item An integer in~$\{0,\ldots,\ell-1\}$, called the \emph{modulo annotation}.
\end{itemize}
We then additionally require that:
\begin{itemize}
  \item  The state annotation represents the state that is reached
by reading~$w$ in~$A$, where~$w$ is the word represented by~$n$ (and~$\bot$ if
we do not reach any state). 
  \item For every node~$n$, letting $i$ be the shortest distance of an
undirected path between $n$ and a node representing a word of the language, the
integer labeling~$n$ is~$i$ if $i < d$ and $\infty$ otherwise.
  \item For every node~$n$, letting~$w$ be the word that it represents, then
the modulo annotation of~$n$ is congruent with~$|w|$ modulo~$\ell$.
\end{itemize}
We say that a node is \emph{successful} if $m$ corresponds to a word accepted
by the automaton, i.e., it is labeled by a final state. Equivalently, its distance
annotation is then~$0$.
\end{definition}

The \emph{initial~$A$-word DAG} is defined in the expected way by annotating the
initial word DAG with the correct elements. We show that these new annotations
can be maintained when we complete a node, in linear time in the number of newly
created nodes:

\begin{lemma}
\label{lem:wordexpand}
  If $G$ is an $A$-word DAG, for an incomplete node~$n$, we can extend $G$ to
  an $A$-word DAG also denoted~$G\uparrow n$ consisting of the word DAG
  $G\uparrow n$ and correctly updating the annotations. The running time is
  again proportional in the number of created nodes (recall that $A$ and~$d$ and~$\ell$ are
  fixed).
\end{lemma}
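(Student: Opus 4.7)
The plan is to reuse Lemma~\ref{lem:complete-OK} to obtain the underlying word DAG $G \uparrow n$ together with its newly created nodes in time proportional to the number~$k$ of created nodes, and then update each of the three annotations in turn.

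For the \emph{modulo annotation}, every new node has depth exactly one more than at least one already-annotated parent (either because the parent was already in~$G$, or because we process new nodes in order along the newly built pillar chains). So I simply set each new node's modulo annotation to that parent's annotation plus one, taken modulo~$\ell$, which is constant time per new node.

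For the \emph{state annotation}, the subtle case is $\pushL(a)$-completion, because $\delta$ does not let us compute the reached state from a $\pushL$-parent. The fix is to exploit the pillar structure of Definition~\ref{def:completion}: the node $n'_{m+1}$ lies in~$G$ and is already annotated, and the new pillars $n'_m, n'_{m-1}, \ldots, n'_1, n'$ are reached from $n'_{m+1}$ by a chain of $\pushR$-edges labeled $b_{m+1}, b_m, \ldots, b_1$. I then propagate the state annotation along this chain by applying~$\delta$, in time proportional to the chain length. The $\pushR(a)$-completion case is completely symmetric, starting from the $\pushR(a)$-successor of the top $\pushL$-ancestor of~$n$ and propagating along a $\pushL$-chain via the symmetric rule (the ``reached state'' is well-defined because we are really reading the letters of a word going down the chain from the root in both cases).

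For the \emph{distance annotation}, I first mark each new node whose state annotation is a final state as successful with distance~$0$, and initialize each other new node's annotation to one plus the minimum of its neighbors' current annotations (or $\infty$ if this would exceed $d-1$). I then run a relaxation-style BFS: whenever some node's annotation strictly decreases, I enqueue all its neighbors and re-check them. The main obstacle is to bound this in $O(k)$ time, since new nodes can create shortcuts that lower the annotations of pre-existing nodes. Two observations resolve this. First, the degree of every node in the DAG is at most $2|\Sigma|+2$, so a ball of radius~$d$ has size at most $(2|\Sigma|+2)^d$, a constant depending only on~$A$. Second, any node whose annotation decreases during this completion step must lie within distance at most~$d$ from a new node in the updated DAG, because new finite distances are witnessed by paths of length $< d$ and every new shortcut passes through a new node. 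Hence the set of nodes that can ever be touched during the relaxation has size $O(k)$, and each such node's annotation can decrease at most $d$ times during the step (through the values $d-1, d-2, \ldots, 0$), giving total work $O(k \cdot d) = O(k)$, as required.
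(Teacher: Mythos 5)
Your treatment of the modulo annotation and of the distance annotation matches the paper's proof in substance: the paper also propagates the depth modulo~$\ell$ from an already-annotated parent, and updates distances only in the constant-size radius-$d$ neighborhoods of the new nodes (your relaxation-style argument, with the observation that any decreased annotation is witnessed by a short path through a new node, is a valid and slightly more explicit implementation of the same idea). Your handling of the $\pushL(a)$-completion case for the state annotation --- propagating $\delta$ along the $\pushR$-chain starting from the already-annotated node $n'_{m+1}$ --- is exactly the paper's argument.

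However, your state-annotation update for the $\pushR(a)$-completion case is wrong as stated. The two cases are \emph{not} symmetric for this annotation, precisely because the DFA reads words left to right. In the $\pushR(a)$ case the new pillar chain descending from $n'_{m+1}$ consists of $\pushL(b_i)$-edges, so following it corresponds to \emph{prepending} letters to the represented word; there is no ``symmetric rule'' that computes the state reached by $bw$ from the state reached by $w$, so propagating along that chain cannot be done with $\delta$. The repair is immediate and is what the paper actually does: in the $\pushR(a)$ case, every newly created node ($n'$ and each pillar node $n'_i$) is by construction the $\pushR(a)$-child of an \emph{old}, already-annotated node ($n$ or one of its $\pushL$-ancestors $n_i$), so you simply set its state annotation to $\delta(q,a)$ where $q$ is that parent's annotation (taking $\bot$ when $q=\bot$ or the transition is undefined). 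With that one-line fix your proof goes through and coincides with the paper's.
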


Note that the state annotation and modulo annotation
will be assigned once when a node is created and never modified afterwards,
whereas the distance annotation can be modified after the node is created, if we
later discover a new word of~$L$ sufficiently close to the node. We prove
the lemma:

\begin{proof}[Proof of Lemma~\ref{lem:wordexpand}]
  For the state annotation, we consider two cases, depending
  on whether the missing outgoing edge of node~$n$ that we are completing was a~$\pushR(a)$- or
  a~$\pushL(a)$- successor. If it was a~$\pushR(a)$ successor, then
  all newly created nodes have a $\pushR(a)$-predecessor that was already
  in the $A$-word DAG before the operation so we can easily determine
  their annotation by looking at the transition function of~$A$.
  If it was a~$\pushL(a)$-successor
(the case of Figure~\ref{fig:completion}), then we can determine the state
  annotation of each new node by starting from the state annotation of
  node~$n_{m+1}'$ from the construction (which was in the $A$-word DAG before
  the operation), again by simply reading the transition function of~$A$. This
  can clearly be done in linear time in the number of created nodes.
It is clear how to compute the modulo annotation of each newly created node, counting from the modulo
annotation of the expanded node, modulo~$\ell$.

Next, we update the distance annotations. These distance annotations only need
to be updated on the nodes of the word DAG that are within a distance $< d$
of the newly created nodes, so we recompute them there: for every new node, we
explore its neighborhood in the word DAG up to a distance of~$d-1$ and update the
distances of nodes in that neighborhood.

  For the running time claim, we use the fact that the degree of the word DAG is at
  most $2|\Sigma|+2$, so the neighborhood at distance $d-1$
of any node in~$G$ is of size exponential in~$|A|$,
namely, it has size $\leq \sum_{j=0}^{d-1} (2|\Sigma|+2)^j \leq
(2|\Sigma|+2)^{d}$. If we
create~$N$ nodes, updating the distances is thus done in linear time in~$N$.
\end{proof}

We are now ready to describe the exploration algorithm of
Proposition~\ref{prp:worddag2}. 

\subparagraph*{Initialization.} First expand the initial $A$-word DAG by
increasing length until the incomplete nodes are exactly the nodes
representing words of~$\Sigma^*$ of length~$\ell$, so that all words
of~$\Sigma^*$ of length~$\leq \ell$ are in the $A$-word DAG.  This can be
done in time exponential in~$|A|$ as there are~$\leq |\Sigma|^{\ell+1}$ such words.  Then
prepare a set of~$\ell$ empty FIFO queues~$B_0,\ldots,B_{\ell-1}$, each being
implemented as a doubly linked list in our machine model.  Also initialize an
empty buffer~$B$ as a linked list, and fill this buffer by adding a pointer to all the
nodes of the word DAG that are incomplete (their depth is~$\ell$) and whose
distance annotation is~$< \infty$. Also add for each such node a pointer to its
(unique) occurrence in~$B$.  

Intuitively, all queues~$B_j$ and the buffer~$B$ will hold (nodes corresponding
to) words that are incomplete and whose distance annotation is~$< d$.
Moreover, each~$B_j$ will hold nodes whose length is~$j$ modulo~$\ell$, and the
buffer~$B$ will also hold words whose length is~$0$ modulo~$\ell$ (intuitively
corresponding to the nodes of the next strata). 
This is indeed true after the initialization step,
and we will prove 
that this will always hold, since the algorithm will continue as follows:

\subparagraph*{Indefinitely repeat the following operation:}
\begin{itemize}

\item If one of the queues~$B_j$ is not empty, consider the smallest~$j$
such that~$B_j$ is not empty, pop a node~$n$ from~$B_j$ and expand it, i.e.,
build the~$A$-word DAG~$G\uparrow n$.  
When creating new nodes with this
operation, if their (updated) distance annotation is~$< \infty$ (i.e.,~$<
d$), then:
\begin{itemize}
 \item if the newly created node~$n'$ is a child of the node~$n$, then, letting~$t$
be the modulo annotation of~$n'$, we
put~$n'$ into the queue~$B_{t}$ if~$j<\ell -1$ and into the buffer~$B$
otherwise (i.e., if~$j=\ell-1$);
\item if the newly created node~$n'$ is a pillar node then we put $n'$ in
queue~$B_t$ where~$t$ is its modulo annotation.
\end{itemize}
Notice that all newly created nodes are incomplete by construction.  Similarly,
when updating the distance annotation of nodes that already existed, if they
are incomplete and their updated distance goes from~$\infty$ to~$<d$ then
we put them in queue~$B_i$ where~$i$ is their modulo annotation.  Whenever we
add a node to one of the lists or to the buffer, we also add in the node a pointer
to its occurrence in this list/buffer.  Moreover, when we make a node complete,
    either by expanding it or by attaching new edges to it when completing another
    node (which can be detected in the construction), we remove this node from the
    queue it was in (if any), using the pointer to find this occurrence.
    All of this can
be performed in time exponential in~$|A|$, for each new node added during this expansion step.
\item If all queues~$B_j$ are empty, then transfer all elements of the
buffer~$B$ to the queue~$B_0$ (in a constant number of operations as we work
with linked lists). 
\end{itemize}

This ends the description of the algorithm. Next we analyse it.

We define the \emph{$1$st phase} of the algorithm as the initialization phase,
and for~$i\geq 2$, we define the~\emph{$i$-th phase} to be 
the~$i-1$-th execution of the above while loop.
During the $i$-th phase, we intuitively enlarge the word DAG
with words of the $i$-th stratum,
plus possibly some words of lower strata because of pillar nodes.

We now show that this algorithm meets the requirements of
Proposition~\ref{prp:worddag2}.  To this end, we first show that the algorithm only
discovers nodes that are sufficiently close to words of the language, i.e., it
does not ``waste time'' materializing nodes that are too far away and will not
yield words of the language. 
In other words, we first show the claim corresponding to the last sentence
of Proposition~\ref{prp:worddag2}.
This is clear for the children of nodes that we
expand: when we expand a node $n$, its distance is $<d$, so its new
children~$n'$ will be at distance $\leq d$.  However, the expansion may
create an unbounded number of 
pillar nodes
and we have a priori no distance bound on
these. Fortunately, these nodes in fact represent factors of the word
represented by~$n'$. Thus, we show that the distance bound on~$n'$ extends to
give a weaker distance bound on these nodes:

\begin{lemma}
  \label{lem:factors}
  For any word $w$ of length $\geq 2 d$
  at push-pop distance $\leq d$ to a word of~$L$, for any factor $w'$
  of~$w$, then $w'$ is at push-pop distance  $\leq 6d$
  to a word of~$L$.
\end{lemma}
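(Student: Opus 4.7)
The plan is to exploit the standard characterization of the push-pop distance: if $\delta_\pp(w,u)\leq d$, an optimal edit script can be reordered so that all pops happen before all pushes on each end, which means there exist words $\alpha,x,\beta,\gamma,\delta$ with $w=\alpha x\beta$, $u=\gamma x\delta\in L$, and $|\alpha|+|\beta|+|\gamma|+|\delta|\leq d$. First I would fix such a decomposition with $u\in L$ the given nearby word. Because $|w|\geq 2d$ and $|\alpha|+|\beta|\leq d$, the common middle satisfies $|x|\geq d\geq k$, so $x$ is a substantial factor shared by $w$ and $u$.

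Next I would fix an arbitrary factor $w'=w[r+1..s]$ of $w$ and split into two cases depending on whether $w'$ overlaps the middle part $x$. In \textbf{Case A}, $w'$ lies entirely inside $\alpha$ or entirely inside $\beta$; then $|w'|\leq\max(|\alpha|,|\beta|)\leq d$. Since $A$ is trimmed and $L$ is non-empty, there exists a shortest word $v\in L$ with $|v|\leq k-1$ (obtained as the label of a shortest accepting path). From $w'$, popping all letters and then pushing $v$ reaches $v\in L$, so $\delta_\pp(w',L)\leq |w'|+|v|\leq d+k$. In \textbf{Case B}, $w'$ overlaps $x$, and we can uniquely decompose $w'=\alpha'x'\beta'$ where $\alpha'$ is a suffix of $\alpha$, $\beta'$ is a prefix of $\beta$, and $x'$ is a factor of $x$; in particular $|\alpha'|+|\beta'|\leq|\alpha|+|\beta|\leq d$.

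For Case B, I would then use the accepting run of $u$ in $A$: it reads $x$ from some state $p_1$ to some state $p_2$, and so it reads $x'$ from some intermediate state $p$ to some intermediate state $p'$. Since $A$ is trimmed, pick a word $\rho$ labeling a (shortest) path from $q_0$ to $p$ and a word $\sigma$ labeling a shortest path from $p'$ to some final state, both of length at most $k-1$. Then $\rho x'\sigma\in L$ by construction, and from $w'$ we can reach it by popping $\alpha'$ on the left, popping $\beta'$ on the right, pushing $\rho$ on the left, and pushing $\sigma$ on the right, giving $\delta_\pp(w',L)\leq|\alpha'|+|\beta'|+|\rho|+|\sigma|\leq d+2(k-1)$.

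Combining both cases, $\delta_\pp(w',L)\leq d+2k$. Because $d=16k^2\geq 2k$, this is bounded by $2d\leq 6d$, yielding the claim (with significant slack in the constant $6$). The main subtlety is verifying the push-pop distance characterization and cleanly handling the boundary cases where $\alpha'$, $\beta'$, $\rho$, or $\sigma$ is empty, but all of these reduce to the same counting argument; the computation of the states $p,p'$ in Case B from the run of $u$ and the trimmedness of $A$ do the actual work.
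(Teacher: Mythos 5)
Your proof is correct and follows essentially the same route as the paper's: identify a factor of $w$ that survives in the nearby word $u\in L$ (the paper takes the middle $|w|-2d$ letters, you take the surviving core $x$ of an optimal edit script), decompose $w'$ around it, and use trimmedness of the DFA to reroute the accepting run of $u$ around the relevant sub-factor with a short new prefix and suffix. Your bookkeeping even yields the slightly better bound $d+2(k-1)\leq 2d$, comfortably within the stated $6d$.
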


Note that the distance bound in this result is weaker than what is used in the
algorithm to decide which nodes to expand, i.e., the distance is $\leq 6d$ and
not $\leq d$. So the pillar nodes may have distance annotation $\infty$. Still,
the weaker bound will imply (in the next proposition) that all words created in
the word DAG obey some distance bound.

We can now prove the lemma:

\begin{proof}[Proof of Lemma~\ref{lem:factors}]
  As $|w| \geq 2 d$, we can write $w = s u t$ with $|s|, |t| \geq d$.
Moreover, any word $v$ at push-pop distance~$\leq d$ from~$w$ can be written 
as $v = \rho u \tau$ with $|\rho|, |\tau| \leq 2d$.  Take~$v$ to be such a word in~$L$,
which exists by hypothesis.  Now, the factor $w'$ of~$w$ is either a factor
of~$u$, or may include parts of~$s$ and/or of~$t$. But in all cases we can
write it $w' = s' u' t'$ with $|s'|, |t'| \leq d$ and $u'$ a factor of~$u$.
Now, by considering the accepting run of~$v$ and the occurrence of~$u'$ inside,
  considering the states $q_l$ and $q_r$ before and after this occurrence of~$u'$
  in the accepting run, we know we
  can take words $\rho'$ leading from the initial state to~$q_l$ and~$\tau'$
  from~$q_r$ to a final state, having length $\leq |Q| \leq d$, so that $v' =
  \rho' u'
  \tau'$ is a word of~$L$.
  But then observe that~$w' = s'u't'$ is at push-pop distance at
most~$6d$ from~$v' = \rho'  u' \tau'$: we can pop left~$s'$ (at most~$d$),
push left~$\rho'$ (at most~$2d$), pop right~$t'$ (at most~$d$)
and then push right~$\tau'$ (at most~$2d$). This concludes since~$v'$ is
in~$L$.
\end{proof}

This allows us to show that indeed, all words that will eventually be
discovered by the algorithm are within some bounded distance of a word of the
language. Formally:
\begin{proposition}
\label{prp:nottoofar}
There is a value~$D$ exponential in~$|A|$ such that any (node corresponding to a) word~$w$
that is added to the word DAG at some point is at push-pop distance at most~$D$
from a word of the language.
\end{proposition}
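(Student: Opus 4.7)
The plan is to argue by case analysis on how a node can come to be added to the word DAG. Throughout, since $L = \L(A)$ is infinite, we fix a word $w_0 \in L$ and let $\mu_0 \colonequals |w_0|$; we will exhibit the constant $D$ as a maximum over finitely many bounds involving $d$, $\ell$, and $\mu_0$.

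First I handle nodes created during the initialization phase: these represent words of~$\Sigma^*$ of length at most~$\ell$, and each such word~$w$ satisfies $\delta_\pp(w, w_0) \leq |w| + \mu_0 \leq \ell + \mu_0$ by popping~$w$ and pushing~$w_0$. Next I handle children of expanded nodes: whenever the algorithm expands a node~$n$, that node was sitting in some queue~$B_j$, so its distance annotation is~$< d$, meaning there exists a word $v \in L$ with $\delta_\pp(n, v) \leq d-1$; any child~$n'$ of~$n$ created during this expansion differs from~$n$ by a single push operation, so $\delta_\pp(n', v) \leq d$.

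The main case is that of the pillar nodes created during the expansion of~$n$, since their number can be arbitrarily large and they have no a priori connection to~$n$ in the word DAG. The key observation, reading off Definition~\ref{def:completion}, is that when we complete the missing $\pushL(a)$ outgoing edge of~$n$ using the path $n_{m+1} \xrightarrow{\pushR(b_{m+1})} n_m \to \cdots \to n_1 \xrightarrow{\pushR(b_1)} n$ together with the existing $\pushL(a)$-child $n'_{m+1}$ of $n_{m+1}$, the pillar node $n'_i$ represents precisely the word $a w b_{m+1} \cdots b_{i+1}$, where $w$ is the word represented by $n_{m+1}$. Hence each pillar node represents a prefix of the word represented by the newly created child node $n'$ (namely $a w b_{m+1} \cdots b_1$), and in particular a factor of it. The symmetric situation holds when completing a missing $\pushR(a)$-edge.

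I will then invoke Lemma~\ref{lem:factors}: since the child~$n'$ has distance $\leq d$ from a word of $L$ by the previous paragraph, if $|n'| \geq 2d$ then every factor of~$n'$, and in particular every pillar node, is at push-pop distance at most~$6d$ from a word of~$L$. The remaining corner case is when $|n'| < 2d$; then every pillar node has length strictly less than $2d$ as well, and the trivial bound $\delta_\pp(\cdot, w_0) \leq 2d + \mu_0$ applies. Taking $D \colonequals \max(\ell + \mu_0,\, d,\, 6d,\, 2d + \mu_0)$ then handles all three cases uniformly. The only subtle step is identifying pillar nodes with prefixes of the newly created child, which is a direct but crucial reading of the completion procedure; once that is in hand, Lemma~\ref{lem:factors} does the rest.
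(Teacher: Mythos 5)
Your proposal is correct and follows essentially the same route as the paper's proof: a case analysis on how a node is created, with children of expanded nodes inheriting the distance-annotation bound $d$, pillar nodes identified as factors (prefixes/suffixes) of the newly created child so that Lemma~\ref{lem:factors} yields the $6d$ bound, and short words (from initialization or short children) dispatched by a trivial constant bound. The only differences are cosmetic — you use a fixed reference word $w_0$ and explicit constants where the paper defines $D'$ as a maximum over all words of length $<2d$, and you split the pillar case on $|n'|$ rather than on the length of the pillar node itself.
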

\begin{proof}
  To account for initialization phase,
let~$D'$ be the maximal push-pop distance of a word (of~$\Sigma^*$) of length~$< 2d$ to a
word of the language. Now, let~$D \colonequals \max(D',6d)$, and let us show
  that this value of~$D$ is suitable. 

  Let~$w$ be a
word that is discovered by the algorithm. If~$|w|< 2d$ then we are done
by our choice of~$D'$, so assume~$|w|\leq 2d$.
This implies that~$w$ has been discovered during some~$i$-th phase for~$i\geq 2$
  (since the initialization phase only discovers words of size~$\leq \ell$ and~$d=2\ell$).
  Now, at the moment where~$w$ was
discovered, there are two cases. The first case is when~$w$ was created as a child of a node~$w'$ that we have
expanded, hence whose annotation was~$<d$, so~$w$ itself is at distance at
most~$d$, hence at most~$D$, from a word of the language. The second case is
  when~$w$ is was created as an
ancestor of a child of a node~$w''$ that has been expanded (i.e.,~$w$ is a
pillar node), hence a factor of~$w''$: then Lemma~\ref{lem:factors} applies
because~$w''$ is at push-pop distance at most~$d$ of a word of the language by
the same reasoning and moreover we have~$|w''|\geq |w|$ since~$w''$ is a factor
of~$w$, and since~$|w|\geq 2d$ by hypothesis we have indeed~$|w''| \geq 2d$, so
our choice of~$D$ works.
\end{proof}

Notice that Proposition~\ref{prp:nottoofar} would hold no matter in which order
we expand incomplete nodes of the word DAG.

Next, we show that, thanks to our exploration strategy, the algorithm discovers
all words of the language, stratum by stratum. The proof is more involved. We
start by proving the property that we informally mentioned after describing
the initialization step of the algorithm.

\begin{claim}
\label{clm:bloup}
For all~$i\geq 1$, all the words in the word DAG that are discovered before or during
the~$i$-th phase have length~$\leq i\ell$.  Moreover, during the $i$-th phase,
each queue~$B_j$ only ever contains words whose length is~$\leq i\ell -1$ and
is~$j$ modulo~$\ell$, and the buffer~$B$ only ever contains words whose length
is~$0$ modulo~$\ell$.
\end{claim}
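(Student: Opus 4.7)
The plan is to prove the claim by induction on~$i$. For the base case $i=1$, the statement follows directly from the initialization step: the initial $A$-word DAG is expanded so that all words of $\Sigma^*$ of length $\leq \ell$ are present, the only incomplete nodes are those of length exactly $\ell$, all queues $B_0, \ldots, B_{\ell-1}$ are empty, and the buffer $B$ collects the incomplete nodes of length $\ell \equiv 0 \pmod{\ell}$ with distance annotation $< \infty$. All three parts of the claim hold trivially.

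For the inductive step, assume the claim holds for phase $i$ and consider phase $i+1$. By the IH, at the end of phase $i$ all queues $B_j$ are empty and the buffer $B$ contains only words of length $\equiv 0 \pmod \ell$ and length $\leq i\ell$. Phase $i+1$ begins with a transfer step that moves the contents of $B$ into $B_0$; since $i\ell \leq (i+1)\ell - 1$, both the congruence and length bound for $B_0$ are satisfied. We then analyze the subsequent expansion steps. The key observation about the completion procedure from Definition~\ref{def:completion} is that when we expand a node $n$ of length $m$, every newly created node has length at most $m+1$: the children of $n$ have length exactly $m+1$, while the pillar nodes $n'_1, \ldots, n'_{m'}$ constructed along an ancestor chain of~$n$ have respective lengths $m, m-1, \ldots, m-m'+1$, all bounded by $m$.

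Using this observation, we verify by induction on the number of expansion steps that, throughout phase $i+1$, every node in $B_j$ has length $\equiv j \pmod{\ell}$ and $\leq (i+1)\ell - 1$, every node in $B$ has length $\equiv 0 \pmod{\ell}$, and every node in the DAG has length $\leq (i+1)\ell$. When we pop a node $n$ from $B_j$ (so $|n| \equiv j \pmod{\ell}$ and $|n| \leq (i+1)\ell - 1$) and insert new nodes, we check each case: a child of $n$ has length $|n|+1 \equiv j+1 \pmod{\ell}$; when $j < \ell - 1$, the congruence forces $|n| \leq (i+1)\ell - 2$, so the child goes to $B_{j+1}$ with length $\leq (i+1)\ell - 1$; when $j = \ell - 1$, the child has length $\equiv 0 \pmod{\ell}$ and $\leq (i+1)\ell$, and is placed in~$B$. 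A pillar node has length at most $|n| \leq (i+1)\ell - 1$ and is routed to $B_t$ where $t$ is its recorded modulo annotation. Finally, when an existing node's distance annotation drops from $\infty$ to $< d$ and the node is inserted into some $B_t$, that node was created in an earlier phase, so by the IH its length is $\leq i\ell \leq (i+1)\ell - 1$, and $t$ agrees with its modulo.

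The main obstacle is bookkeeping for pillar nodes: a single expansion can create arbitrarily many of them along an ancestor chain, at various lengths and moduli, and each must be routed to a correct queue. The saving grace is precisely the observation above: pillar node lengths are bounded by $|n|$, so the length bound for the queues is automatic, and the modulo condition is preserved because each node carries its own modulo annotation as maintained by Lemma~\ref{lem:wordexpand}. Combining the two inductions gives the claim.
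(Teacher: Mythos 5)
Your proof follows essentially the same route as the paper's: an outer induction on phases combined with an inner induction over the expansion steps within a phase, resting on the key observation that completing a node of length $m$ creates children of length exactly $m+1$ and pillar nodes of length at most $m$, with the congruence $|n|\equiv j \pmod{\ell}$ for $j<\ell-1$ forcing $|n|\le (i+1)\ell-2$ so that children stay within the length bound for the queues. The only imprecise step is your justification for nodes re-inserted into a queue when their distance annotation drops from $\infty$ to $<d$: such a node need not have been created in an earlier phase (it may be a pillar node or a child created earlier in the \emph{same} phase), though the required length bound still follows from your inner induction in those cases, and the paper's own proof handles this re-insertion case with comparable informality.
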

\begin{proof}
By induction on~$i$. 
\begin{itemize}
\item Case~$i=1$. This is trivially true for the~$1$st phase (the initialization phase).
\item Case~$i>1$. By induction hypothesis, at the beginning of the~$i$-th phase
the word DAG only contains nodes of size~$\leq (i-1)\ell$, hence~$\leq i\ell$.
Moreover, considering the last step of the~$(i-1)$-th phase, we see that~$B_0$
only contains words of size~$(i-1)\ell$, and this is allowed.

Now, let~$n_1,n_2,\ldots$ be the words that are discovered during the $i$-th phase of the algorithm,
considering that, e.g., when we expand a
node~$n$ for some missing~$\pushR(a)$-label, we first “discover”
its~$\pushR(a)$-child, and then discover the pillar nodes (if any) in descending order.
Then we show the following claim by induction on~$j$: ($\star$)
[the node~$n_j$ has size~$\leq i\ell$ and, if at some point we
add it to some queue~$B_t$ then its size is~$\leq i\ell -1$ and is~$t$
modulo~$\ell$ and if we add it to~$B$ its size is~$0$ modulo~$\ell$]. This is
clear of the first word that we discover:
indeed, $n_1$ is a child of the very first node that we popped from~$B_0$
to expand, so the size of~$n_1$ is exactly~$(i-1)\ell+1$ and, since its modulo annotation is then~$1$,
the only queue to which we could ever add it is~$B_1$ and~$(i-1)\ell +1 \leq i\ell-1$ indeed as~$\ell \geq 2$.
Let now~$n_{j+1}$ be the~$(j+1)$-th word
discovered during the first phase and assume~$(\star)$ to hold for all
previously discovered nodes of that phase. We first show that~$|n_{j+1}|\leq
i\ell$. Consider indeed the moment that this node was discovered: it was either
the child of some node~$n_{\rho}$ for~$\rho\leq j$ that we expanded, or it was
a pillar node of some node~$n_{\rho}$ that we expanded. Since~$(\star)$ is true
for~$n_\rho$ and since we never expand nodes of the buffer~$B$ during a phase, it follows
that indeed~$|n_{j+1}|\leq i\ell$. 
We next show the claim on the queues/buffer.
Observe then that the only problematic case is if we
add~$n_{j+1}$ to the queue~$B_0$ and~$|n_{j+1}| = i\ell$; indeed,
it is clear that when we add a node to some queue its modulo is correct
with respect to that queue,
so the only
constraint that could be violated is that~$|n_{j+1}|\leq i\ell -1$. So let us
assume that~$|n_{j+1}|=i\ell$ by way of contradiction.  But then, considering
again how we have discovered~$n_{j+1}$, we see that the only possibility is
that it is a child of some node~$n_\rho$ that we have expanded ($n_{j+1}$
cannot be a pillar node because all nodes~$n_{j'}$ with~$j'\leq j$ have
size~$\leq i\ell$ and we defined that in the discovering order we discover
children of expanded nodes before their pillars), but then by induction
hypothesis~$n_\rho$ must have been popped from the~$B_{\ell-1}$ queue and then
the algorithm must have added~$n_{j+1}$ into~$B$ and not into~$B_0$.
\end{itemize}
This concludes the proof.
\end{proof}

Next, we observe that every word of the~$(i+1)$-th
stratum can be obtained from some word of the~$i$-th stratum by a specific sequence
of at most~$d$ push-pop edits. This claim is reminiscent of the proof of
Lemma~\ref{lem:scord}, where we showed the existence of so-called ladders. Let
us formally state the result that we need:

\begin{claim}
\label{clm:shorten}
  For any $i\geq 1$, for any word~$w$ of~$\stratum_\ell(L,i+1)$,
  there is a word $w'\in \stratum_\ell(L,i)$ with push-pop distance
  at most $d$ to~$w$ and that can be built from~$w$ as follows:
  first remove a 
  prefix 
  of length at most $\ell+|Q|$ 
  to get a word $w''$ (not necessarily in~$L$) of length exactly $i\ell-|Q| -
1$, and then add back a prefix corresponding to some path of length $\leq |Q|$
from the initial state to get $w$ as desired.
\end{claim}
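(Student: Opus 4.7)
The plan is to prove this claim by the same kind of construction already used (informally) in the proof of Lemma~\ref{lem:density}, this time carefully tracking the lengths and the push-pop distance. The whole proof is essentially bookkeeping: given $w$ in $\stratum_\ell(L,i+1)$, we pop a left prefix of the right length to land at a specific position $i\ell - |Q| - 1$, and then prepend a short witnessing path from $q_0$ to the state reached there in the run of $w$, using only that $A$ is trimmed.

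First I would set $P \colonequals |w| - (i\ell - |Q| - 1)$, the length of the left prefix to remove from $w$. Since $w \in \stratum_\ell(L, i+1)$, we have $i\ell \leq |w| < (i+1)\ell$, hence $P \in [|Q|+1,\ \ell + |Q|]$; in particular the prefix removed is of length at most $\ell + |Q|$ as required, and the resulting suffix $w''$ has length exactly $i\ell - |Q| - 1$. Next, I would consider the accepting run of $w$ in $A$ and let $q$ be the state reached after reading the first $P$ letters of $w$; then $w''$ labels a path in $A$ from $q$ to some final state. Since $A$ is trimmed, $q$ is reachable from the initial state $q_0$, so a simple path from $q_0$ to $q$ has a label $u$ with $|u| \leq |Q| - 1$ (with $u = \epsilon$ if $q = q_0$). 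I then set $w' \colonequals u w''$; concatenating the two runs shows $w' \in \L(A)$.

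It then remains to verify the two required bounds. For stratum membership, $|w'| = |u| + |w''|$ lies in the interval $[i\ell - |Q| - 1,\ i\ell - 1]$, which is contained in $[(i-1)\ell,\ i\ell)$ because $\ell = 8k^2 \geq |Q| + 1$ (recall $|Q| = k$); hence $w' \in \stratum_\ell(L, i)$. For the push-pop distance, $w'$ is obtained from $w$ by popping left the first $P \leq \ell + |Q|$ letters and then pushing left the letters of $u$ (at most $|Q|$ operations), giving a total of at most $\ell + 2|Q| \leq 8k^2 + 2k \leq 16k^2 = d$ edits.

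The main thing to notice is that there is no real obstacle here: unlike the proofs of stratum-connectivity (Proposition~\ref{prp:stratumconnected}) or orderability, the claim requires only the existence of \emph{one} word of the previous stratum close to~$w$, and not a connectivity argument within a stratum. In particular, we do not need to invoke interchangeability of $A$; trimness of $A$ together with the pigeonhole-style length computations above is enough.
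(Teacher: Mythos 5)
Your proof is correct and follows exactly the route the paper intends: the paper's own proof is just a two-line remark deferring to the prefix-replacement argument of Lemma~\ref{lem:density} plus the observations that $\ell \geq |Q|$ fixes the length and that $\ell + 2|Q| \leq d$ gives the distance bound. You have simply filled in the same bookkeeping explicitly, and all your length and distance computations check out.
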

\begin{proof}
  The prefix removal and substitution is simply by replacing a prefix of an
  accepting run with some simple path from the initial state that leads to the
  same state: note that the exact same argument was already used in the proof of
  Lemma~\ref{lem:density}. 
The fact that~$w'$ can be chosen to have length exactly~$i\ell -|Q|-1$ is simply
because~$\ell \geq |Q|$.
For the distance bound, 
notice that~$\ell + 2|Q| \leq d$.
\end{proof}

These observations allow us to show that the algorithm discovers
all words of the language, stratum by stratum:

\begin{proposition}
\label{prp:discovers-all}
For~$i\geq 1$, at the end of the~$i$-th phase, the algorithm
has discovered all words of~$\bigcup_{j=1}^{i} \stratum_\ell(L,j)$.
\end{proposition}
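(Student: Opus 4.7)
The plan is to proceed by induction on $i$. For the base case $i = 1$, the initialization phase expands the $A$-word DAG until every word of $\Sigma^*$ of length $\leq \ell$ is present; since $\stratum_\ell(L,1) = L \cap \Sigma^{<\ell}$, every word of the first stratum is then discovered.

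For the induction step, I would fix $i \geq 1$, assume the statement for $i$, and show that by the end of phase $i+1$ every $w \in \stratum_\ell(L, i+1)$ is in the word DAG. The key tool is Claim~\ref{clm:shorten}, which produces, for each such $w$, a witness $w' \in \stratum_\ell(L, i)$ together with a very explicit push-pop edit sequence of total length $\leq d$ from $w'$ to $w$ that only modifies a prefix of bounded length. The crucial feature of this sequence is that every intermediate word stays in the length range $[(i-1)\ell,\, (i+1)\ell - 1]$ and sits at push-pop distance $\leq d - 1$ from $w'$, which is in~$L$ and already present in the word DAG by the induction hypothesis. Hence every such intermediate word inherits a distance annotation $< d$ as soon as it is materialized.

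The second step is to argue that, during phase $i+1$, the algorithm expands every incomplete node of length $\leq (i+1)\ell - 1$ whose distance annotation is $<d$. By Claim~\ref{clm:bloup}, only words of length $\leq (i+1)\ell - 1$ ever land in the queues during this phase, so only finitely many expansions take place and the phase terminates. The processing discipline (always pop from the smallest non-empty queue, transferring the buffer only when all queues are empty) ensures that every such incomplete node with modulo annotation $j$ is eventually popped from~$B_j$ and expanded, since any node added mid-phase arises either as a child of a node we just expanded, as a pillar ancestor of bounded length, or as a previously-existing node whose distance annotation was just updated, and in each case the algorithm puts it in the appropriate queue. Instantiating this with the edit chain from $w'$ to $w$, each intermediate node appears in the word DAG either as a child during the expansion of its predecessor in the chain or as a pillar ancestor during some other completion, is then expanded in turn, and the last step finally creates $w$.

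The main obstacle will be the bookkeeping around pillar nodes and late distance-annotation updates: pillars can appear with arbitrary modulo values, and a node's annotation can drop from $\infty$ to $<d$ after the algorithm has already moved past the corresponding queue, so one must verify that the ``smallest non-empty queue first'' rule correctly revisits these nodes and that no relevant node is stranded when the buffer transfer triggers the end of the phase. I would formalize this through an invariant stating that, at every intermediate point of phase $i+1$, every incomplete node of length $\leq (i+1)\ell - 1$ with current distance annotation $<d$ either sits in some queue~$B_j$ or in the buffer~$B$; once this invariant is established by checking each micro-step of the algorithm, evaluating it at the moment the buffer is transferred to close phase $i+1$ yields the conclusion.
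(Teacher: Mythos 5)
Your overall strategy (induction on the phase index, Claim~\ref{clm:shorten} to produce a bounded edit chain from a word $w'$ of the previous stratum, distance annotations $<d$ along the chain, and the queue discipline to guarantee expansion) is the same as the paper's. But there is a genuine gap at the stratum boundary, i.e., at the unique node of the chain whose length is exactly $i\ell$ (length $\equiv 0 \bmod \ell$). Your key claim --- that during phase $i+1$ the algorithm expands \emph{every} incomplete node of length $\leq (i+1)\ell-1$ with distance annotation $<d$ --- is false: by the algorithm's rules, a child created while expanding a node popped from $B_{\ell-1}$ is placed in the buffer~$B$, not in $B_0$, and the buffer is only transferred to $B_0$ when the phase \emph{ends}. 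So if the chain node of length $i\ell$ is only created during phase $i+1$ (as the child of the chain node of length $i\ell-1$ popped from $B_{\ell-1}$), it sits in the buffer for the whole of phase $i+1$, is never expanded there, and $w$ is not created until phase $i+2$. Your proposed invariant does not repair this: it allows exactly this situation (the node is ``in some queue $B_j$ or in the buffer $B$''), and evaluating it when the buffer is transferred tells you nothing about whether $w$ was created, since buffer nodes are precisely the ones whose processing is deferred.

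The paper avoids this by splitting the chain across \emph{two} phases rather than confining the argument to phase $i+1$. Using the induction hypothesis, $w'$ and its ancestor $w''$ (of length $i\ell - |Q| - 1$) are both present at some point \emph{during phase $i$}; from that point on $w''$ has annotation $\leq |Q| < d$, so $w''$ and all its descendants along the chain up to depth $|Q|+1$ are completed before phase $i$ ends (each is at DAG-distance $\leq 2|Q|+1 < d$ from $w'$). In particular the prefix of $w$ of length exactly $i\ell$ is created during phase $i$, lands in the buffer, and is therefore in $B_0$ at the start of phase $i+1$; only then does one argue that its descendants along the chain, up to depth $\ell$ (all at distance $\leq \ell + 2|Q| \leq d$ from $w'$), are created during phase $i+1$, reaching $w$. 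You would need to add this two-phase handoff --- establishing that the boundary node is deposited in the buffer already during phase $i$ --- for the induction to close.
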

\begin{proof}
We show this by induction on $i$. 
\begin{itemize}
  \item Case~$i=1$. This is trivially true for the initialization step. 
  \item Case~$i>1$. 
By induction hypothesis we know that the algorithm has discovered all words of
$\bigcup_{j=1}^{i-1} \stratum_\ell(L,j)$. 
    So, let~$w\in \stratum_\ell(L,i)$ and we
show that the algorithm will discover $w$ during the~$i$-th 
phase, which would conclude the proof.

By Claim~\ref{clm:shorten} there is a word~$w'\in \stratum_\ell(L,i-1)$ such
that~$\delta_\pp(w,w')\leq d$ and that can be transformed into~$w$ by first
popping-left a prefix of size at most~$|Q|$, obtaining a word~$w''$ whose size
is exactly~$(i-1)\ell-|Q|-1$, and then pushing-left a prefix of size at
most~$\ell +|Q|$ to obtain~$w$. Thus, let us write~$w = w'' a_1 \ldots a_t$ with~$a_j\in
\Sigma$ and $t\leq \ell + |Q|$. Now, by induction hypothesis, the algorithm
has discovered~$w'$ during the~$(i-1)$-th phase, and
    by the properties of a word DAG the word~$w''$ is stored at a node which is
    an ancestor of~$w'$ and has also been discovered.
Crucially, notice that thanks to its size, 
$w''$ is actually in the~$(i-1)$-th stratum of~$L$.
    Now, this means that, at some point of the $i-1$-th phase, the
    algorithm has discovered~$w''$. 
Hence, at some point during the~$(i-1)$-th phase, the algorithm has already
    discovered both~$w'$ and~$w''$, and in fact all nodes in some simple path from~$w'$ to~$w''$. But then at that point, the distance annotation
    of~$w''$ was~$<d$, since it is then at distance~$\leq |Q|< d$
in the word DAG from~$w'$.
Thus, during the $i-1$-th phase,
    all children of~$w''$, and all its descendants up to a depth of~$|Q|+1$, must
    have been created, because $2|Q|+1 < d$,
and then the prefix $w'' a_1 \ldots a_{|Q|+1}$ of~$w$ of length
    $(i-1)\ell$ was added to the buffer~$B$, and then in queue~$B_0$ at the end of that phase.
    We now show that all longer prefixes of~$w$, i.e., all prefixes of~$w$ which
    are of the form $w''
    a_1 \ldots a_{|Q|+1} a_{|Q|+1+1} \cdots a_t$, including $w$ itself, were
    discovered.
    We know that the prefix $w'' a_1 \ldots a_{|Q|+1}$ was discovered and put
    in the buffer~$B$ for the $(i-1)$-th phase. Now, during the $i$-th phase, we have
    also completed this node $w'' a_1 \ldots a_{|Q|+1}$, and its descendants up
    to depth $\ell$, because again,~$\ell + 2|Q|\leq d$.
    Thus, indeed $w$ was discovered, which concludes the proof.\qedhere
\end{itemize}
\end{proof}

We point out that, when we are at the $i$-th
phase, we can create arbitrarily long
paths of pillar nodes during a single expansion operation, including nodes
representing words having length $<(i-1)\ell$. However,
the above Proposition implies that no such node can 
represent a word of the language,
because all the words of~$L$ of the $(i-1)$-th stratum
have already been discovered. 

The last ingredient to show Proposition~\ref{prp:worddag2} is then to prove that
by the end of the~$(i+2)$-th phase, the algorithm has discovered all words
whose push-pop distance to a word of the first $i$-th strata
is no greater than~$d$. Formally:

\begin{proposition}
\label{prp:discovers-neigh}
For all~$i\geq 1$, by the end of the~$(i+2)$-th phase, the algorithm has discovered
all words of~$\Sigma^*$ whose push-pop distance to a word of the~$i$ first strata is no greater than~$d$.
\end{proposition}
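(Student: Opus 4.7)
The plan is to fix any word $w \in \stratum_\ell(L,j)$ with $j \leq i$ and any word $w'$ with $\delta_\pp(w,w') \leq d$, and to work with a \emph{monotone} push-pop edit script: first pop $w$ down to a longest common factor $f$ of $w$ and $w'$, then push letters back up to reach $w'$. This yields a sequence $w = u_0, u_1, \ldots, u_r = f, u_{r+1}, \ldots, u_k = w'$ of length $k \leq d$. Since $d = 2\ell$, we have $|u_s| \leq |w| + s \leq j\ell - 1 + d = (j+2)\ell - 1$ for every $s$, so each natural expansion phase $P_s \colonequals \lfloor |u_s|/\ell \rfloor + 1$ is at most $j+2 \leq i+2$.

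I will then prove by induction on $s$ that $u_s$ lies in the word DAG by the end of phase $i+2$. The base case $s = 0$ is exactly Proposition~\ref{prp:discovers-all}, which guarantees $w$ is in the word DAG by the end of phase $j$. The popping prefix $u_1, \ldots, u_r$ is handled automatically, because each $u_{t+1}$ is obtained from $u_t$ by removing one letter at an end, hence $u_{t+1}$ is a parent of $u_t$ in the word DAG; by the pre-word-DAG property both parents of any node always coexist with it, so the whole prefix enters the DAG simultaneously with $w$. For the pushing suffix, the inductive step reduces to showing that $u_s$ is expanded by the end of phase $i+2$, since then $u_{s+1}$ is created as its child. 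Once $u_0, \ldots, u_s$ are all in the DAG (which holds by the induction hypothesis), the path $u_0, u_1, \ldots, u_s$ inside the DAG has length $s \leq k-1 < d$, so by the annotation-update procedure of Lemma~\ref{lem:wordexpand} the distance annotation of $u_s$ becomes at most $s < d$; therefore $u_s$ is placed into the queue $B_{|u_s| \bmod \ell}$ (or into the buffer $B$ if it is a child crossing a length-$\ell$ boundary).

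The conclusion is a timing argument. By Claim~\ref{clm:bloup} and the fact that each phase proceeds until all $B_j$ are empty before the buffer transfer occurs, every queued node of length at most $p\ell - 1$ is expanded during phase $p$, and any node sent to $B$ during phase $p$ is processed in phase $p+1$. The pushing suffix crosses at most $\lceil k/\ell \rceil \leq 2$ length-$\ell$ boundaries, hence is processed entirely within two phases after $w$ enters the word DAG, giving $u_k = w'$ in the word DAG by the end of phase $j + 2 \leq i+2$, as required.

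The step I expect to be the main obstacle is verifying that the distance annotation of each $u_s$ is actually set to a value $< d$ in time for its queueing to happen before phase $P_s$ closes, rather than lingering at $\infty$ because some intermediate $u_t$ was not yet in the DAG when $u_s$ was first created. I will handle this by observing that the BFS refresh of radius $d-1$ triggered by the creation of $u_s$ itself, combined with applying the induction to each smaller-$s$ prefix to ensure that $u_0, \ldots, u_{s-1}$ are all already present when $u_s$ appears, suffices to discover $u_0$ through the then-existing path $u_s, u_{s-1}, \ldots, u_0$ in the DAG; thus the annotation of $u_s$ is set correctly the moment it joins the DAG, and the queueing happens without additional delay.
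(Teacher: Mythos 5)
Your proof is correct and follows essentially the same strategy as the paper's: induct along a push-pop edit path from the known word $w$ of the first $i$ strata, argue that each intermediate word has its distance annotation set to a value $<d$ once the path to $w$ exists in the word DAG, hence is queued and expanded, with the length bound $|u_s| < (i+2)\ell$ (from $d = 2\ell$) ensuring the whole chain completes by the end of phase $i+2$. Your normalization to a monotone pop-then-push script (making the popping half free, since a word DAG contains all ancestors of its nodes) and your explicit count of at most two length-$\ell$ boundary crossings are minor refinements of the paper's argument, which inducts directly on $\delta_\pp(w,w')$ over an arbitrary script but relies on the same queueing and phase-timing mechanism.
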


\begin{proof}
  We know by Proposition~\ref{prp:discovers-all} that by the end of the~$i$-th
phase (hence by the end of the~$(i+2)$-th phase) the algorithm  has discovered
all words of~$\bigcup_{j=1}^{i} \stratum_\ell(L,j)$. Let~$w$ be a word that is
at distance~$d$ from some word~$w'$ of $\bigcup_{j=1}^{i} \stratum_\ell(L,j)$,
and let us show that the algorithm discovers it by the end of the~$(i+2)$-th
phase.  We show it by induction on~$t \colonequals \delta_\pp(w,w')$. If~$t=0$
then~$w=w'$ and we are done.
For the inductive case let~$t>1$ and assume this is true for all~$j\leq t$.
As~$\delta_\pp(w,w')= t$, there is a sequence of~$t$ push/pop operations
that transform~$w$ into~$w'$.
Let~$w''$ be the word just before~$w$ 
that this sequence defines. By induction hypothesis we will have discovered~$w''$
by the end of the~$(i+2)$-th phase.
But then observe that~$|w''|\leq (i+2)\ell -2$
because~$d \leq 2\ell$. But then it is clear that we will also discover~$w$
by the end of the~$(i+2)$-th phase, because~$w''$ will be made complete before
  the end of the $(i+2)$-th phase. Indeed, the distance annotation of~$w''$
  is~$<d$ as witnessed by~$w'$. Further, we have $|w''| \leq |w'|+d$, and as $d
  = 2 \ell$ and $|w'| < i \ell$, we have $w'' < (i+2)\ell$, so indeed we will
  complete it before the $(i+2)$-th phase concludes.
\end{proof}

We are now equipped to prove Proposition~\ref{prp:worddag2}, which claims that the
word DAG construction is in amortized linear time and that all created nodes are
within some bounded distance to words of the language (keeping in mind that the latter was already
established in Proposition~\ref{prp:nottoofar}):

\begin{proof}[Proof of Proposition~\ref{prp:worddag2}.]
By Proposition~\ref{prp:discovers-neigh} and Proposition~\ref{prp:nottoofar},
it is enough to show that there is a value~$C$ exponential in~$|A|$ such that for all~$i\geq 1$,
after~$C\times \sum_{j=1}^i N_i$, the algorithm has finished the~$(i+2)$-th phase.
  Consider then the point~$P$ in the execution of the algorithm where it
has finished the~$(i+2)$-th phase, and let us find out how much time this
has taken. By Claim~\ref{clm:bloup} at~$P$ the algorithm 
has not discovered any word of
length~$>(i+2)\ell$. Moreover, we know by Proposition~\ref{prp:nottoofar}
that all the words represented by the nodes added to the word DAG are within some bounded push-pop distance~$D$ from
words of the language. Note that it could be the case that the witnessing words
  of the language are not yet discovered, in particular that there are in higher
  strata. However, we can let~$K = \lceil D/\ell \rceil$, and then we know that 
  any word discovered is at distance at most~$D$
  from a word of the first~$i+2+K$ strata.

Letting as usual~$N_i$ be the size of the~$i$-th
$\ell$-stratum of~$L$, we then have that the algorithm has discovered at 
most~$\sum_{j=1}^{i+2+K} (2|\Sigma|+2)^D N_j$ nodes.  Moreover, for~$C_A$ the
bound from Lemma~\ref{lem:density}, we have for all~$i$ the inequality $N_{i+1} \leq
C_A N_i$, so that we can bound the number of discovered nodes at follows: 

\begin{align*}
\sum_{j=1}^{i+2+K} (2|\Sigma|+2)^D N_j &=  (2|\Sigma|+2)^D \bigg(\sum_{j=1}^{i} N_j + \sum_{e=1}^{K+2} N_{i+e} \bigg)\\
&=  (2|\Sigma|+2)^D \bigg(\sum_{j=1}^{i} N_j + \sum_{e=1}^{K+2} C_A^e N_{i} \bigg)\\
&\leq (2|\Sigma|+2)^D \bigg(\sum_{j=1}^{i} N_j + C_A^{K+3} N_{i} \bigg)\\
&\leq  (1+C_A^{K+3})(2|\Sigma|+2)^D \sum_{j=1}^{i} N_j.
\end{align*}
Hence, this value is $C' \sum_{j=1}^{i} N_j$ for some value~$C'$ exponential in~$|A|$. Thus, the total number of
  nodes added to the word DAG is indeed proportional to $\sum_{j=1}^i N_j$. Now,
  as we only add nodes by performing completions, and only consider
  nodes on which there is indeed a completion to perform (in
  particular removing from the lists $B_j$ the nodes that have become
  complete), Lemma~\ref{lem:wordexpand} ensures that the running time satisfies
  a similar
bound, which 
  is what we needed to show.
\end{proof}

Next, we show Proposition~\ref{prp:worddag}, which claims that we can produce the
stratum graph sequence in amortized linear time:

\begin{proof}[Proof of Proposition~\ref{prp:worddag}]
For this, we will extend the
algorithm from Proposition~\ref{prp:worddag2} to compute the stratum graphs.
Specifically: once we
have finished the~$(i+2)$-th phase (during which we discover all nodes of
the~$(i+2)$-th stratum), and before entering the~$(i+3)$-th phase, we
prepare in linear time in~$N_{i}$ the stratum graph~$\Gamma_{i}$ for stratum~$i$ and the
corresponding starting and exit nodes. Note that we can easily compute the first 
stratum graph~$\Gamma_1$ during the initialization, as well as starting and exit points for it;
we pick an exit point for the $1$st stratum which is a word that is at distance
$\leq d$ to a word of the second stratum, as can be checked by naively testing all
possible edit scripts of at most $d$ operations. All of this can be computed by
a naive algorithm, and will only increase the delay by an amount that is exponential in~$|A|$.
Also notice the following claim~$(\star)$ by the end of the~$(i+2)$-th phase, we have found all
words of~$\Sigma^*$ whose push-pop distance to a word of~$\bigcup_{j=1}^i\stratum_\ell(L,j)$
is~$\leq d$, and these words are all of size at least~$(i-3)\ell$ and at most~$(i+2)\ell$: this is by Proposition~\ref{prp:discovers-neigh} and because~$2\ell \geq d$.
We can moreover easily modify the algorithm of Proposition~\ref{prp:worddag2}
so that it keeps, when it is in the~$i$-th phase, a
pointer to some successful node~$w_i$ of the~$i-2$-th stratum (or a null pointer
when~$i<3$).

Let us explain how we compute the stratum graph~$\Gamma_{i}$ in linear time in~$N_i$ after the
$(i+2)$-th phase has concluded. We assume that we already know the ending
point~$v_{e_{i-1}}$ of the previous phase, and that it was picked to ensure
that there was a word of~$\stratum_\ell(L,i)$ at distance $\leq d$ (as we did
above for the 1st stratum).
We do the computation in four steps:

  \begin{itemize}
  \item First, we explore the $A$-word DAG using a DFS (e.g., with a stack
implemented as a linked list, and without taking into account the orientation
of the edges) starting from the node~$w_i$ from the~$i$-th strata, only
visiting nodes corresponding to words of length $\geq (i-3)\ell$ and $\leq
(i+2)\ell$ (which can be detected with the modulo annotations).  During this
exploration, whenever we see a successful node that is in the~$i$-th strata, we
store it in a linked list (also marking it in the word DAG).  This is in linear
time in the number of nodes created in this length interval, hence, by
Lemma~\ref{lem:density}, in linear time in $N_i$.  By~$(\star)$, and because
each stratum is~$d$-connected, we know that we will see every node
that corresponds to a word of the~$i$-th stratum.  So the linked list stores all
nodes of the~$i$-th stratum. These form the vertices of $\Gamma_{i}$.

  \item For each node $n$ in the list corresponding to a word~$w$, we do an
exploration of all the (not necessarily simple)
undirected paths of length~$\leq d$ that start from~$n$ in the word DAG,
     where we remember the edit script corresponding to the current path (i.e.,
each time we traverse an edge in the forward direction we append its operation to
the script; each time we traverse an edge $\pushL$ or $\pushR$ in the reverse
direction we append $\popL$ and $\popR$ respectively to the script).
    We consider all marked nodes seen in this exploration, and add edges from
    the vertex for~$w$ in $\Gamma_{i}$ to these other nodes with the label
of these edges being the edit script of the corresponding path. 
By~$(\star)$, this indeed builds all the edges of~$\Gamma_i$.
This process
    takes total time proportional in $N_{i}$, because $d$ only depends on the language and the
degree of the word DAG also only depends on the language.
  \item Last, we pick our starting and ending vertices for the current
    stratum~$i$.
      Knowing the ending point $v_{e_{i-1}}$ of the $(i-1)$-th stratum, we know that
    there was some word of the current stratum at distance $\leq d$ of~$v_{e_{i-1}}$, so we explore
    from $v_{e_{i-1}}$ in the word DAG and find a witnessing node, which we pick as
    starting node $v_{s_{i}}$. For the ending node of the $i$-th stratum, we consider all nodes of
    $\stratum_\ell(L,i)$ again and explore all possible edit scripts for them
    to check if there is a node of the next stratum at distance $\leq d$; we
      know that this must succeed by Claim~\ref{clm:shorten}. We pick any one of them as
    $v_{e_{i}}$
  \item Last, we unmark all the nodes that we had previously marked in the word
    DAG, again in linear time in the size of the current stratum.
\end{itemize}

This indeed computes in linear time the stratum graph~$\Gamma_i$.  The
amortized linear time bound for this whole algorithm can then be shown using
the same reasoning as in the proof of the amortized linear time bound for
Proposition~\ref{prp:worddag2}.
\end{proof}

Finally we show our bounded-delay enumerability result
(Theorem~\ref{thm:enum-main}):
\begin{proof}[Proof of Theorem~\ref{thm:enum-main}.]
As we have already explained in Section~\ref{sec:worddag}, we simply have to combine
the two components of our enumeration algorithm: let us call~$\mathcal{A}$ the
algorithm from Proposition~\ref{prp:worddag} with~$K_A$ the value in its
  amortized linear time complexity (called $C$ in the statement), and~$\mathcal{B}$ that of
Proposition~\ref{prp:second-comp}, with~$K_B$ the value in its
  (non-amortized) linear time complexity (i.e., for each~$i$ it runs in time $K_B
  |\Gamma_i|$). Last, let~$C_A$ be the value from
Lemma~\ref{lem:density}.  

We first show that there is an algorithm~$\mathcal{C}$ that produces the
sequences of edit scripts (and stores them in a FIFO implemented as a
  double-ended linked list, to be read later) in amortized linear time, i.e., for some~$K_C$, for all~$i\geq 1$,
after $K_C \sum_{j=1}^i N_i$ computation steps, the algorithm~$\mathcal{C}$ has
produced a sequence of edit scripts corresponding to all the words of the
first~$\leq i$ strata. To do this we simply start algorithm~$\mathcal{A}$, and whenever it
  has produced $(\Gamma_{i},v_{s_{i}},v_{e_{i}})$ we pause~$\mathcal{A}$ and run
  algorithm~$\mathcal{B}$ on it, resuming $\mathcal{A}$ afterwards.
  Notice
that the size~$|\Gamma_i|$ of~$\Gamma_i$ is $\leq C_S N_i$ for some
  value~$C_S$ exponential in~$|A|$. This is because the nodes
of~$\Gamma_i$ correspond to words of the~$i$-th strata and its degree is
bounded by definition.  Then it is clear that after 
a number of computation steps
  $K_A \sum_{j=1}^i N_i + \sum_{j=1}^i K_B
C_S N_i = (K_A + K_B C_S) \sum_{j=1}^i N_i$ we have indeed computed the
sequence of edit scripts for strata~$\leq i$, so we can take~$K_C \colonequals
K_A + K_B C_S$.

  Our bounded-delay algorithm $\mathcal{C}'$ to enumerate~$L(A)$ is
then as follows:

We initialize by launching~$\mathcal{C}$ until it has produced the sequence of
edit scripts for the first stratum. We store the sequence of edit scripts in a
  FIFO. Then, we continue the execution
of~$\mathcal{C}$ to produce the rest of the sequence of edit scripts at the end
  of the
  FIFO. However, every $E \colonequals (1+C_A)K_C$ steps of~$\mathcal{C}$,
we output one edit script from the prepared sequence, i.e., from the beginning
  of the FIFO. We know that
  the edit script sequence thus produced is correct (it is the one produced
  by~$\mathcal{C}$), and that algorithm~$\mathcal{C}'$ has bounded delay
  (namely, delay $E+E'$ for $E'$ the constant time needed to read one edit
  script from the FIFO and output it), but the only point to show is that we
  never try to pop from the FIFO at a point when it is empty, i.e., we never
  ``run out'' of prepared edit scripts.

  To see why this is true, we know that $\mathcal{C}$ adds edit script sequences
  to the FIFO for each stratum, i.e., once we have concluded the computation of
  the $i$-th stratum graph~$\Gamma_i$
  and the execution of algorithm~$\mathcal{B}$ over $\Gamma_i$,
  we add precisely $N_i$ edit scripts to the FIFO.
  So it suffices to show that, for any~$i$, the FIFO does not run out after we
  have enumerated the edit scripts for the strata $1, \ldots, i$.
  Formally, we must show that for any $i \geq 1$, after we have popped
  $\sum_{j=1}^i N_j$
  edit scripts from the FIFO, then algorithm~$\mathcal{C}$ must have already added
  to the FIFO the edit scripts for the $(i+1)$-th stratum. We know that the
  time needed for algorithm~$\mathcal{C}$ to finish processing the $(i+1)$-th
  stratum is at most $K_C \sum_1^{i+1} N_i$, which by Lemma~\ref{lem:density} is
  $\leq K_C \sum_{j=1}^i N_j + C_A N_i \leq E \sum_{j=1}^i N_j$. Now, by
  the definition of algorithm~$\mathcal{C}'$, if we have popped 
  $\sum_{j=1}^i N_j$
  edit scripts from the FIFO, then we have already run at least $E \sum_{j=1}^i
  N_j$ steps of algorithm~$\mathcal{C}$. Hence, we know that algorithm~$\mathcal{C}$
  has finished producing the edit scripts
  for the $(i+1)$-th stratum and the FIFO is not empty. This
  concludes the proof.
\end{proof}

\end{toappendix}

\section{Extensions}
\label{sec:extensions}
\subparagraph*{Complexity of determining the optimal distance.}
We have shown in Result~\ref{res:main} that, given a DFA $A$, we can compute in PTIME a minimal cardinality partition of~$\L(A)$ into languages that are each $d$-orderable, for~$d=48|A|^2$.
However, we may achieve a smaller distance~$d$ if we increase the cardinality, e.g., $a^* + bbb a^*$ is $(1,3)$-partition-orderable and not~$(1,d)$-partition-orderable for~$d <3$, but is $(2,1)$-partition-orderable. This tradeoff between $t$ and~$d$ seems difficult to characterize, and in fact it is NP-hard to determine if an input DFA is $(t,d)$-partition-orderable, already for fixed $t,d$ and for finite languages.
Indeed, there is a simple reduction pointed out in~\cite{cstheory}
from the Hamiltonian path problem on grid graphs~\cite{itai1982hamilton}:

\begin{toappendix}
\subsection{Proof of the complexity results}
\end{toappendix}

\begin{propositionrep}[\cite{cstheory}]
  \label{prp:nphard}
  For any fixed $t, d \geq 1$, it is NP-complete, given a DFA~$A$ with $\L(A)$
  finite, to decide if $\L(A)$ is $(t,d)$-partition-orderable (with the push-pop or Levenshtein distance).
\end{propositionrep}

\begin{proof}
  Our proof will show NP-hardness both for the push-pop distance and for the Levenshtein distance (both problems being a priori incomparable). We denote the distance by~$\delta$. The membership in NP is immediate as a witnessing $t$-tuple of edit script sequences has polynomial size.
  
  Recall that a grid graph is a finite node-induced subgraph of the infinite grid.
  Fix the integers $d \geq 1$ and $t \geq 1$. We work on the alphabet $\Sigma = \{a, b\}$.
  We reduce from the Hamiltonian path problem on grid graphs, which is NP-hard~\cite{itai1982hamilton}.
  Given a grid graph $G$, letting $n$ be its number of vertices,
  we assume without loss of generality that~$G$ is connected, as otherwise it trivially does not have a Hamiltonian path.
  Hence, up to renormalizing, we can see each vertex of~$G$ as a pair $(i,j)$ such that two vertices $(i,j)$ and $(i',j')$ are adjacent if and only if
  $|i-i'| + |j-j'| = 1$, with $0 \leq i, j \leq n$. (Indeed, if some node is labeled $(0, 0)$ and the graph is connected, then any vertex must have values $(i,j)$ with $i+j \leq n$.)
  
  We code $G$ as a set of words of size polynomial in~$G$ defined as follows: for each vertex $(i,j)$ we have the word 
  $n_{(i,j)} \colonequals a^{di} b^{d+1} a^{dj}$.
  Let $L'$ be the language of these words.
  Then, let $L$ be $L'$ with the set of $t-1$ words $b^{(j+1)(d+1)}$
  for~$1\leq j \leq t-1$. This coding is in polynomial time, and we can obtain from~$L'$ a DFA recognizing it in polynomial time.
  
  Let us show that the reduction is correct.
  For this, let us first observe that $L$ is $(t,d)$-enumerable iff $L'$ is $(1,d)$-enumerable. Indeed, if $L'$ is $(1,d)$-enumerable then we enumerate $L$ by adding one singleton sequence for each of the $t-1$ words of $L \setminus L'$. Conversely, if $L$ is $(t,d)$-enumerable then as the words of $L'\setminus L$ are at distance $>d$ from one another on from the words of~$L'$ (as each edit can only change the number of $b$'s by one), then each of the $t-1$ words of $L' \setminus L$ must be enumerated in its own singleton sequence, and $L'$ is $(1,d)$-enumerable.
  
  Now, we 
  define~$G'$ to be the graph whose nodes are the words of~$L'$ and where we connect two words if they are different and the distance between them is~$\leq d$. Clearly $G'$ has a Hamiltonian path iff $L'$ is $d$-orderable. We claim that $G'$ is isomorphic to~$G$, which concludes the proof because then $G$ has a Hamiltonian path iff $G'$ does. So let us take two distinct vertices $(i,j)$, $(i',j')$ of~$G$ and show that they are adjacent in~$G'$ (i.e., $|i-i'| + |j-j'| = 1$) iff $a^{id} b^{d+1} a^{jd}$ is at distance $\leq d$ (for the Levenshtein or push/pop distance) to $a^{i'd} b^{d+1} a^{j'd}$. For the forward direction, it is clear that increasing/decreasing $i$ or $j$ amounts to pushing/popping $a^d$ at the beginning or end of the word. For the backward direction, proving the contrapositive, if the vertices are not adjacent then either $|i-i'| \geq 2$ or $|j-j'| \geq 2$ or both $i\neq i'$ and $j \neq j'$. In all three cases, noting that all words reachable at Levenshtein edit distance $\leq d$ must include some $b$'s in the middle, if we edit one of the words with $\leq d$ operations then the endpoints of the longest contiguous block of $b$'s cannot have moved by more than $d/2$ relative to where they were in the original word, so the only operations that can give a word of the right form amount to modifying the number of $a$'s to the left or right of the block of $b$'s, and with $d$ editions we cannot change both numbers nor can we change a number by at least $2d$.
  
  We have shown that $G$ and $G'$ are isomorphic, which establishes the correctness of the reduction and concludes the proof.
\end{proof}

\begin{toappendix}
\subsection{Proof of the results on the push-pop-right distance}
We prove our result on the push-pop-right distance in this section:
\end{toappendix}

\subparagraph*{Push-pop-right distance.}
A natural restriction of the push-pop distance 
would be to only allow editions at the right endpoint of the word, called
the \emph{push-pop-right} distance.
A $d$-ordering for this distance witnesses that 
the words of the language can be produced successively while being stored in a stack, each word being produced after at most $d$ edits.

Unlike the push-pop distance, one can show that 
some regular languages are not even partition-orderable for this distance, e.g., 
$a^*b^*$ is not~$(t,d)$-partition-orderable  with any~$t,d\in \NN$.
The enumerable regular languages for this distance in fact correspond to the well-known notion of \emph{slender languages}.
Recall that a regular language~$L$ is \emph{slender}~\cite{mpri} if there is a bound~$C \in \NN$
such that, for each $n \geq 0$, we have $|L \cap \Sigma^n| \leq C$. 
It is known~\cite{mpri} that we can test in PTIME if an input DFA represents a slender language.
Rephrasing Result~\ref{res:slender} from the introduction,
we can show that a regular language is enumerable for the push-pop-right distance
if and only if it is slender; further, if it is, then we can tractably compute the
optimal number $t$ of sequences (by counting the number of different paths to 
loops in the automaton), and we can do the enumeration with bounded delay:

\begin{theoremrep}
  \label{thm:slender}
  Given a DFA~$A$, the language $\L(A)$ is $(t,d)$-partition-orderable for the push-pop-right distance for some $t, d \in \NN$ if and only if $\L(A)$ is slender.
  Further, if $\L(A)$ is slender, we can compute in PTIME the smallest $t$ such that $\L(A)$ is $(t,d)$-partition-orderable for some~$d \in \NN$ for the push-pop-right distance.

  In addition, there is an algorithm which, given a DFA~$A$ for which $\L(A)$ is
  slender and $t=1$,
  enumerates the language $\L(A)$ with push-pop-right distance bound $2k$ and
  linear delay in~$|A|$. Further, the sequence of edit scripts produced by the
  algorithm is ultimately periodic.
\end{theoremrep}

\begin{toappendix}
  We start with the proof of the characterization: $\L(A)$ is~$(t,d)$-partition-orderable for
the push-pop-right distance iff~$\L(A)$ is slender.

  We consider the infinite tree~$T$ whose nodes are $\Sigma^*$
  and where, for every~$w\in \Sigma^*$ and~$a\in \Sigma$, the node~$w$ has
  an~$a$-labeled edge to the node~$wa$ (i.e., $wa$ is a child of~$w$). A
  \emph{$L$-infinite branch} of this tree is an infinite branch of the tree
   such that there are infinitely many nodes $n$ on that branch that
  have a descendant in~$L$.
  Formally, there is an infinite
  sequence $w = a_1 a_2 \cdots$ (i.e., an infinite word, corresponding to a
  branch) such that, for infinitely many values $i_1, i_2, \ldots$, there are
  words $x_j$ 
  such that $a_1 \cdots a_{i_j} x_j$ is a word of~$L$.

  We show that a $(t,d)$-partition-orderable regular language for the push-pop-right
  distance must contain finitely many $L$-infinite branches:

  \begin{claim}
    \label{clm:orderfinbranch}
    If $L$ is $(t,d)$-partition-orderable language for the push-pop-right
  distance and is regular, then it must contain at most~$t$ $L$-infinite
    branches.
  \end{claim}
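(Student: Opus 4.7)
The plan is a pigeonhole argument at the level of the partition: I would associate each $L$-infinite branch to a part $L_r$ of the partition, through an infinite family of witnessing words lying in $L_r$, and then show that each $L_r$ can accommodate at most one such branch, which yields the bound.

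The starting technical point would be a simple observation specific to the push-pop-right distance: if $v$ is obtained from $u$ by at most $d$ push-pop-right edits, then $u$ and $v$ share a common prefix of length at least $\max(|u|,|v|) - d$. Such an edit script amounts to popping some $k$ rightmost characters and then pushing $\ell$ new ones with $k+\ell \leq d$, so the common prefix has length $|u|-k = |v|-\ell$, which is at least both $|u|-d$ and $|v|-d$. In particular, two consecutive words of a $d$-ordering that are both of length $> N+d$ must agree on their $(N{+}1)$-th character.

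Given an $L$-infinite branch $w = a_1 a_2 \cdots$, by definition there are infinitely many pairwise distinct words in $L$ of the form $a_1 \cdots a_{i_j} x_j$ with $i_j \to \infty$; pigeonhole on $L = L_1 \sqcup \cdots \sqcup L_t$ yields some $L_r$ containing infinitely many of them, which I would call a \emph{trace part} of~$w$. Every $L$-infinite branch then has at least one trace part, and it suffices to show that each $L_r$ is a trace part of at most one branch, since choosing one trace part per branch would then give an injection from $L$-infinite branches into $\{L_1,\ldots,L_t\}$.

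For the main step, I would fix a $d$-ordering $\bm{s}_r = u_1, u_2, \ldots$ of $L_r$ and assume for contradiction that $L_r$ is a trace part of two distinct branches $w \neq w'$, agreeing on positions $1,\ldots,N$ but differing at position $N+1$ with characters $a \neq a'$. Using the trace-part property with arbitrarily long prefixes, $L_r$ contains infinitely many \emph{long $a$-type} words (of length $>N+d$ with $(N{+}1)$-th character~$a$) and, symmetrically, infinitely many long $a'$-type words, so both types appear infinitely often in $\bm{s}_r$. By the prefix observation, two consecutive long words in $\bm{s}_r$ share a common prefix of length $>N$ and thus agree on their $(N{+}1)$-th character; hence any transition in $\bm{s}_r$ between the two long types must pass through an intermediate word of length at most $N+d$. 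But only finitely many such words exist in $\Sigma^*$, and $\bm{s}_r$ has no repetitions, so only finitely many transitions can occur, contradicting the fact that both types appear infinitely often. The main obstacle is exactly this precise transition argument; the prefix-preservation observation is what distinguishes push-pop-right from push-pop or Levenshtein edits and is precisely what fails in those broader settings.
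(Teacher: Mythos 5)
Your proof is correct and follows essentially the same strategy as the paper's: a pigeonhole over the $t$ parts to find one part containing infinitely many words descending from two diverged branches, combined with the observation that push-pop-right edits of cost $\leq d$ preserve a prefix of length $\geq \max(|u|,|v|)-d$, so that switching branches forces the ordering through one of the finitely many short words. If anything, your explicit "each transition consumes a distinct word of length $\leq N+d$" step spells out a detail the paper's proof leaves implicit, but the underlying argument is the same.
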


  \begin{proof}
    We show that a language with $\geq t+1$ many $L$-infinite branches is not
    $(t,d)$-partition-orderable. Indeed, assume by contradiction that it is, for
    some distance~$d$.
    Consider a depth $m$ at which the $t+1$ $L$-infinite branches have
    diverged, i.e., we have distinct nodes $n_1, \ldots, n_{t+1}$ at depth~$m$
    that all have infinitely many descendants in~$L$.
    Consider a moment at which all words of the language of length
    $\leq m+d$ have been enumerated. Then by the pigeonhole principle there must
    be some language in the partition that still has infinitely many words to
    enumerate from two different branches, say descendants of~$n_i$ and $n_j$
    with $i \neq j$. Now, all descendants of $n_i$ at depth $> d$ are at
    distance $>d$ from all descendants of~$n_j$ at depth~$>d$, which contradicts
    the assumption that the language in the partition can move from one to the
    other.
  \end{proof}

  And we show that a regular language having finitely many $L$-infinite branches must be
  slender.

  \begin{claim}
    \label{clm:finbranchslender}
    If a regular language~$L$ has finitely many $L$-infinite branches, then it
    is slender.
  \end{claim}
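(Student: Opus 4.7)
The plan is to analyze a trimmed DFA $A=(Q,\Sigma,q_0,F,\delta)$ for $L$ via its strongly connected component (SCC) decomposition, using the hypothesis on $L$-infinite branches to strongly constrain the non-trivial SCCs (those containing at least one cycle).

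First, because $A$ is trimmed, every state is co-reachable, so an infinite word $w\in\Sigma^\omega$ is an $L$-infinite branch if and only if the unique run of $w$ on $A$ is defined at every step; hence the $L$-infinite branches correspond bijectively to the infinite paths in $A$ starting from $q_0$. Since $A$ is finite, any such infinite path eventually remains in a single SCC, and this terminal SCC must be non-trivial.

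From this I would derive two structural constraints. (i) Every non-trivial SCC is a simple cycle: if some non-trivial SCC $S$ contained two distinct simple cycles sharing a common state, then by freely interleaving them we would obtain uncountably many infinite paths in $S$, and since $S$ is reachable this would produce uncountably many $L$-infinite branches, contradicting the hypothesis. (ii) No path from $q_0$ can traverse two distinct non-trivial SCCs $S_1$ then $S_2$: otherwise we could loop the simple cycle of $S_1$ any number $n\geq 0$ of times before exiting to $S_2$ and then follow the unique infinite path staying in $S_2$, yielding infinitely many pairwise distinct infinite paths (distinct because $A$ is deterministic, so different runs correspond to different words), again a contradiction.

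Finally, to derive slenderness, I would classify each accepting run of $A$ by its \emph{topology}: the sequence of states it visits outside of non-trivial SCCs, together with, if applicable, the single non-trivial SCC $S$ it enters along with its entry and exit states. By~(i) and~(ii), the number of possible topologies is finite. For each topology that visits a non-trivial SCC with cycle word $v$, the set of accepted words having that topology is a language of the form $u v^* w$ for fixed $u,w$ depending on the topology; for topologies visiting no non-trivial SCC, the corresponding language is a singleton. Therefore $L$ is a finite union of languages $u v^* w$ and singletons. Each $u v^* w$ contains at most one word of any given length, so the union contains a bounded number of words of each length, witnessing that $L$ is slender. The main subtlety is to set up the topology enumeration cleanly so that it indeed covers all accepting runs with only finitely many classes; once this is done, the slenderness bound follows immediately from the $u v^* w$ form.
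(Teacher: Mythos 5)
Your proof is correct, but it takes a genuinely different route from the paper's. You work directly on the SCC structure of a trimmed DFA: finiteness of the set of $L$-infinite branches (equivalently, of infinite runs from $q_0$, by determinism and trimming) forces every non-trivial SCC to be a simple cycle and forbids any path visiting two such SCCs — in effect you re-derive the classical ``no connected pair of simple cycles'' characterization of slenderness, which the paper records as Theorem~\ref{thm:slendercarac} but does not invoke for this claim. Slenderness then follows from the resulting decomposition of $L$ into finitely many languages of the form $uv^*w$ plus singletons, each contributing at most one word per length. The paper instead stays inside the tree $T$ and argues by pumping (Claim~\ref{clm:branch}) that every word of an infinite regular $L$ lies within bounded depth $d$ of some $L$-infinite branch, so finitely many branches can only have a constant number of depth-$\le d$ descendants at each level. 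Your approach buys an explicit normal form for $L$ (indeed the stronger conclusion that $L$ is $t$-slender, which the paper only establishes later via Proposition~\ref{prp:dfa}), at the cost of the SCC case analysis; the paper's argument is shorter once Claim~\ref{clm:branch} is available. The one loose point in your sketch is the definition of a run's ``topology'' by its state sequence: two states may be joined by parallel transitions reading different letters, so the topology must also record the letters read outside the cycle — there are still finitely many topologies, so nothing breaks, but as stated the state sequence alone does not pin down $u$ and $w$.
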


  This follows from an ancillary claim shown by pumping:

  \begin{claim}
    \label{clm:branch}
    For an infinite regular language $L$, there is a value $d \in \NN$ such that, for
    each word $w$, there is an $L$-infinite branch
    in~$T$ such that $w$ is at depth at most~$d$ from a node of the branch.
  \end{claim}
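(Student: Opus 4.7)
The plan is to prove the claim by a pumping argument, taking $d \colonequals |Q|$ for $Q$ the state set of a trimmed DFA $A$ recognizing $L$. The claim as stated is invoked for $w$ ranging over $L$ in the proof of Claim~\ref{clm:finbranchslender}, and this is the case I address; the notion of distance I bound is the depth of $w$ below a node on the branch (which is the ancestor relation in $T$).

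First, suppose that $w \in L$ has length at least $|Q|$. The accepting run of~$w$ visits $|w| + 1 \geq |Q| + 1$ states, so by the pigeonhole principle applied to its last $|Q| + 1$ states, some state~$q$ repeats at two positions within the last $|Q|$ transitions. This yields a factorization $w = xyz$ where $y$ is non-empty and labels a loop on~$q$, and $|yz| \leq |Q|$; in particular $|z| < |Q|$. Pumping on~$y$ then gives $xy^iz \in L$ for every $i \geq 0$.

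The infinite word $xy^\omega$ defines an infinite branch of~$T$, which is $L$-infinite: each of its prefixes $xy^i$ admits the descendant $xy^iz \in L$ in~$T$. Furthermore, the node~$xy$ lies on this branch, and $w = xy \cdot z$ is its descendant in~$T$ at depth $|z| < |Q|$, yielding the requisite bound on the depth from a node of an $L$-infinite branch.

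For short words $w \in L$ with $|w| < |Q|$, I use the infiniteness of~$L$ to apply the preceding construction to any sufficiently long word of~$L$, obtaining at least one $L$-infinite branch~$\alpha_0$. Since the root~$\epsilon$ lies on every infinite branch, the word~$w$ is at depth $|w| < |Q|$ from the node~$\epsilon$ of~$\alpha_0$, so $d \colonequals |Q|$ works here as well. The main (minor) obstacle is arranging that the pumped factor occurs near the end of~$w$ so that the remaining suffix~$z$ is short; this is precisely what the pigeonhole argument applied to the \emph{last} $|Q|+1$ states of the run secures, and the rest of the argument is a direct application of pumping.
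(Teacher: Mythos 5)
Your proof is correct and follows essentially the same route as the paper's: pump a loop located within the last $|Q|$ positions of the run to exhibit an $L$-infinite branch from which $w$ hangs at depth less than $|Q|$, and dispatch short words via the root of $T$, which lies on some $L$-infinite branch guaranteed by the infiniteness of $L$. The only cosmetic difference is that you extract the loop as a factor $y$ of $w$ by pigeonhole, whereas the paper inserts a fresh simple loop $s$ on a loopable state occurring in the length-$d$ suffix; both yield the same bound $d=|Q|$.
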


  \begin{proof}
    Let $d$ be the number of states of a DFA recognizing~$L$. 
    As $L$ is infinite, it clearly has at least some $L$-infinite branch
    obtained by considering $r s^* t$ for $s$ a simple loop on a loopable state.
    Hence, the claim is trivial if $w$ has length $<d$ because the root is a
    node of this $L$-infinite branch.

    If $|w| \geq d$, 
    we know that there is some loopable state of~$w$ that occurs in the
    suffix of length $d$, i.e., we can write $w = r t$ where the state $q$
    between $r$ and~$t$ is loopable. Now, let $s$ be a loop on~$q$ of length at
    most~$d$, and consider the word sequence~$w_i = r s^i t$ of~$L$ starting at
    $w_1 = w$. All words of~$w_i$ are in~$L$. Further, let $w_i'$ be the
    sequence where $w_1'$ is the least common ancestor (LCA) of~$w_1$ and~$w_2$,
    $w_2'$ is the LCA of~$w_2$ and~$w_3$, and so on: clearly $w_i' = r s^i$ for
    all~$i$ (but they are not words of~$L$), in particular each $w_i'$ is an
    ancestor of~$w_{i+1}'$, so the infinite sequence of the $w_i'$ defines an
    infinite branch in~$T$. And is an $L$-infinite branch, because each $w_i'$
    has a descendant.
    Thus indeed $w$ is at depth~$d$ from a node of this infinite branch.
  \end{proof}

  We can then prove Claim~\ref{clm:finbranchslender}:
  \begin{proof}[Proof of Claim~\ref{clm:finbranchslender}]
    If $L$ is finite then it is slender. Otherwise, letting $d$ be the number of
    states of a DFA recognizing~$L$, we know by Claim~\ref{clm:branch} that all words of the language
    must be at depth $\leq d$ from a node of some $L$-infinite branch, hence of
    one of the finite collection of $L$-infinite branches.
    This directly implies that $L$ is slender, because for
    any length $n \geq d$, considering $L \cap \Sigma^n$, the number of nodes of~$T$ at
    depth~$n$ that can be in~$L$ are the descendants of the nodes of the
    branches at depth between $n$ and $n-d$, i.e., some number that only depends on the language.
  \end{proof}

  Thanks to Claims~\ref{clm:orderfinbranch} and~\ref{clm:finbranchslender}, we
  know that, among the regular languages, only the slender languages can be $(t,d)$-partition-orderable language for the push-pop-right
  distance. Indeed, if a language is $(t,d)$-partition-orderable then it has
  finitely many $L$-infinite branches by the first claim, which implies that it
  is slender by the second claim. So in the sequel it suffices to focus on
  slender languages.

  We will refine a known characterization of slender languages. We know 
  from~\cite[Chapter XII, Theorem 4.23]{mpri} the following characterization:

  \begin{theorem}[\cite{mpri}]
    \label{thm:slendercarac}
    The following are equivalent on a regular language~$L$:
    \begin{itemize}
      \item $L$ is slender
      \item $L$ can be written as a union of regular expressions of the form $x y^* z$.
      \item The minimal DFA for~$L$ does not have a connected pair of simple
        cycles.
    \end{itemize}
  \end{theorem}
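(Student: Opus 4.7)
The plan is to establish the three claims of the theorem in sequence: the slenderness characterization, the PTIME computation of the optimal~$t$, and the explicit enumeration when~$t=1$.

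For the characterization, I would introduce the notion of an \emph{$L$-infinite branch}: in the infinite tree whose nodes are $\Sigma^*$ and where each $w$ has children $wa$ for $a \in \Sigma$, an $L$-infinite branch is an infinite branch whose nodes have descendants in~$L$ at infinitely many depths. For the forward direction (partition-orderable implies slender), I would first prove that any $(t,d)$-partition-orderable language for push-pop-right has at most~$t$ $L$-infinite branches: the crucial observation is that two descendants of depth-$m$ diverged nodes at depth $m+d'$ are at push-pop-right distance at least~$2d'$, so once we have enumerated all words up to a sufficient depth, no single thread can alternate between two branches, forcing at least as many threads as branches by pigeonhole. Next I would show that a regular language having only finitely many $L$-infinite branches is slender: by pumping a loop on some state, every word of length $\geq |A|$ in the language lies within depth~$|A|$ of one of the finitely many $L$-infinite branches, bounding $|L \cap \Sigma^n|$ uniformly. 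For the backward direction, I would invoke Theorem~\ref{thm:slendercarac} to write $\L(A)$ as a finite union $\bigcup_i x_i y_i^* z_i$; each such term is trivially $(1,2|z_i|+|y_i|)$-orderable for push-pop-right by enumerating $x_i z_i, x_i y_i z_i, x_i y_i^2 z_i, \ldots$, where consecutive words differ by popping~$z_i$ and pushing~$y_i z_i$. A union of~$t$ such terms gives a $(t,d)$-partition-ordering.

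For the PTIME computation of the optimal~$t$, I would show that this optimal~$t$ equals the number of $L$-infinite branches. The lower bound is exactly the claim from the forward direction. For the upper bound, the structural characterization of Theorem~\ref{thm:slendercarac} ensures that the simple cycles of the minimal DFA are pairwise disconnected, so each $L$-infinite branch corresponds to a triple consisting of a simple cycle together with a simple path from the initial state to the cycle and a simple path from the cycle to a final state. Each such triple yields a sublanguage of the form~$xy^*z$, and collectively (together with the finitely many accepting paths avoiding loops) they partition $\L(A)$ into that many orderable sublanguages, the finite ``extras'' being freely distributable among the threads. The number of such triples can be counted in PTIME on the minimal DFA using standard reachability and simple-path-counting arguments exploiting the fact that the loops are disjoint.

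For the $t=1$ case, the above count shows that $\L(A)$ consists of finitely many words (of length $<|A|$) together with a single infinite family $xy^*z$ obtained from the unique simple cycle reachable from~$q_0$. I would enumerate the finite extras first (in any order; the words have length at most~$|A|$, so consecutive ones can always be joined by a push-pop-right script of length at most~$2|A|$), then transition at distance~$\leq 2|A|$ to the word~$xz$, and then enter an ultimately periodic phase applying the fixed script ``pop~$z$; push~$yz$'' of length $2|z|+|y| \leq 2|A|$. The main obstacle is showing that the optimal~$t$ is exactly the number of $L$-infinite branches: although the lower bound is clean, carefully matching this count to a concrete partition requires arguing that when multiple simple paths to/from a cycle yield different triples, the corresponding $xy^*z$ sublanguages are genuinely distinct and cannot be merged into fewer orderable sequences, which relies on the disjointness of simple cycles guaranteed by Theorem~\ref{thm:slendercarac}.
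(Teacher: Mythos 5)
There is a fundamental mismatch between what you proved and what the statement asks for. The statement is Theorem~\ref{thm:slendercarac}, the classical characterization of slender regular languages: $L$ is slender iff it is a finite union of expressions $x y^* z$ iff the minimal DFA for~$L$ has no connected pair of simple cycles. (In the paper this result is simply cited from~\cite{mpri}; no proof is given there.) Your proposal instead proves the paper's Theorem~\ref{thm:slender} — the equivalence of slenderness with $(t,d)$-partition-orderability for the push-pop-right distance, the PTIME computation of the optimal~$t$, and the ultimately periodic enumeration when $t=1$. Worse, your argument explicitly \emph{invokes} Theorem~\ref{thm:slendercarac} as a black box, both in the backward direction (``invoke Theorem~\ref{thm:slendercarac} to write $\L(A)$ as a finite union $\bigcup_i x_i y_i^* z_i$'') and in the structural argument for counting threads (disjointness of simple cycles). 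So as a proof of the target statement it is circular: none of the three pairwise implications among ``slender,'' ``union of $x y^* z$,'' and ``no connected pair of simple cycles in the minimal DFA'' is actually established.

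A proof of the actual statement would have to argue roughly as follows: if the minimal (trimmed) DFA contains two simple cycles joined by a path, pumping both cycles independently produces, for suitable lengths~$n$, unboundedly many distinct words of length exactly~$n$ in~$L$, contradicting slenderness; if there is no connected pair of simple cycles, then every accepting run visits at most one cycle, and decomposing runs by the (unique) cycle they traverse together with the simple entry and exit paths writes $L$ as a finite union of languages $x y^* z$ (plus finitely many cycle-free words, which are themselves of this form with $y = \epsilon$ absorbed); and each language $x y^* z$ contains at most one word of any given length, so a finite union of them satisfies $|L \cap \Sigma^n| \leq C$ for a constant~$C$, i.e., $L$ is slender. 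Your $L$-infinite-branch machinery and the ladder of claims about partition-orderability are the content of the paper's proof of Theorem~\ref{thm:slender} (Claims~\ref{clm:orderfinbranch}, \ref{clm:finbranchslender}, \ref{clm:branch} and Proposition~\ref{prp:dfa}), not of the cited characterization, so they do not close this gap.
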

  This implies in particular that we can check in PTIME whether a language is
  slender given a DFA~$A$ for the language, by computing in PTIME the minimal
  DFA equivalent to~$A$, and then checking if there are two connected simple
  cycles. 

  For $t \in \NN$, a \emph{$t$-slender} language is one that can be written as a
  disjoint union of a finite language $L'$ and of $t$ languages $r_i s_i^* L_i$ with
  $r_i$ and $s_i$ words and $L_i$ a finite language, for $1 \leq i \leq t$, and
  all $r_i$ are pairwise incomparable (i.e., they are pairwise distinct and none
  is a strict prefix of another)
  and no word of~$r_i$ is a prefix of a word of~$L'$.
  Consider a trimmed DFA~$A$, its \emph{non-loopable prefixes}
  are the words $w$ such that reading $w$ in~$A$ brings us to a loopable state, and
  reading any strict prefix of~$w$ does not. We claim the following:

  \begin{proposition}
    \label{prp:dfa}
    The following are equivalent on a regular language $L$:
    \begin{itemize}
      \item $L$ is recognized by a DFA without a connected pair of simple
        cycles and with exactly $t$ non-loopable prefixes.
      \item $L$ can be written as a $t$-slender language.
    \end{itemize}
  \end{proposition}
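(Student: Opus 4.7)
The plan is to prove both directions of the equivalence. The forward direction extracts a $t$-slender decomposition from the DFA structure, while the backward direction constructs a DFA realising a given decomposition.

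For the forward direction, let $A$ be a trimmed DFA with no connected pair of simple cycles and $t$ non-loopable prefixes $r_1, \ldots, r_t$ ending respectively at loopable states $q_1, \ldots, q_t$. The hypothesis implies that each SCC of $A$ containing a cycle is itself a simple cycle (two simple cycles in one SCC would be mutually reachable, hence connected), and that such cyclic SCCs are pairwise not reachable from each other. For each $i$, let $s_i$ label the simple cycle at $q_i$ (from $q_i$ back to $q_i$) and let $L_i$ be the set of labels of paths from $q_i$ to a final state that do not revisit $q_i$; the set $L_i$ is finite because after leaving the SCC of $q_i$ the path can only visit non-loopable states (otherwise, one would reach a second cyclic SCC), so its length is bounded by $|A|$. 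Classifying each accepted word by the first loopable state visited in its accepting run (if any) yields $\L(A) = L' \sqcup \bigsqcup_i r_i s_i^* L_i$, where $L'$ (finite, by Claim~\ref{clm:langpart}) consists of the words with loop-free accepting runs. The $r_i$ are pairwise incomparable by definition of non-loopable prefix, and no $r_i$ is a prefix of a word in $L'$ because otherwise such a word's accepting run would pass through the loopable state $q_i$.

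For the backward direction, from a $t$-slender decomposition $L = L' \sqcup \bigsqcup_i r_i s_i^* L_i$, I construct a DFA $A$ in two stages. First, build the trie $T$ of $L' \cup \{r_1, \ldots, r_t\}$ rooted at $q_0$, marking as final the states reached by words of $L'$. The incomparability of the $r_i$, together with the hypothesis that no $r_i$ is a prefix of a word of~$L'$, imply that the endpoints $p_1, \ldots, p_t$ of the $r_i$-paths are distinct leaves of $T$. Second, each language $s_i^* L_i$ has at most $|L_i|$ words of each length and is therefore slender, so by Theorem~\ref{thm:slendercarac} its minimal DFA $M_i$ has no connected pair of simple cycles; attach $M_i$ to $T$ by identifying the initial state of $M_i$ with $p_i$, using fresh states elsewhere. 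One then verifies that $\L(A) = L$ (unambiguously, thanks to disjointness of the decomposition), that the only simple cycles of $A$ lie within individual $M_i$'s (hence are pairwise not connected, since the $M_i$'s use disjoint state sets apart from their respective roots $p_i$), and that the non-loopable prefixes of $A$ are exactly $r_1, \ldots, r_t$: all trie states are non-loopable (the trie is acyclic), while each $p_i$ is loopable, lying on the cycle labelled $s_i$ of~$M_i$.

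The main obstacle is the backward construction: we must ensure that the attached $M_i$'s produce a globally consistent DFA with the required global properties. The crucial observations are that each $p_i$ has no conflicting outgoing transitions before attachment (it is a leaf of the trie) and that the no-connected-pair-of-simple-cycles property is preserved globally, which reduces to the property within each $M_i$ (provided by Theorem~\ref{thm:slendercarac}) combined with the fact that cycles in different $M_i$'s live in disjoint forward-reachability cones, one per $p_i$.
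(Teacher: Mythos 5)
Your forward direction is correct and follows the same route as the paper: decompose $\L(A)$ by the first loopable state of the accepting run, observe that each cyclic SCC is a single simple cycle so the part of the run after the last visit to $q_i$ is a simple path, and check the incomparability conditions. The extra detail on why $L_i$ is finite is welcome.

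The backward direction, however, contains a step that fails. You attach the \emph{minimal} DFA $M_i$ of $s_i^* L_i$ at the leaf $p_i$ and assert that ``each $p_i$ is loopable, lying on the cycle labelled $s_i$ of $M_i$.'' This is false in general: take $s_i = ab$ and $L_i = \{ab\}$, so that $s_i^* L_i = (ab)^+$. Its minimal DFA is $q_0 \xrightarrow{a} q_1 \xrightarrow{b} q_2 \xrightarrow{a} q_1$ with $q_2$ final, and the initial state $q_0$ lies on no cycle. (In fact \emph{no} DFA for $(ab)^+$ can have a loopable initial state: a loop label $w$ on the initial state would force $\epsilon \in (ab)^+ \Leftrightarrow w \in (ab)^+$, yet $w \cdot ab \in (ab)^+$ forces $w \in (ab)^+$.) Consequently the non-loopable prefixes of your constructed automaton are not ``exactly $r_1,\ldots,r_t$''; in this example the relevant one is $r_i a$. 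What you actually need, and what your argument does not supply, is that the constructed DFA has exactly $t$ non-loopable prefixes, i.e., that each $M_i$ contributes exactly one. This is true but requires an argument: every non-loopable prefix of $M_i$ extends (by pumping in the trimmed DFA) to arbitrarily long words of $s_i^* L_i$, all of which have long powers of $s_i$ as prefixes, so every non-loopable prefix of $M_i$ is a prefix of the infinite word $s_i^{\omega}$; any two such prefixes are comparable, and non-loopable prefixes are pairwise incomparable, so there is exactly one (assuming, as one must throughout, that $s_i \neq \epsilon$ and $L_i \neq \emptyset$). With that lemma your construction goes through; without it the identification of the non-loopable prefixes is simply wrong. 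For comparison, the paper sidesteps the minimal DFA and instead posits bespoke DFAs $A_i$ for $s_i^* L_i$ ``with exactly one simple cycle on which the initial state is located''; that phrasing is itself too strong for examples like $(ab)^+$, so some version of the uniqueness argument above is needed either way.
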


  This proposition implies in particular that a slender language is necessarily
  $t$-slender for some~$t$, by considering its minimal DFA and counting the
  minimal infinitely continuable prefixes. Note that, given a DFA, we can
  compute in PTIME the equivalent minimal DFA and count in PTIME the
  non-loopable prefixes. We prove the proposition:

  \begin{proof}[Proof of Proposition~\ref{prp:dfa}]
    If $L$ is recognized by an automaton of the prescribed form, we can write
    $L$ as a disjoint union of the non-loopable words (a finite language) and the
    languages $r_i L_{q_i}$ where the $r_i$ are the non-loopable prefixes and
    the $L_{q_i}$ is the language accepted by starting at the loopable
    state~$q_i$ at which we get after reading the non-loopable prefix. (Some of
    the $q_i$ may be identical.)
    Note that by construction the $r_i$ are pairwise incomparable and none is a
    prefix of a non-loopable word.
    Now, each $L_{q_i}$ can be decomposed between
    the words accepted without going to~$q_i$ again, and those where we do. As
    $L$ has no connected pair of simple cycles, if we do not go to~$q_i$ again,
    then we cannot complete the simple cycle on~$q_i$ and we cannot go to
    another cycle, so the possible words leading to a finite state form a finite
    language $L_i$. If we do, then the word starts with the (unique) label of
    the (unique) simple cycle starting at~$q_i$, i.e., $s_i$, and then we have a
    word of~$L_{q_i}$. Thus, we can write $L$ as a disjoint union of the
    non-loopable words and of the $r_i (s_i)^* L_i$ with $L_i$ finite.

    Conversely, if $L$ is written as a $t$-slender language then we obtain the
    automaton in the obvious way: start with an acyclic DFA $A$ for the finite
    language, construct DFAs $A_i$ for each $s_i^* L_i$ which have exactly one simple
    cycle on which the initial state is located, then for each $r_i$ extend~$A$
    with a path labeled by~$r_i$ going from the initial state to the initial
    state of~$A_i$: some states of the path may already exist (because of the
    words of~$L'$ or of the other words of~$r_i$), but the condition on the $r_i$
    and on~$L'$ guarantee that we always create at least one new transition.
    The resulting automaton accepts by construction the words of~$L'$ and the
    words of the $r_i s_i^* L_i$, and for any accepting path in the automaton
    either it ends at a state of~$A$ and the accepted word is a word of~$L'$, or
    it goes into an $A_i$ and the accepted word is a word of some~$r_i s_i^*
    L_i$
  \end{proof}

  We now claim that the $t$-slender languages are precisely those that are
  $(t,d)$-partition-orderable for some~$d$:

  \begin{proposition}
    A $t$-slender language is
    $(t,d)$-partition-orderable for some~$d$. Conversely, if a regular language
    is $(t,d)$-partition-orderable then it is $t$-slender.
  \end{proposition}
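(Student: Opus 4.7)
The plan is to establish both directions by relating partition classes in a $(t,d)$-partition-ordering to $L$-infinite branches of the tree~$T$, and thus to the canonical terms of a $t$-slender decomposition.

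For the forward direction, given a $t$-slender decomposition $L = L' \sqcup \bigsqcup_{i=1}^t r_i s_i^* L_i$, I would take the partition $P_1 \colonequals L' \cup r_1 s_1^* L_1$ and $P_i \colonequals r_i s_i^* L_i$ for $2 \leq i \leq t$. Each $P_i$ with $i \geq 2$ is ordered by listing $r_i s_i^j l$ in lexicographic order on~$(j,l)$, where $l$ ranges over a fixed ordering of the finite set~$L_i$: transitioning between two words of the same exponent $j$ only requires popping the right-hand $l$ and pushing a new $l'$ (cost bounded by $2 \max_{l \in L_i} |l|$), while transitioning from the last word of exponent $j$ to the first of exponent $j+1$ additionally requires pushing $s_i$ on the right (total cost at most $2 \max_{l \in L_i} |l| + |s_i|$). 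The class $P_1$ is handled similarly after first enumerating the finite set $L'$ naively and making a bounded jump to the first word of~$r_1 L_1$ (cost at most $\max_{w \in L'} |w| + |r_1| + \max_{l \in L_1} |l|$). Taking $d$ to be the maximum of all these distance bounds yields the result.

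For the backward direction, suppose $L$ is $(t,d)$-partition-orderable and regular. By Claim~\ref{clm:orderfinbranch}, the tree $T$ has at most $t$ $L$-infinite branches, and by Claim~\ref{clm:finbranchslender} $L$ is slender, so by Proposition~\ref{prp:dfa} it admits a $t'$-slender decomposition where $t'$ is the number of non-loopable prefixes of its minimal DFA~$A$. I claim $t' \leq t$ via a bijection between non-loopable prefixes of~$A$ and $L$-infinite branches of~$T$. Given a non-loopable prefix $r_i$ ending at a loopable state $q_i$: since Theorem~\ref{thm:slendercarac} precludes a connected pair of simple cycles in~$A$, the state $q_i$ lies on a unique simple cycle with label $s_i$, so $r_i s_i^\omega$ labels an $L$-infinite branch. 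Distinct prefixes $r_i$ yield distinct branches because the $r_i$ are pairwise incomparable.

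The main obstacle is proving the converse part of the bijection: every $L$-infinite branch of~$T$ has the form $r_i s_i^\omega$ for some~$i$. Tracing such a branch $w = a_1 a_2 \cdots$ through~$A$, finiteness of the state set forces the path to eventually reach a loopable state, and the minimal such prefix is a non-loopable prefix $r_i$ leading to some $q_i$. Once inside the SCC of~$q_i$, the absence of a connected pair of simple cycles forbids leaving this SCC while still admitting infinitely many descendants in~$L$: any exit would lead either into the acyclic non-loopable region (yielding a dead end after finitely many steps, contradicting the $L$-infinite property) or to a distinct cycle connected to $s_i$ (contradicting Theorem~\ref{thm:slendercarac}). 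Thus the branch must stay on the unique loop forever, giving $w = r_i s_i^\omega$. With the bijection established, $t' \leq t$, and we conclude that $L$ admits a $t'$-slender decomposition for some $t' \leq t$, yielding the desired conclusion.
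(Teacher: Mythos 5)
Your proof is correct and follows essentially the same route as the paper's: the forward direction by directly ordering each $r_i s_i^* L_i$ (absorbing the finite part $L'$ into one class), and the converse via Claims~\ref{clm:orderfinbranch} and~\ref{clm:finbranchslender} together with Proposition~\ref{prp:dfa}, using that the $t'$ pairwise-incomparable non-loopable prefixes yield $t'$ distinct $L$-infinite branches. The only differences are that you additionally prove the converse direction of your bijection, which is not needed (the injection from non-loopable prefixes to $L$-infinite branches already gives $t' \leq t$), and that, exactly like the paper, you pass silently from ``$t'$-slender for some $t' \leq t$'' to ``$t$-slender.''
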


  \begin{proof}
    For the first claim, it suffices to show that a $1$-slender language is
    $d$-orderable for the push-pop-right distance for some~$d$. This is easy:
    first enumerate the words of~$L'$ in some naive way, and then enumerate the
    words of $rL''$, then $rsL''$, etc. The distance within each sequence is
    bounded because $L'$ and $L''$ are finite, and the distance when going from
    the last word of $rs^iL''$ to the first word of $r s^{i+1} L''$ is bounded
    too.

    For the second claim, we know by Claim~\ref{clm:orderfinbranch} that a
    regular language $L$ that is $(t,d)$-partition-orderable for some~$d$ must have
    at most $t$ $L$-infinite branches. Now, we know by
    Claim~\ref{clm:finbranchslender} that $L$ must then be slender.
    Assuming by way of contradiction that $L$ is not $t$-slender, by
    Theorem~\ref{thm:slendercarac} together with Proposition~\ref{prp:dfa} we
    know $L$ must be $t'$-slender for some $t'$, implying that $L$ is
    $t'$-slender for some $t' > t$. Now, being $t'$-slender implies that $L$ has
    $t'$ infinite $L$-branches, namely, those starting at the $r_i$ which are
    pairwise incomparable. This contradicts our assumption that $L$ has at most
    $t$ $L$-infinite branches, and concludes the proof.
  \end{proof}

  We have thus shown the first part of Theorem~\ref{thm:slender}: if the
  language $\L(A)$ is $(t,d)$-partition-orderable for the push-pop-right
  distance for some $t, d \in \NN$ then it is $t$-slender by
  the proposition, and conversely if it is slender then it is $t$-slender for
  some $t$ by Theorem~\ref{thm:slendercarac} and Proposition~\ref{prp:dfa} and
  is then $(t,d)$-partition-orderable for some~$d \in \NN$ by the proposition.
  Further, if $L$ is slender, using the characterization of
  Proposition~\ref{prp:dfa} we can compute in PTIME the smallest $t$ such that $L$ is
  $t$-slender, and then we know that $L$ is $(t,d)$-partition-orderable for some
  $d$ but not $(t-1,d)$-partition-orderable, thanks to the proposition.

  We last show that, if $t = 1$, we can compute the description of an ultimately
  periodic sequence of edit scripts that enumerates $\L(A)$ respecting a linear
  bound on the push-pop-right distance and in linear delay:
  \begin{proposition}
    \label{prp:slenderub}
    There is an algorithm which,
    given a DFA~$A$ with $k$ states representing a $1$-slender language,
    computes a sequence of edit scripts that enumerates $\L(A)$ and is
    ultimately periodic, with push-pop-right distance bound~$2k$ and delay
    linear in~$|A|$.
  \end{proposition}

  Note that this proposition admits a converse: the ultimately periodic
  sequences of edit scripts can only achieve slender languages (see
  Proposition~\ref{prp:slenderlb}).

  We now prove the proposition:

  \begin{proof}[Proof of Proposition~\ref{prp:slenderub}]

    Recall that we are given the input DFA in the way described in
    Appendix~\ref{apx:machine}.  We first locate in the input DFA the unique
    simple cycle and path from the initial state to that simple cycle: this can
    be performed in linear time by running a depth-first search (DFS) on the
    automaton from the initial state, marking states when the DFS starts
    visiting them and is done visiting them. We stop at the moment where the
    exploration reaches a vertex which is currently being visited: then the
    recursion stack gives us (from bottom to top) the unique path from the
    initial state to a state of the loop, and then the loop itself.

    We now start the computation of the ultimately periodic sequence by
    producing scripts to enumerate all words of the finite language
    $L'$ of the non-loopable words. For this, we perform a DFS from the initial
    state where we do not mark
    vertices as visited (so as to enumerate all paths).
    We do so on a slightly modified version of the automaton where we remove the
    states of the loop (as we must produce only non-loopable words), and where
    we trim the automaton to remove all states which no longer have a path to an
    accepting state.
    Each time we reach a final state in the DFS, we produce an edit
    sequence achieving the word described by the recursion stack from the
    previously produced word; this is easy to do by stacking push and pop
    operations which correspond to what the DFS does. Remember that the
    automaton is trimmed, so we know that 
    we achieve a new non-loopable word every $k$ steps at most, for~$k$ the
    number of automaton states.

    Now, we continue the enumeration sequence with an edit script that goes from
    the last produced word to the word corresponding to the unique path from an
    initial state to the first state $q$ of the loop. We are now ready to start the
    computation of the periodic part of the ultimately periodic sequence.

    To do this, we first enumerate all words that can be accepted from~$q$
    without going back to~$q$. This can be done by a DFS in the same way that we
    did previously, with the same distance and complexity bounds. Next, from the
    last word enumerated in this fashion, we pop until we arrive to a state of
    the loop, then we complete the loop to reach the state~$q$. This completes
    the description of the periodic part.

    Note that, in all cases, the distance between two words, hence the delay in
    producing the corresponding elements of the sequence, is at most~$2k$. This
    is because, when going from one word to the next, the algorithm is always
    popping a simple path of states, and pushing a simple path, so each state
    can at most occur twice. The only exception is the previous paragraph, where
    we pop until we arrive to a state of the loop, then push a completion of the
    loop: the next produced word may then be pushing states that already occur
    in the completion of the loop, but these states occur at most two times in
    total and did not occur in the pop, so the overall bound still applies.
  \end{proof}
\end{toappendix}

Of course, our results for the push-pop-right distance extend to the
push-pop-left distance up to reversing the language, except for the complexity
results because the reversal of the input DFA is generally no longer
deterministic.

\section{Conclusion and future work}
\label{sec:conc}
We have introduced the problem of ordering 
languages as sequences while bounding the maximal distance between successive
words, and of enumerating these sequences with small edit scripts to achieve bounded delay.
Our main result is a PTIME characterization of the regular languages that can be
ordered in this sense for the push-pop distance (or equivalently the Levenshtein
distance), for any specific number of sequences; and a bounded-delay
enumeration algorithm for the orderable regular languages. Our characterization
uses the number of classes of interchangeable states of a DFA~$A$ for the
language, which, as our results imply, is an intrinsic parameter of~$\L(A)$,
shared by all (trimmed) DFAs recognizing the same language. We do not know if
this parameter can be of independent interest.

Our work opens several questions for future research. The questions of
orderability and enumerability can be studied for more general languages (e.g.,
context-free languages), other distances (in particular substitutions plus
push-right operations, corresponding to the Hamming distance on a right-infinite
tape), or other enumeration models (e.g., reusing factors of previous words). We
also do not know the computational complexity, e.g., of optimizing the distance
while allowing any finite number of threads, in particular for slender
languages. Another complexity question is to understand if the 
bounded delay of our enumeration algorithm could be made polynomial in the input
DFA rather than exponential, or what delay can be achieved if the input
automaton is nondeterministic.

\bibliography{main.bib}
\end{document}